\newcommand{\R}{\mathbb R}
\newcommand{\calL}{{\cal L}}
\newcommand{\calR}{{\cal R}}
\newcommand{\calS}{{\cal S}}
\newtheorem{remark}{Remark}[section]
\DeclareMathOperator{\divergence}{div}
\newcommand{\dm}{\ \mathrm{d}}
\newcommand{\dmnosp}{\mathrm{d}}
\newcommand{\bfd}{{\mathbold d}}
\newcommand{\bff}{{\mathbold f}}
\newcommand{\bfx}{{\mathbold x}}
\newcommand{\bfQ}{{\mathbold Q}}
\newcommand{\bcolon}{\boldsymbol{:}}
\newcommand{\bbR}{\mathbb{R}}
\newcommand{\bbZ}{\mathbb{Z}}
\newcommand{\dyad}{\mathbf{\otimes}}
\newcommand{\bzero}{\mathbf{0}}
\newcommand{\abs}[1]{\left\lvert #1 \right\rvert}
\newcommand{\norm}[1]{\left\lVert #1 \right\rVert}
\newcommand{\Llambda}{{\cal{L}}_{\lambda}}
\newcommand{\Ltwo}[1]{%
\ifthenelse{\equal{#1}{}}{L^2}{L^2(#1)}%
}
\newcommand{\Ltwoz}[1]{%
\ifthenelse{\equal{#1}{}}{L^2_0}{L^2_0(#1)}%
}
\newcommand{\Cone}[1]{%
\ifthenelse{\equal{#1}{}}{C^{1}}{C^{1}(#1)}%
}
\newcommand{\Conez}[1]{%
\ifthenelse{\equal{#1}{}}{C^{1}_{0}}{C^{1}_{0}(#1)}%
}
\newcommand{\Ctwo}[1]{%
\ifthenelse{\equal{#1}{}}{C^{2}}{C^2(#1)}%
}
\newcommand{\Ctwoz}[1]{%
\ifthenelse{\equal{#1}{}}{C^{2}_{0}}{C^{2}_{0}(#1)}%
}
\newcommand{\Cholder}[1]{%
\ifthenelse{\equal{#1}{}}{C^{0,\gamma}}{C^{0,\gamma}(#1)}%
}
\newcommand{\Cholderz}[1]{%
\ifthenelse{\equal{#1}{}}{C^{0,\gamma}_{0}}{C^{0,\gamma}_{0}(#1)}%
}
\newcommand{\LtwoOneD}{\Ltwo{\bbR,\bbR^3}}
\newcommand{\CtestOneD}{C^{\infty}_{0}(\bbR,\bbR^3)}
\newcommand{\bolds}[1]{\boldsymbol{#1}}
\newcommand{\ba}{\bolds{a}}
\newcommand{\bb}{\bolds{b}}
\newcommand{\bd}{\bolds{d}}
\newcommand{\be}{\bolds{e}}
\newcommand{\bn}{\bolds{n}}
\newcommand{\br}{\bolds{r}}
\newcommand{\bs}{\bolds{s}}
\newcommand{\bt}{\bolds{t}}
\newcommand{\bu}{\bolds{u}}
\newcommand{\bv}{\bolds{v}}
\newcommand{\bx}{\bolds{x}}
\newcommand{\by}{\bolds{y}}
\newcommand{\bz}{\bolds{z}}
\newcommand{\bA}{\bolds{A}}
\newcommand{\bB}{\bolds{B}}
\newcommand{\bH}{\bolds{H}}
\newcommand{\bI}{\bolds{I}}
\newcommand{\bK}{\bolds{K}}
\newcommand{\bP}{\bolds{P}}
\newcommand{\bQ}{\bolds{Q}}
\newcommand{\bdlambda}{\bd_{\lambda}}
\newcommand{\bxbar}{\bar{\bx}}
\newcommand{\thetab}{\bar{\theta}}
\newcommand{\sref}[2]{\hyperref[#2]{#1 \ref*{#2}}}
\newcommand{\Ltwodot}[2]{\left({#1}, {#2}\right)}
\newcommand{\levert}{\left\vert}
\newcommand{\rivert}{\right\vert}
\newcommand{\TheTitle}{Discrete-to-Continuum Limits of Long-Range Electrical Interactions in Nanostructures}
\title{{\TheTitle}\thanks{\today}}
\author{
  Prashant K. Jha\thanks{Oden Institute for Computational Engineering and Sciences, The University of Texas at Austin, Austin, USA (\email{pjha@utexas.edu}).}
  \and
  Timothy Breitzman\thanks{Air Force Research Laboratory, Wright-Patterson Air Force Base, USA (\email{timothy.breitzman.1@us.af.mil}).}  
  \and
  Kaushik Dayal\thanks{Department of Civil and Environmental Engineering, Carnegie Mellon University, Pittsburgh, USA; Center for Nonlinear Analysis, Department of Mathematical Sciences, Carnegie Mellon University, Pittsburgh, USA; Department of Mechanical Engineering, Carnegie Mellon University, Pittsburgh, USA (\email{Kaushik.Dayal@cmu.edu}).}  
}
\begin{document}
	

\maketitle


 \begin{center}
    \textbf{To appear in Archive for Rational Mechanics and Analysis (DOI: \url{https://doi.org/10.1007/s00205-023-01869-6})}
\end{center}

\begin{abstract}
	We consider electrostatic interactions in two classes of nanostructures embedded in a three dimensional space: (1) helical nanotubes, and (2) thin films with uniform bending (i.e., constant mean curvature).
	Starting from the atomic scale with a discrete distribution of dipoles, we obtain the continuum limit of the electrostatic energy; the continuum energy depends on the geometric parameters that define the nanostructure, such as the pitch and twist of the helical nanotubes and the curvature of the thin film. 
	We find that the limiting energy is local in nature. 
	This can be rationalized by noticing that the decay of the dipole kernel is sufficiently fast when the lattice sums run over one and two dimensions, and is also consistent with prior work on dimension reduction of continuum micromagnetic bodies to the thin film limit.
	    However, an interesting contrast between the discrete-to-continuum approach and the continuum dimension reduction approaches is that the limit energy in the latter depends only on the normal component of the dipole field, whereas in the discrete-to-continuum approach, both tangential and normal components of the dipole field contribute to the limit energy.
\end{abstract}
	
\maketitle


\section{Introduction}\label{s:introduction}

Electrical and magnetic interactions are long-range; that is, a charge or dipole interacts with all the other charges and dipoles in the system, and the interactions cannot be truncated because the decay with distance is slow \cite{toupin1956elastic, brown1963micromagnetics, james1994internal, marshall2014atomistic}.
We consider such electrostatic interactions in nanostructures, specifically helical geometries and thin films with uniform bending, in a three-dimensional ambient space.
These geometries are ubiquitous in nanotechnology; while not periodic, their structure has significant symmetry that we exploit in this paper, using the framework of Objective Structures \cite{james2006objective}.
We exploit this symmetry to adapt periodic calculations of the continuum energy to the setting of these nanostructures.
Specifically, starting from a discrete atomic-scale description of the electrostatic energy, we find the limit energy when the discrete lengthscale of the nanostructures goes to zero.

    For simplicity and clarity, we assume in this paper that the charge density can be approximated as composed of discrete dipoles.
    The electrostatic energy of such a system is the sum of all pairwise dipole-dipole interactions.
    Unlike short-range bonded atomic interactions that typically scale as $r^{-6}$ with distance $r$, the dipole-dipole interactions decay slowly with distance as $r^{-3}$.
    Consequently, we cannot simply truncate after a few neighbors, and naive truncation can lead to qualitatively incorrect results in numerical calculations \cite{marshall2014atomistic,grasinger2020architected,grasinger2020statistical}.
    While we use the setting of discrete electrical dipoles, the setting of magnetic dipoles has an identical mathematical structure and physical interpretation \cite{brown1963micromagnetics,james1994internal,muller2002discrete,schlomerkemper2009discrete}, and we borrow key ideas from that literature.
    A key physical distinction between the electrical and magnetic situations is the possibility of electrical monopoles that does not exist for magnetic case, but we examine this elsewhere \cite{shoham-in-prep} and assume here that there are no free charges. 
    Further, we highlight that the assumption of discrete dipoles is not very restrictive.
    Following the approach of \cite{james1994internal}, we use a background field in our calculations, and this field enables a straightforward generalization to the more realistic setting of a general charge density field; such an approach was used by \cite{xiao2005influence} to study charge density fields in periodic crystals.

We turn to the question of dealing with the non-periodic geometry of the nanostructures.
While neither helices nor thin films with curvature are periodic, the framework of Objective Structures (OS) introduced in \cite{james2006objective} provides a powerful approach to deal with such geometries.
In brief, OS provides a group-theoretic description of these nanostructures that enables a parallel to be made with periodic lattices.
This parallel to periodic lattices has enabled the adaptation of various methods developed for lattices to the setting of helices and thin-films, e.g. \cite{dumitricua2007objective,hakobyan2012objective,aghaei2013anomalous,aghaei2013symmetry}.
Our strategy in this work is to use the OS framework to adapt continuum limit calculations from the setting of periodic lattices to the setting of nanostructures.

Our work is focused on obtaining discrete-to-continuum limits of the energy.
This multiscale approach has proven very powerful in enabling the systematic reduction of the very large number of degrees of freedom associated with the discrete problem to a much more tractable continuum problem.
This overall idea has played an important role in developing models, often in conjunction with variational tools such as $\Gamma$-convergence, both for bulk crystals \cite{blanc2002molecular, blanc2007atomistic, schmidt2009derivation, alicandro2008variational, alicandro2011integral, bach2020discrete} as well as for thin films and rods \cite{schmidt2006derivation,alicandro2018derivation}.
Further, these ideas have played a role in the development of numerical multiscale atomistic methods such as the quasicontinuum method \cite{tadmor1996quasicontinuum, miller2002quasicontinuum, tadmor2011modeling,knap2001analysis,li2012positive,dobson2010sharp}. 
    There is also a significant literature on simultaneous dimension reduction and discrete-to-continuum limits, e.g. \cite{alicandro2008continuum,alicandro2018derivation,friesecke2000scheme,lazzaroni2015discrete,lazzaroni2017rigidity,schmidt2008passage}, but these consider interactions that decay much faster than electrostatic interactions.

All the work in the previous paragraph is restricted to the setting of short-range bonded atomic interactions.
In the context of electrical and magnetic interactions, the calculation of continuum limit energies based on discrete-to-continuum approaches have been examined both formally and rigorously using discrete dipoles on a periodic 3-d lattice \cite{toupin1956elastic, brown1963micromagnetics, james1994internal,muller2002discrete, schlomerkemper2009discrete,schlomerkemper2005mathematical}. 
Further, this has been examined formally for periodic charge distributions, also in 3-d, \cite{marshall2014atomistic}. 
All of these works show that the continuum limit energy consists of a local part and a nonlocal part.
In contrast, in this work, we consider topologically low-dimensional structures embedded in a 3-d space: a 1-d helical nanotube and a 2-d thin film with constant bending curvature.
In the limit that the discrete lengthscale characterizing the nanotube and thin film goes to $0$, we find that the limit continuum energy is entirely local. 
    In \cite{cicalese2009discrete}, they obtain the discrete-to-continuum limit of an energy that includes dipole-dipole interactions in 2-d materials using $\Gamma-$convergence, but they use a dipole kernel that goes as $1/r^2$. In contrast, in this work, while the structures are low dimensional, they are embedded in 3-d, and, therefore, the dipole field kernel has a $1/r^3$ singularity.

The absence of nonlocality in the limit can be rationalized by observing that the decay of the interactions as $r^{-3}$ is sufficiently fast to enable us to obtain a local limit if summed over a (topologically) 1-d or 2-d object.
We highlight a complementary body of work that applies dimension reduction techniques to go from a 3-d continuum to a 2-d or 1-d continuum.
In the context of electrical and magnetic interactions, \cite{gioia1997micromagnetics} and subsequent works \cite{carbou2001thin, kohn2005another, kruvzik2015quasistatic} (for thin films) and \cite{gaudiello2015reduced, chacouche2015ferromagnetic} (for thin wires)
find, as we do, that the limit energy is not nonlocal.

The techniques employed in this work are broadly based on the rigorous results provided in \cite{james1994internal} on the continuum limit of magnetic dipole interactions on a 3-d lattice, with appropriate generalizations and modifications for our setting.
The overall strategy of \cite{james1994internal} is as follows.
First, the operator that associates the discrete dipole lattice to the generated electric field is shown to be bounded for smooth test functions; next, the pointwise limit of the action of the operator on smooth test functions is obtained; and, finally, using the boundedness of the operator and the density of the test functions, the limit of the energy density is obtained.
For the helical and thin film nanostructures considered in this work, we adapt this strategy to account for the fact that the lattice sites and dipoles are not related by a  translation transformation, but by a more general isometric transformation.

The key results of this work are as follows.
First, the limit energy is rigorously derived and found to be local.
Second, the limiting energy density depends on the macroscopic geometric parameters, such as the pitch, radius and so on for the helical nanotube, and on the stretch and curvature for the thin film.
These parameters can be related to macroscopic measures of deformation, and link the macroscopic deformation to the small-scale structure. 
Third, while the limiting energy is local, there are energetic contributions from both the normal and the tangential components of the dipole field.
    This is in contrast to the result obtained by dimension reduction from a 3-d continuum:
    in these approaches, there are no energetic contributions from the tangential component of the dipole field.
    Those approaches have 3-d continuum theory as their starting point, and are valid for situations in which the limiting thin object has all dimensions much larger than the atomic lengthscale.
    In contrast, the discrete-to-continuum approach used here is appropriate for nanostructures in which the thin dimensions are comparable to the atomic lengthscale.

\paragraph{Organization}

In \autoref{s:motivation}, we discuss prior work, primarily on dimension reduction from a 3-d continuum to a 2-d continuum, and highlight the local nature of the limiting energy.  We then discuss heuristically the scaling of electrostatic interactions that lead to this locality generically for topologically low-dimensional nanostructures.
In \autoref{s:limit}, we present the main results for helical nanotubes and thin films with constant bending curvature. 
We prove various claims in \autoref{s:proof}. 
In \autoref{s:conclusion}, we summarize the results.

\paragraph*{Notation}

We denote the real line and set of integers by $\bbR$ and $\bbZ$, respectively; $\bbR^d, \bbZ^d$ denote these in dimension $d = 1,2,3$. 
For any $c, c_1, c_2 \in \bbR$, $c\bbZ^d$ denotes the set $\{cz ; z\in \bbZ^d \}$ and $c_1\bbZ \times c_2 \bbZ$ denotes the set $\{(c_1 z_1, c_2 z_2); z_1, z_2 \in \bbZ\}$.
The symbols $\calL$ and $U$ denote the set of lattice sites and the lattice unit cell, respectively; $\calL_\lambda$ and $U_\lambda$ denote these in the lattice scaled by $\lambda$, with $\calL_1, U_1$ denoting $\calL_\lambda, U_\lambda$ for $\lambda = 1$.	%
We use $\bx = (x_1, x_2, x_3) \in \bbR^3$ to denote the point in space with components $x_i$ in the orthonormal basis $\{\be_1, \be_2, \be_3\}$ for $\bbR^3$. 
We follow the standard notation wherein scalars are denoted by lowercase letters, vectors by bold lowercase letters, and second order tensors by bold uppercase letters. 
$|\bx| = \sqrt{ \sum_{i=1}^n x_i^2 }$ denotes the Euclidean norm of the vector $\bx\in \bbR^n$;  $|\bA| = \sqrt{\bA\bcolon \bA}$ denotes the norm of the tensor $\bA$; and $\bA\bcolon \bB = A_{ij} B_{ij}$ denotes the inner product of the tensors $\bA$ and $\bB$. For any vector $\ba\in \bbR^n$ and tensor $\bA$, we have $|\bA \ba| \leq |\bA| |\ba|$.
We use $|\varOmega|$ to denote the Lebesgue measure of the set $\varOmega \subset \bbR^n$. 
For a set $A \subset \bbR^d$, $\chi_A = \chi_A(\bx)$ denotes the indicator function. 
We use $L^2(A, B)$ to denote the space of Lebesgue square-integrable functions $u\colon A \subset\bbR^n \to B\subset \bbR^m$; $(u, v)_{L^2(A,B)}$ for the inner product of functions $u, v \in L^2(A,B)$; and $||u||_{L^2(A,B)}$ the $L^2$ norm of $u\in L^2(A,B)$. 
When there is no ambiguity, we will suppress $L^2(A,B)$ and write $(u,v)$ and $||u||$. 
$C^\infty_0(\bbR^n, \bbR^m)$ denotes the space of infinitely differentiable test functions $u\colon \bbR^n \to \bbR^m$ with compact support in $\bbR^n$. 
For Hilbert spaces $V$ and $W$, $\calL(V, W)$ is the space of bounded linear maps $T\colon V \to W$. The norm of the map $T\in \calL(V,W)$ is denoted by $||T||_{\calL(V,W)}$, and is given by the expression:
\begin{equation}\label{eq:normMap}
	||T||_{\calL(V,W)}  = \sup_{||f||_V \neq 0} \frac{||T f||_{W}}{||f||_{V}}.
\end{equation}
We use $u_\lambda \xrightarrow[\lambda \to 0]{} u$ to denote the strong convergence of $u_\lambda \in V$ to $u\in V$ as $\lambda \to 0$, i.e., $||u_\lambda - u||_V \to 0$ as $\lambda \to 0$.

\section{Energy Scalings and Prior Results on Dimension Reduction}
\label{s:motivation}

We briefly revisit the results of \cite{gioia1997micromagnetics} and \cite{chacouche2015ferromagnetic}.
Respectively, they performed dimension reduction from the 3-d continuum to the 2-d thin film and 1-d thin wire to find the limiting magnetostatic energy.

\begin{figure}[htb!]
	\centering
	\includegraphics[scale=0.2]{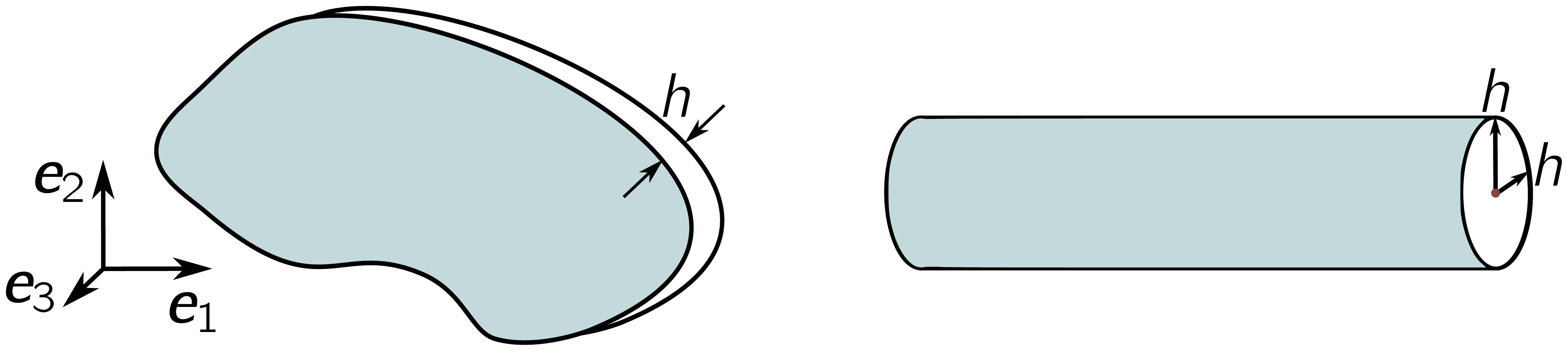}
	\caption{The geometry of the thin film (left) and the thin wire (right).}
	\label{fig:thinstruct}
\end{figure}

Consider a material domain $\varOmega_h = S \times [0,h]$, where $S \subset \bbR^2$ is a 2-d domain in the plane spanned by $(\be_1, \be_2)$, and $h>0$ is the material thickness in the normal direction $\be_3$ (Fig. \ref{fig:thinstruct}).
Suppose $\bd\colon \varOmega_h \to \bbR^3$, with $\bd = \bzero$ on $\bbR^3 \backslash \varOmega_h$, is the dipole field in the material. 
The electrostatic energy density is given by
\begin{equation*}
	e_h(\bd) = \dfrac{1}{|\varOmega_h|} \int_{\varOmega_h} \frac{1}{2} \nabla \phi(\bx) \cdot \bd \dm\bx,
\end{equation*}
where $|\varOmega_h|$ is the volume of $\varOmega_h$, and $\phi$ is the electric potential that satisfies the electrostatic equation 
\begin{equation*}
	\divergence (-\nabla \phi + \bd) = 0 \quad  \text{on } \bbR^3
\end{equation*}
together with the constraint $|\bd| = d$ and the decay property
\begin{equation*}
	 |\nabla\phi(\bfx)| \to {\bf 0} \text{ as } |\bfx|\to\infty .
\end{equation*}
Let $\varOmega_1 = S \times [0,1]$, and $\by(\bx) = (x_1,x_2,x_3/h) \in \varOmega_1$ for $\bx \in \varOmega_h$ be the map from $\varOmega_h$ to $\varOmega_1$. 
For fixed $h>0$, consider the dipole field $\bd_h\colon \varOmega_h \to \bbR^3$ and $\tilde{\bd}_h\colon \varOmega_1 \to \bbR^3$ such that
\begin{equation*}
	\tilde{\bd}_h(\by(\bx))  = \bd_h(\bx), \qquad \forall \bx \in \varOmega_h.
\end{equation*}
Let $\bd_h$ be the sequence of dipole field for $h>0$, and $\tilde{\bd}_h$ is defined as above. 
Assume that dipole field $\tilde{\bd}_h$ is such that, first, $\tilde{\bd}_h = 0$ on $\bbR^3 \backslash \varOmega_1$, and, second, it converges to $\tilde{\bd}_0$ in $L^2(\bbR^3)$; then the limit of the energy density $e_h = e_h(\bd_h)$ is \cite{gioia1997micromagnetics}:
\begin{equation*}
	e_h(\bd_h) \to e_0(\tilde{\bd}_0) 
	= 
	\dfrac{1}{2|\varOmega_1|} \int_{\varOmega_1} |\tilde{d}_{0_3}|^2 \dm\bx.
\end{equation*}
That is, the limiting energy $e_0$ is local, and only the normal component of the dipole moment appears in the expression.

Next, consider a thin straight wire with axis along $\be_1$, denoted by $\varOmega_h = (-1, 1) \times B_2(\bzero, h)$, where $B_2(\bzero, h)$ is the ball of radius $h$ centered at $\bzero$ in the plane spanned by $(\be_2, \be_3)$ (Fig. \ref{fig:thinstruct}). 
    Analogous to the thin film, let 
    $\bd_h: \varOmega_h \to \bbR^3$, $\bd_h = \bzero$ on $\bbR^3/\varOmega_h$, be the dipole field in the material;
    $\varOmega_1 = (-1, 1) \times B_2(\bzero, 1)$ be the rescaled domain of $\varOmega_h$; 
    $\by(\bx) = (x_1, x_2/h, x_3/h) \in \varOmega_1$ for $\bx \in \varOmega_h$ be the map from $\varOmega_h$ to $\varOmega_1$; 
    and $\tilde{\bd}_h : \varOmega_1 \to \bbR^3$ be the rescaled dipole field defined as
    \begin{equation}
        \tilde{\bd}_h(\by(\bx)) = \bd_h(\bx), \qquad  \forall \bx \in \varOmega_h .
    \end{equation}
The limiting energy density in the case of the thin wire is \cite{chacouche2015ferromagnetic}:
\begin{equation*}
	 \frac{1}{2|\varOmega_1|}\int_{\varOmega_1} \left(|\tilde{d}_{0_2}|^2 + |\tilde{d}_{0_3}|^2\right) \dm\bx, 
\end{equation*}
where $\tilde{\bd}_0 := \lim_{h\to 0} \tilde{\bd}_h$ is the limiting dipole field.
We notice that the limiting energy is again local, and only the components of the dipole moment perpendicular to the axis of the wire contribute.

\begin{figure}[htb!]
	\centering
	\includegraphics[scale=0.2]{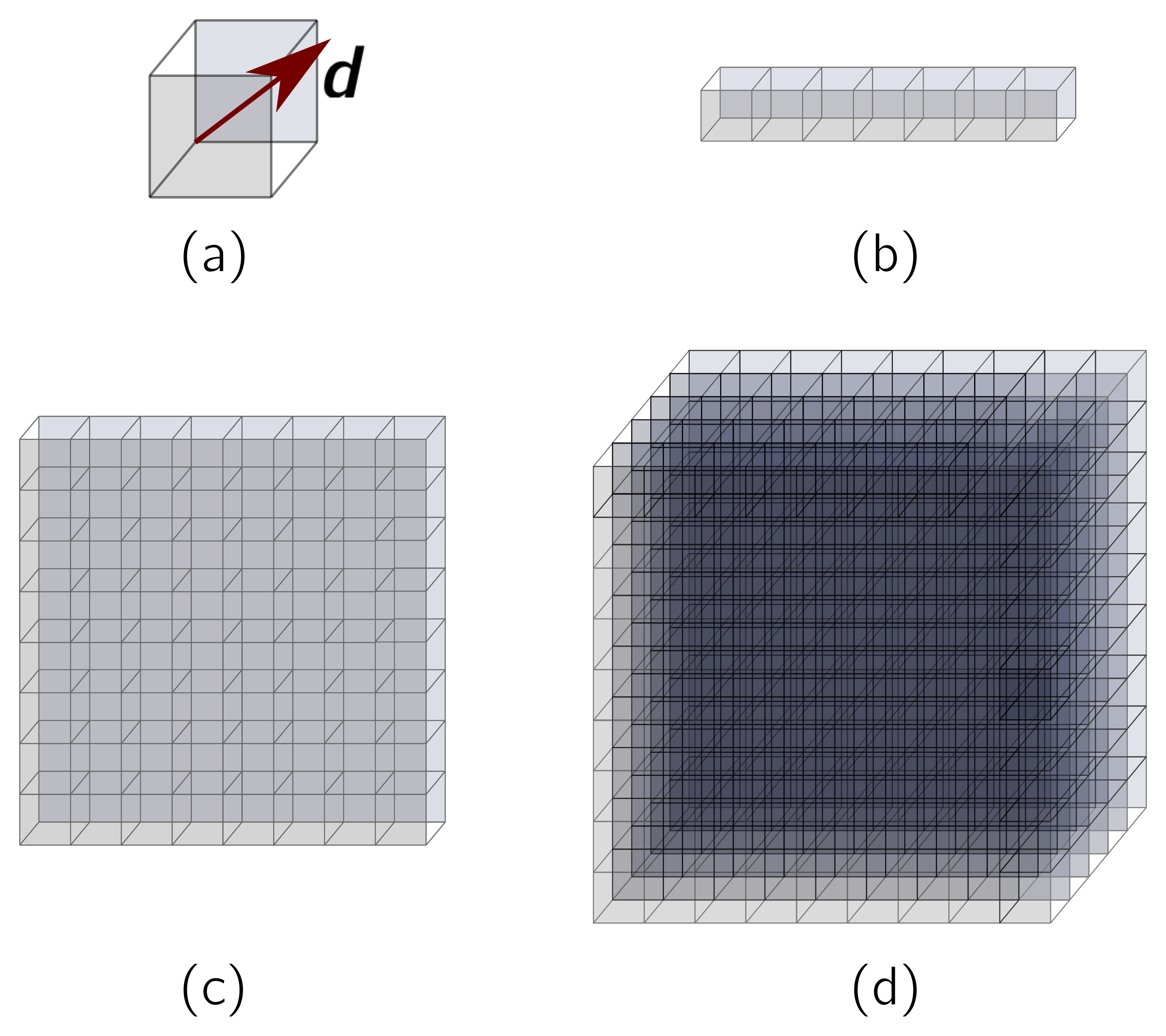}
	\caption{A schematic of the unit cell with a dipole (a), and generic 1-d, 2-d, 3-d periodic lattices (b, c, d).}\label{fig:lattices}
\end{figure}

The absence of nonlocality in the limiting energy in the results above, as well as in our results in section \ref{s:limit} below, can be physically understood through the fact that these structures are 1-d or 2-d topologically.
To see this, we consider a system of discrete dipoles associated with the uniform 1-d, 2-d, and 3-d periodic lattices with the unit cell of size $1$ (\autoref{fig:lattices}). 
The energy of a lattice of dipoles is given by \cite{brown1963micromagnetics, james1994internal}:
\begin{equation}\label{eq:discElectricalEnergy}
	E = -\frac{1}{2}\sum_{i} \sum_{j, j\neq i} \bd(i) \cdot \bK(\bx_j - \bx_i) \bd(j) = \sum_{i} |U(i)| \left[-\frac{1}{|U(i)|} \frac{1}{2} \sum_{j, j \neq i} \bd(i) \cdot \bK(\bx_j - \bx_i) \bd(j)\right], 
\end{equation}
where the sum is over the cells in the lattice, and the term inside square bracket denotes the energy density of a cell $i$. Here, $U(i)$ denotes the $i^{\text{th}}$ unit cell; $|U(i)|$ the measure (volume) of the unit cell $U(i)$; $\bd(i)$ the dipole in cell $i$; and, $\bx_i$ the coordinate of lattice site $i$. The dipole field kernel, $\bK = \bK(\bx)$, is defined as:
\begin{equation}\label{eq:dipoleKernel}
	\bK(\bx) = - \dfrac{1}{4 \pi |\bx|^3} \left( \bI - 3\dfrac{\bx}{|\bx|} \dyad \dfrac{\bx}{|\bx|}\right), \quad \bx \neq \bzero .
\end{equation}
We use these expressions to heuristically understand the scaling of the energy for systems with different topological dimensions.
For simplicity, we assume below that the volume of the unit cell and the magnitude of the dipole are both $1$, i.e., $|U(i)| = 1$ and $|\bd(i)| = 1$ for each $i$, and some constant factors are neglected. For the next set of bounds, assume that $c_1$ and $c_2$ are generic positive constants.

\begin{remark}[1-d lattice]
    We can estimate an upper bound on the energy density $e$ of a typical unit cell as follows:
    \begin{equation*}
    	|e| \leq c_1 \sum_{r = 1}^\infty \dfrac{1}{r^3} \times |\bd| \times (\text{number of dipoles at }r )
    	\leq c_2 \sum_{r = 1}^\infty \dfrac{1}{r^3} \times 1 \times 1 = c_2 \sum_{r=1}^{\infty} \dfrac{1}{r^3} .
    \end{equation*}
    We use that the total dipole moment at a distance $r$ from a given unit cell is, at most, that of another dipole in the unit cell at a distance $r$.
    This sum is well-behaved and bounded. 
\end{remark}

\begin{remark}[2-d lattice]
    As in the 1-d setting, we first bound the net dipole at a distance $r$ from a given unit cell.
    Since the structure is a 2-d lattice, the number of unit cells at a distance $r$ is of order $2\pi r$.
    Therefore, an upper bound on the energy density is:
    \begin{equation*}
    	|e| \leq  c_1 \sum_{r = 1}^\infty \dfrac{1}{r^3} \times |\bd| \times (\text{number of dipoles at }r)
    	\leq c_2 \sum_{r = 1}^\infty \dfrac{1}{r^3} \times 1 \times 2 \pi r = c_2 2\pi \sum_{r=1}^{\infty} \dfrac{1}{r^2}.
    \end{equation*}
    This sum is also well-behaved and bounded. 
\end{remark}

\begin{remark}[3-d lattice]
    Following the argument of the 2-d lattice, we now have that the net dipole at a distance $r$ from a given unit cell is, at most, of the order $4\pi r^2$.
    Therefore, an upper bound on the energy density is:
    \begin{equation*}
    	|e| \leq c_1 \sum_{r = 1}^\infty \dfrac{1}{r^3} \times |\bd| \times (\text{number of dipoles at }r) 
    	\leq c_2 \sum_{r = 1}^\infty \dfrac{1}{r^3} \times 1 \times 4\pi r^2 = c_2 4\pi \sum_{r=1}^{\infty} \dfrac{1}{r} .
    \end{equation*}
    This sum is divergent.
    However, through a more careful analysis that accounts for the signs -- not just the magnitudes -- of the dipole interactions, the energy density can be shown to be conditionally convergent\footnote{
        That is, it is convergent, while the sums of only the positive terms and only the negative terms diverge, respectively, to $\pm\infty$.
    } \cite{toupin1956elastic, james1994internal}.
\end{remark}

When the lattice sum is bounded and converges unconditionally, it is possible to truncate after a finite distance and obtain sufficient numerical accuracy.
When the lattice sum is conditionally convergent, that can be physically related to nonlocality; specifically, the slow convergence does not allow for truncation, and the far-field values play an important role.
\cite{marshall2014atomistic} discusses this from a physical perspective.

 	\section{Results on Continuum Limits of the Electrostatic Energy}
	\label{s:limit}
	
	We consider two classes of nanostructures: helical nanotubes and thin films, the latter allowing for a constant bending curvature (i.e., nonzero constant mean curvature and zero Gauss curvature), and obtain the corresponding continuum limit electrostatic energy.
    In both cases, we start with discrete dipoles, where the discreteness is parametrized by the scale $\lambda > 0$, and examine the limit $\lambda \to 0$. 
	We show that the dipole-dipole interaction energy density -- per unit cross-sectional area in the case of nanotubes, and per unit thickness in the case of films -- converges to a local energy density in the limit.

 	\subsection{Helical Nanotube}
	\label{ss:helix}
	
	We consider a discrete helix with axis $\be_3$ characterized by the angle $\theta$ and lengthscale $\delta$; the pitch of the helix is $2 \pi \delta / \theta$. Suppose $\bx_0 \in \bbR^3$ is a point on the helix. Then, the other points on the helix are related by an isometric transformation of $\bx_0$. Let $s\in \bbR$ be the parametric coordinate of a point on the helix. Then, the map $\bxbar\colon \bbR \to \bbR^3$ that takes a point in the parametric space to a unique point on the helix can be expressed as
	\begin{equation}\label{eq:helixMap}
		\bxbar(s) = \bQ(s\theta) \bx_0 + s \delta \be_3.
	\end{equation}
	Here $\bQ = \bQ(\alpha)$ is the rotational tensor represented by the matrix in the orthonormal basis $\{\be_1, \be_2, \be_3\}$ as:
	\begin{equation}
        \begin{split}\label{eq:rotTensorQ}
    		\bQ(\alpha) &:= \left[\begin{array}{ccc}
    			\cos(\alpha) & - \sin (\alpha) & 0 \\
    			\sin(\alpha) & \cos (\alpha) & 0\\
    			0 &0 & 1
    		\end{array} \right].
	    \end{split}
    \end{equation}
	Note that the definition of map $\bxbar$ imply that the helix makes a full turn in $s=2\pi / \theta$ from $s=0$, and, therefore, the pitch of the helix is $(\bxbar(2\pi/\theta) - \bxbar(0))\cdot \be_3 = 2\pi\delta/\theta$.
	
	Without loss of generality, we assume $\bx_0 = \be_1$. The tangent vector to the helix at $s$ is given by
	\begin{equation}\label{eq:helixTangent}
		\bt(s) = \frac{\dmnosp \bxbar(s)}{\dmnosp s} = \theta \bQ'(s\theta) \be_1 + \delta \be_3 .
	\end{equation}
	Let $\hat{\bt}(s) = \bt(s) / \sqrt{\theta^2 + \delta^2}$ denote the unit tangent vector. 
	We define the second order projection tensors $\bP_{||} = \bP_{||}(s)$ and $\bP_{\perp} = \bP_{\perp}(s)$, for $s\in \bbR$, as follows
	\begin{equation}\label{eq:helixProj}
	\bP_{||}(s) = \hat{\bt}(s) \dyad \hat{\bt}(s), \qquad \bP_{\perp}(s) = \bI - \bP_{||}(s) .
	\end{equation}
	For any vector $\ba$ and any $s\in \bbR$, we have
	\begin{equation}
	\ba = \bP_{||}(s) \ba + \bP_{\perp}(s) \ba, \qquad \text{with } \qquad  \bP_{||}(s) \ba \cdot \bP_{\perp}(s) \ba = 0.
	\end{equation}
	
	\subsubsection{Lattice Geometry and Dipole Moment}\label{sss:helixDipoleSystem}
	
	Let $\calL = \bbZ$ denote the set of parametric coordinates of the points on the helix. We consider a discrete system of dipole moments $\bd \colon \calL \to \bbR^3$ associated to the points on the helix given by $\calL$ (\autoref{fig:helixLatticeExample}).
	The magnitudes of the dipoles at the lattice sites are equal, but they are oriented differently; in particular, the orientations of dipoles at lattice sites follow the relation:
	\begin{equation}
	\label{eqn:dipole-relation-nanotube}
	    \bfd(s+1) = \bfQ(\theta) \bfd(s), \quad s \in \calL.
	\end{equation}
	We associate a unit cell to each lattice site.
	Let $U(s) = [s, s+1)$ denote the unit cell in the parametric space at the site $s$, for $s\in \calL$.
	Let $S(r)$, $r\in \bbR$, be given by $$S(r) = \left\{\bx; \, (\bx - \bxbar(r)) \cdot \bt(r) = 0, \, |\bx - \bxbar(r)|^2 < R^2 \right\},$$ for some $R > 0$. Note that $|S(r)| = |S(0)| = \pi R^2$. 
	The unit cell in real space is defined by $\bar{U}(s) = \left\{\bfx \in S(r);\, r \in U(s) \right\}$. We take, without loss of generality, $R^2 = 1/(\pi \sqrt{\theta^2 + \delta^2})$ so that $|\bar{U}(s)| = \mathrm{area}(S) \times \mathrm{length}(\{\bxbar(r); r\in U(s)\}) = \pi R^2 \sqrt{\theta^2 + \delta^2} = 1$. 
	
    We now consider the setting in which the cells are of size $\lambda >0$, so that as $\lambda \to 0$ the density of cells in the helix increases. For $\lambda >0$, suppose $\calL_\lambda = \lambda \bbZ$ denotes the parametric coordinates of the sites in a scaled lattice, and $\bdlambda \colon \calL_\lambda \to \bbR^3$ denotes the corresponding system of dipole moments. Associated to $s\in \calL_\lambda$, let $U_\lambda(s) = [s, s+\lambda)$ denote the cell in the parametric space. The 3-d cell is given by $\bar{U}_\lambda(s) = \{\bx \in S_{\lambda}(r);\, r \in U_\lambda(s)\}$, where $S_\lambda(r) = \{\bx; \, (\bx - \bxbar(r))\cdot \bt(r) = 0, \, |\bx - \bxbar(r)|^2 < \lambda^2 R^2\}$ is the scaled cross-section. Note that $\mathrm{area}(S_\lambda(r)) = \pi\lambda^2R^2$ and $|\bar{U}_\lambda(s)| = \pi \lambda^2R^2 \times \lambda \sqrt{\theta^2 + \delta^2} = \lambda^3$. Let $\tilde{\bd}_\lambda \colon \bbR \to \bbR^3$ be the piecewise constant extension of $\bdlambda$ given by
    \begin{equation}\label{eq:helixL2Ext}
        \tilde{\bd}_\lambda(s) = \frac{\bdlambda(i)}{|\bar{U}_\lambda(s)|} = \frac{\bdlambda(i)}{\lambda^3}, \qquad \forall s\in U_\lambda(i), \quad \forall i \in \calL_\lambda .
    \end{equation}
    To compute the limit of the dipole-dipole interaction energy as $\lambda \to 0$, we assume that dipole moment density field $\tilde{\bd}_\lambda$ converges to some field $\bff \in L^2(\bbR, \bbR^3)$ in the $L^2$ norm. As in \cite{james1994internal}, instead of working with $\tilde{\bd}_\lambda$, as defined above, we could assume that the dipole moment $\bdlambda(i)$, for $i \in \calL_\lambda$, is due to the background dipole moment density field $\bff_\lambda \in L^2(\bbR, \bbR^3)$ such that 
    \begin{equation}\label{eq:helixDiscreteDipoleFromBg}
        \bdlambda(i) = \sqrt{\theta^2 + \delta^2} \int_{U_\lambda(i)} \int_{S_\lambda(r)} \bff_\lambda(r) \dm S_\lambda(r) \dm r = \lambda^2 \int_{U_\lambda(i)} \bff_\lambda(r) \dm r ,
    \end{equation}
    where $\dmnosp S_\lambda(r)$ is the area measure for surface $S_\lambda$. In the equation above, we assumed that the background field is uniform in $S_\lambda(r)$ for all $r\in \bbR$ and used $R^2  = 1/(\pi \sqrt{\theta^2 + \delta^2})$. 
    The existence of one such background field $\bff_\lambda$ is evident: we can define $\bff_\lambda = \tilde{\bd}_\lambda$. 
    The physical dimension of $\bff_\lambda$ is dipole moment per unit volume. We have the following lemma that relates the convergence of the background dipole moment field and the piecewise constant extension. 
    
    \begin{lemma}\label{lem:helixBgConv}
        Let $\bff_\lambda$, $\lambda > 0$, be the sequence of $L^2(\bbR, \bbR^3)$ functions, and let $\bff\in L^2(\bbR, \bbR^3)$ be such that $\bff_\lambda \to \bff$ in $L^2(\bbR, \bbR^3)$. Let $\bd_\lambda \colon \calL_\lambda \to \bbR^3$ be given by \eqref{eq:helixDiscreteDipoleFromBg}, and $\tilde{\bd}_\lambda$ be a piecewise constant $L^2$ extension of $\bdlambda$ given by \eqref{eq:helixL2Ext}. Then, $\tilde{\bd}_\lambda \to \bff$ in $L^2(\bbR, \bbR^3)$. 
        
        On the other hand, if $\bdlambda\colon \calL_\lambda \to \bbR^3$ is such that $\tilde{\bd}_\lambda \to \bff$ in $L^2(\bbR, \bbR^3)$, then there exists a background field $\bff_\lambda \in L^2(\bbR, \bbR^3)$ such that $\bdlambda$ is given by \eqref{eq:helixDiscreteDipoleFromBg}. 
    \end{lemma}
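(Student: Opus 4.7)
The key observation is that the map $\bff_\lambda \mapsto \tilde{\bd}_\lambda$ defined by \eqref{eq:helixDiscreteDipoleFromBg} followed by \eqref{eq:helixL2Ext} is exactly the $L^2$-orthogonal projection $P_\lambda \colon L^2(\bbR,\bbR^3) \to V_\lambda$, where $V_\lambda$ is the subspace of functions that are constant on each cell $U_\lambda(i)$, $i\in \calL_\lambda$. Indeed, substituting \eqref{eq:helixDiscreteDipoleFromBg} into \eqref{eq:helixL2Ext} gives, for $s \in U_\lambda(i)$,
\begin{equation*}
    \tilde{\bd}_\lambda(s) \;=\; \frac{1}{\lambda^3}\,\lambda^2 \int_{U_\lambda(i)} \bff_\lambda(r) \dm r \;=\; \frac{1}{\lambda}\int_{U_\lambda(i)} \bff_\lambda(r)\dm r \;=\; (P_\lambda \bff_\lambda)(s),
\end{equation*}
which is the cell average of $\bff_\lambda$. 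So the first part of the lemma reduces to a standard statement about averaging projections in $L^2$.

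The plan for the first part is therefore: (i) recall that $P_\lambda$ is an orthogonal projection, hence $\|P_\lambda\|_{\calL(L^2,L^2)} \leq 1$; (ii) establish pointwise (strong) convergence $P_\lambda \bff \to \bff$ in $L^2(\bbR,\bbR^3)$ for every fixed $\bff \in L^2$, which is the classical Lebesgue differentiation / approximation by step functions result (prove it first for $\bff \in C^\infty_0(\bbR,\bbR^3)$ by uniform continuity on a compact support, and extend by the density of $C^\infty_0$ in $L^2$ together with the uniform bound on $P_\lambda$); (iii) combine these using the triangle inequality
\begin{equation*}
    \|\tilde{\bd}_\lambda - \bff\|_{L^2} \;=\; \|P_\lambda \bff_\lambda - \bff\|_{L^2} \;\leq\; \|P_\lambda(\bff_\lambda - \bff)\|_{L^2} + \|P_\lambda \bff - \bff\|_{L^2} \;\leq\; \|\bff_\lambda - \bff\|_{L^2} + \|P_\lambda \bff - \bff\|_{L^2},
\end{equation*}
and both terms tend to zero as $\lambda \to 0$ by hypothesis and by step (ii) respectively.

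The second assertion is the easy direction: given any sequence $\bdlambda\colon \calL_\lambda \to \bbR^3$ with piecewise constant extension $\tilde{\bd}_\lambda \in L^2(\bbR,\bbR^3)$, the explicit choice $\bff_\lambda := \tilde{\bd}_\lambda$ already works. A direct verification using $|U_\lambda(i)| = \lambda$ in the parametric space shows
\begin{equation*}
    \lambda^2 \int_{U_\lambda(i)} \tilde{\bd}_\lambda(r)\dm r \;=\; \lambda^2 \cdot \lambda \cdot \frac{\bdlambda(i)}{\lambda^3} \;=\; \bdlambda(i),
\end{equation*}
which is exactly \eqref{eq:helixDiscreteDipoleFromBg}. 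No convergence hypothesis is needed for the construction itself; the convergence is automatic because $\bff_\lambda = \tilde{\bd}_\lambda \to \bff$ by assumption.

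The only delicate step is (ii), the density argument showing $P_\lambda \bff \to \bff$ for arbitrary $\bff \in L^2(\bbR, \bbR^3)$. I do not expect any genuine obstacle here since the statement is standard, but care is needed because the cells $U_\lambda(i)$ live in the parametric space $\bbR$ rather than in the physical helix; all of the analysis however happens in $L^2(\bbR,\bbR^3)$, where the cell structure is just a uniform partition of $\bbR$ into intervals of length $\lambda$, so the standard one-dimensional argument applies verbatim.
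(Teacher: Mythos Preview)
Your proposal is correct and follows essentially the same approach as the paper: the paper does not give an explicit proof but simply points to Theorem~4.1 of \cite{james1994internal}, whose argument is exactly the averaging-projection identification $\tilde{\bd}_\lambda = P_\lambda \bff_\lambda$ together with $\|P_\lambda\|\le 1$ and the density of smooth functions that you outline. For the converse direction, your choice $\bff_\lambda := \tilde{\bd}_\lambda$ is precisely the one the paper already suggests in the paragraph preceding the lemma.
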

    The proof is similar to the proof of Theorem 4.1 from \cite{james1994internal}.

    \begin{remark}
        Since the discrete dipole field $\bdlambda$ has helical symmetry, from \eqref{eq:helixDiscreteDipoleFromBg} we can see that $\bff_\lambda$ will also have helical symmetry.
        However, we highlight that the background field needs to have helical symmetry only in the sense that the effective dipole has the helical symmetry, allowing for some fluctuations from site to site. 
    \end{remark}
    
    \begin{remark}
        The physical setting that we wish to examine is when dipole moments play a significant role in the limit.
        Therefore, we consider an appropriate scaling that corresponds to obtaining a finite limit for the dipole density field $\tilde{\bd}_\lambda$ (or, equivalently, the background field, $\bff_\lambda$).
        Further, we assume convergence in $L^2$ to ensure finite energies.
    \end{remark}

	\begin{figure}[!ht]
	\centering
    \begin{subfigure}[t]{.45\textwidth}
      \centering
      \includegraphics[width=.6\linewidth]{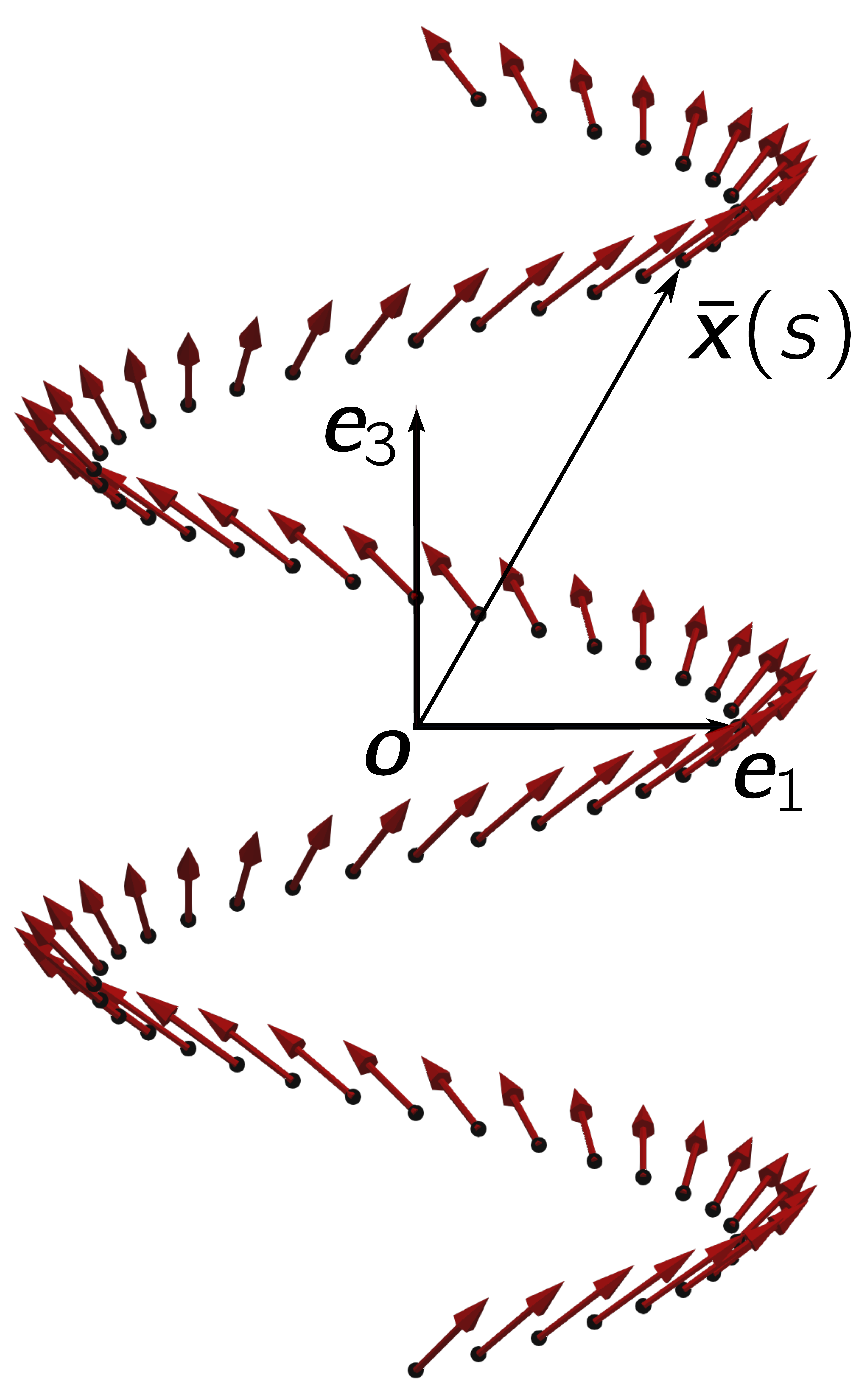}  
      \caption{}
    \end{subfigure}
    \begin{subfigure}[t]{.45\textwidth}
      \centering
      \includegraphics[width=.6\linewidth]{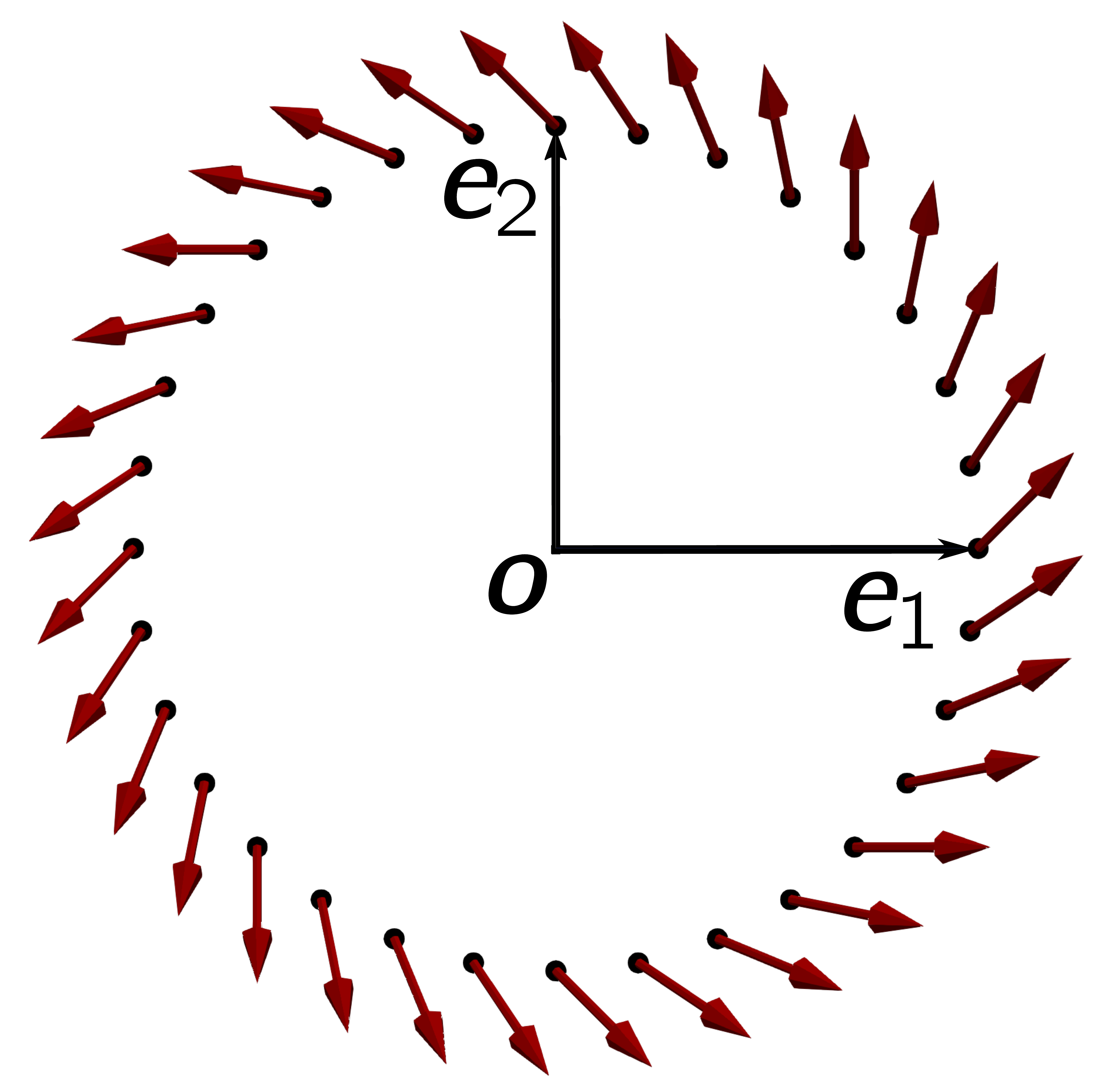}  
      \caption{}
    \end{subfigure}
    \caption{Discrete dipole moments (red arrows) lying on the helix. (a) and (b) show the view in the $(\be_1, \be_3)$ and $(\be_1, \be_2)$ planes respectively.  The dipole moments corresponding to different sites are related by \eqref{eqn:dipole-relation-nanotube}. For the parametric coordinate $s$, $\bxbar(s)$ gives the coordinate of the point on the helix.}
    \label{fig:helixLatticeExample}
    \end{figure}

	\subsubsection{Electrostatic Energy}\label{sss:energyHelix}

	For $\lambda >0$, the energy associated to the system of dipole moments $\bdlambda$ can be expressed as \cite{brown1963micromagnetics, james1994internal}:
	\begin{equation*}
		E_\lambda = -\dfrac{1}{2} \sum_{\substack{s,s'\in \Llambda,\\
				s\neq s'}} \bdlambda(s)\cdot \bK(\bxbar(s') - \bxbar(s)) \bdlambda(s')  = |S_\lambda|e_\lambda,
	\end{equation*}
	where $e_\lambda$ is the energy per unit area given by
	\begin{equation}\label{eq:helixEnergy}
		e_\lambda = -\dfrac{1}{2|S_\lambda|} \sum_{\substack{s,s'\in \Llambda,\\
				s\neq s'}} \bdlambda(s)\cdot \bK(\bxbar(s') - \bxbar(s)) \bdlambda(s').
	\end{equation}
	Substituting \eqref{eq:helixDiscreteDipoleFromBg} into the expression above, and proceeding similar to Section 6 of \cite{james1994internal}, $e_\lambda$ can be written as
	\begin{equation}\label{eq:helixEnergyTLambda}
		e_\lambda = \Ltwodot{\bff_\lambda}{T_\lambda \bff_\lambda}_{L^2(\bbR, \bbR^3)},
	\end{equation}
	where $T_\lambda\colon L^2(\bbR, \bbR^3) \to L^2(\bbR, \bbR^3)$ is the map given by
	\begin{equation}\label{eq:helixTLambda}
		(T_\lambda \bff) (s) = \lambda^2 \int_{\bbR} \bK_\lambda (s',s) \bff(s') \dm s'
	\end{equation}
	and $\bK_{\lambda}(s',s)$, for $s,s'\in \bbR$, is the discrete dipole field kernel given by
	\begin{equation}\label{eq:helixKLambda}
		\bK_{\lambda}(s',s) = \sum_{\substack{u,v\in \Llambda,\\
				u \neq v}} \chi_{U_\lambda(v)}(s') \bK(\bxbar(v) - \bxbar(u)) \chi_{U_\lambda(u)}(s) .
	\end{equation}
	
	\paragraph{Scaling of  $\bK_\lambda$} For any $a,b\in \bbR$, we have, using \eqref{eq:helixMap},
	\begin{equation}
	\label{eq:helixALambda}
    	\begin{split}
    		\bxbar(\lambda a) - \bxbar(\lambda b) &= \bQ(\lambda a\theta) \be_1 + \delta \lambda a \be_3 - \bQ(\lambda b\theta) \be_1 - \delta \lambda b \be_3\\
    		&= \lambda \left( \bxbar(a) - \bxbar(b) 
    		+ 
    		\underbrace{\left[ \dfrac{\bQ(\lambda a\theta) - \bQ(\lambda b\theta) - \left( \lambda \bQ(a\theta) - \lambda \bQ(b\theta) \right)}{\lambda} \right]}_{=:\bA_\lambda(a,b)}
    		\be_1 \right) . 
    	\end{split}
	\end{equation}
	Using the relation above, it is easy to show that
	\begin{equation*}
		\bK_{\lambda}(s',s) = \dfrac{1}{\lambda^3} \sum_{\substack{u,v\in \calL_1,\\
				u \neq v}} \chi_{U_1(v)}(s' /\lambda) \bK(\bxbar(v) - \bxbar(u) + \bA_{\lambda}(v,u) \be_1 ) \chi_{U_1(u)}(s/\lambda) ,
	\end{equation*}
	where we recall that $U_1(u) = [u, u+1)$, $u \in \calL_1$, is the lattice cell in the parametric space for $\lambda = 1$. Based on the equation above, we define a discrete kernel $\bK_{1,\lambda}(s,s')$, for $s,s'\in \bbR$, as follows
	\begin{equation}\label{eq:helixK1Lambda}
		\bK_{1,\lambda}(s',s) = \sum_{\substack{u,v\in \calL_1,\\
				u \neq v}} \chi_{U_1(v)}(s') \bK(\bxbar(v) - \bxbar(u) + \bA_{\lambda}(v,u) \be_1 ) \chi_{U_1(u)}(s) .
	\end{equation}
	We then have
	\begin{equation*}
		\bK_{\lambda}(s',s) = \dfrac{1}{\lambda^3} \bK_{1,\lambda}(s'/\lambda, s/\lambda) .
	\end{equation*}
	
	\subsubsection{Limit of the Electrostatic Energy}
	In this section, we obtain the limit of the energy per unit surface area $e_\lambda$ as $\lambda \to 0$ assuming that the background dipole field density  $\bff_\lambda$ (or equivalently the dipole moment density $\tilde{\bd}_\lambda$) converges to some density field $\bff$ in $L^2$. The idea is to first show that the map $T_\lambda$ in \eqref{eq:helixTLambda} is bounded and obtain its limit. With that, the limit of $e_\lambda$ follows.
	
	\paragraph{Limit of the Discrete Electric Field}
	Let $T_{1,\lambda}$ be the map with kernel $\bK_{1,\lambda}$. For any function $\bff \in L^2(\bbR, \bbR^3)$, we have
	\begin{equation}\label{eq:helixT1Lambda}
		(T_{1,\lambda}\bff)(s) = \int_{\bbR} \bK_{1,\lambda}(s',s) \bff(s') \dm s' .
	\end{equation}
	We have the following main result on the map $T_\lambda$.
	
	\begin{proposition}
		\label{prop:helixTLambda}
		The maps $T_{1,\lambda}$ and $T_{\lambda}$ are bounded in $L^2(\bbR, \bbR^3)$ for all $\lambda > 0$ and satisfy 
		\begin{equation}\label{eq:helixTLambdaEquiv}
			\norm{T_\lambda}_{\mathcal{L}(L^2,L^2)} = \norm{T_{1,\lambda}}_{\mathcal{L}(L^2,L^2)} .
		\end{equation}
		Further, for $\bff\in \CtestOneD$,
		\begin{equation*}
			(T_\lambda \bff)(s) \xrightarrow[\lambda \to 0]{} -h_0 (\bI - 3\bP_{||}(s))\bff(s) = -h_0 (\bP_{\perp}(s) - 2\bP_{||}(s))\bff(s)
 		\end{equation*}
		pointwise, where $\bP_{\perp}(s)$ and $\bP_{||}(s)$ are projection tensors that project onto the normal plane and the tangent line to the helix respectively (see \eqref{eq:helixProj}). Further, $h_0$ is a constant given by
		\begin{equation}\label{eq:helixH0Define}
			h_{0} = \sum_{\substack{v\in \bbZ,\\
						v \neq 0}} \frac{1}{4 \pi |v|^3 (\theta^2 + \delta^2)^{3/2}}.
		\end{equation}
	\end{proposition}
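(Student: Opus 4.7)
The plan is to follow the approach of \cite{james1994internal}, adapted to the helical setting where the lattice sites are related by the isometry $\bxbar(s+1) = \bQ(\theta)\bxbar(s) + \delta\be_3$ rather than by a pure translation. The proof decomposes into three stages: norm equivalence, uniform boundedness, and pointwise convergence on test functions. For the identity $\norm{T_\lambda}_{\calL(L^2,L^2)} = \norm{T_{1,\lambda}}_{\calL(L^2,L^2)}$, I would apply the substitutions $s = \lambda t$ and $s' = \lambda t'$ together with the scaling $\bK_\lambda(s',s) = \lambda^{-3}\bK_{1,\lambda}(s'/\lambda, s/\lambda)$ to obtain $(T_\lambda \bff)(\lambda t) = (T_{1,\lambda}\bg)(t)$ where $\bg(t) := \bff(\lambda t)$. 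Since $\norm{\bg}_{L^2}^2 = \lambda^{-1}\norm{\bff}_{L^2}^2$ and $\norm{T_\lambda \bff}_{L^2}^2 = \lambda^{-1}\norm{T_{1,\lambda}\bg}_{L^2}^2$, both scalings cancel in the operator-norm quotient and equality follows.

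For uniform $L^2$-boundedness I would apply Schur's test. The geometric key is that the $\be_3$ component of $\bxbar(v) - \bxbar(u)$ equals $\delta(v - u)$ exactly, which, together with the short-distance linear lower bound coming from $\abs{\bt(0)} = \sqrt{\theta^2 + \delta^2}$, yields $\abs{\bxbar(v) - \bxbar(u)} \geq c\, \lambda \abs{k-j}$ for $v = \lambda k$, $u = \lambda j \in \calL_\lambda$ with $k \neq j$. Hence $\abs{\bK(\bxbar(v)-\bxbar(u))} \leq C \lambda^{-3}\abs{k-j}^{-3}$, and integrating against the indicator functions in \eqref{eq:helixKLambda} gives $\sup_s \int_\bbR \lambda^2 \abs{\bK_\lambda(s',s)}\,ds' \leq C \sum_{k \neq 0}\abs{k}^{-3} < \infty$ uniformly in $\lambda$. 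The dual bound follows from the symmetry $\bK_\lambda(s',s) = \bK_\lambda(s,s')$ (inherited from $\bK(-\bv) = \bK(\bv)$), and Schur's test concludes boundedness of $T_\lambda$; the analogous statement for $T_{1,\lambda}$ follows from the norm equivalence.

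For the pointwise limit with $\bff \in \CtestOneD$, fix $s \in \bbR$ and let $u_\lambda \in \calL_\lambda$ denote the unique lattice site with $s \in U_\lambda(u_\lambda)$, so $u_\lambda \to s$. Each nonzero neighbor has the form $v = u_\lambda + \lambda k$ with $k \in \bbZ\setminus\{0\}$, and using $\bQ(u_\lambda\theta)\be_3 = \be_3$ one obtains the helical identity
\[
\bxbar(u_\lambda + \lambda k) - \bxbar(u_\lambda) = \bQ(u_\lambda\theta)\bigl[\bxbar(\lambda k) - \bxbar(0)\bigr].
\]
Combining this with the covariance $\bK(\bQ\bv) = \bQ\bK(\bv)\bQ^T$, the $(-3)$-homogeneity $\bK(\alpha\bv) = \abs{\alpha}^{-3}\bK(\bv)$, the Taylor expansion $\bxbar(\lambda k) - \bxbar(0) = \lambda k\, \bt(0) + O(\lambda^2)$, and the identity $\bt(s) = \bQ(s\theta)\bt(0)$ (from $\bQ'(\alpha) = \bQ(\alpha)\bQ'(0)$), each summand $\lambda^2 \bK(\bxbar(v) - \bxbar(u_\lambda)) \int_{U_\lambda(v)} \bff(s')\,ds'$ converges pointwise to $\abs{k}^{-3}\bK(\bt(s))\bff(s)$ as $\lambda \to 0$. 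Substituting $\bK(\bt(s)) = -(\bI - 3\bP_{||}(s))/(4\pi(\theta^2+\delta^2)^{3/2})$ and summing over $k \neq 0$ produces the claimed formula with $h_0$ as in \eqref{eq:helixH0Define}.

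The main technical obstacle is the exchange of the limit $\lambda \to 0$ with the sum over $k$. I would reuse the kernel estimate from the boundedness step, combined with $\abs{\int_{U_\lambda(v)} \bff\,ds'} \leq \lambda \norm{\bff}_{L^\infty}$, to dominate each summand by $C\norm{\bff}_{L^\infty}/\abs{k}^3$ uniformly in $\lambda$; the dominated convergence theorem on $\bbZ\setminus\{0\}$ with counting measure then legitimizes the exchange. A secondary subtlety is that $\bff(u_\lambda + \lambda k)$ may differ from $\bff(s)$ for $\abs{k}$ comparable to $1/\lambda$, but the summable $1/\abs{k}^3$ envelope forces the corresponding tail contribution to vanish, while for any fixed $k$ the smoothness of $\bff$ gives term-by-term convergence.
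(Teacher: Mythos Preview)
Your proposal is essentially correct and follows the paper's strategy closely: rescaling for the norm identity, a Schur-type bound based on the $\be_3$-component lower bound $\abs{\bxbar(v)-\bxbar(u)}\ge\delta\abs{v-u}$, and term-by-term analysis of the lattice sum for the pointwise limit. Two comments.

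First, there is an arithmetic slip in the norm-equivalence paragraph. With $\bg(t)=\bff(\lambda t)$ and $(T_\lambda\bff)(\lambda t)=(T_{1,\lambda}\bg)(t)$, the same change of variables that gives $\norm{\bg}_{L^2}^2=\lambda^{-1}\norm{\bff}_{L^2}^2$ gives $\norm{T_{1,\lambda}\bg}_{L^2}^2=\lambda^{-1}\norm{T_\lambda\bff}_{L^2}^2$, i.e.\ $\norm{T_\lambda\bff}_{L^2}^2=\lambda\,\norm{T_{1,\lambda}\bg}_{L^2}^2$, not $\lambda^{-1}$. With the correct sign the two factors do cancel; as written, they would not. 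The paper packages this via the unitary $S_\lambda\bff(s)=\lambda^{1/2}\bff(\lambda s)$ and the conjugation $T_\lambda=S_\lambda^{-1}T_{1,\lambda}S_\lambda$, which makes the cancellation automatic.

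Second, your pointwise-limit argument is organized slightly differently from the paper's. The paper first writes $(T_\lambda\bff)(s)=\bH_\lambda(s)\bff(s)+\lambda^2\int\bK_\lambda(s',s)(\bff(s')-\bff(s))\,ds'$, kills the remainder by a near/far split at radius $R\lambda$, and then computes $\lim_{\lambda\to0}\bH_\lambda(s)$ via the same rotational identity and absolute convergence you use. You instead apply dominated convergence on $\bbZ\setminus\{0\}$ directly to the full sum, using the uniform envelope $C\norm{\bff}_{L^\infty}/\abs{k}^3$. Both routes rely on the same kernel estimate and the same helical covariance $\bxbar(u_\lambda+\lambda k)-\bxbar(u_\lambda)=\bQ(u_\lambda\theta)[\bxbar(\lambda k)-\bxbar(0)]$; your version is a bit more compact, while the paper's decomposition isolates the zeroth moment $\bH_\lambda$ explicitly, which is convenient because the same object reappears in the thin-film proof.
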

	
	We provide the proof of \sref{Proposition}{prop:helixTLambda} in \autoref{ss:helixProof}. 
		
	\paragraph{Limit of the Energy}
	\begin{theorem}
		\label{thm:helixELambda}
		Let $\bff_\lambda \in \LtwoOneD$ be a sequence of functions for $\lambda >0$ with $\bff \in \LtwoOneD$ such that $\bff_\lambda \xrightarrow[\lambda\to 0]{} \bff$ in $L^2$. Let the system of dipole moments $\bdlambda\colon \calL_\lambda \to \bbR^3$ be given by \eqref{eq:helixDiscreteDipoleFromBg}. Then
		\begin{equation*}
			e_\lambda \xrightarrow[\lambda \to 0]{} \dfrac{1}{2}h_0 \left[ ||\bP_{\perp} \bff||^2_{\LtwoOneD} - 2||\bP_{||} \bff||^2_{\LtwoOneD} \right],
		\end{equation*}
		where $h_0$ is the constant defined in \eqref{eq:helixH0Define}. 
	\end{theorem}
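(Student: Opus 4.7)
The plan is to recognise the energy $e_\lambda$ as a continuous bilinear form in the background field via \eqref{eq:helixEnergyTLambda}, and then to pass to the limit using the two properties supplied by \sref{Proposition}{prop:helixTLambda}: the uniform operator bound $\sup_{\lambda>0} \norm{T_\lambda}_{\calL(L^2, L^2)} \leq C$, and the pointwise convergence $(T_\lambda \bff)(s) \to -h_0(\bP_\perp(s) - 2\bP_{||}(s))\bff(s)$ for $\bff \in \CtestOneD$. Introduce the candidate limit operator $T_0\colon L^2(\bbR, \bbR^3) \to L^2(\bbR, \bbR^3)$ given by the pointwise multiplication
\begin{equation*}
    (T_0 \bff)(s) := -h_0\bigl(\bP_\perp(s) - 2\bP_{||}(s)\bigr)\bff(s),
\end{equation*}
which is bounded on $L^2$ because $\abs{\bP_\perp(s) - 2\bP_{||}(s)}$ is uniformly bounded in $s$. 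The theorem then reduces to establishing the bilinear-form limit $(\bff_\lambda, T_\lambda \bff_\lambda) \to (\bff, T_0 \bff)$ and identifying the right-hand side.

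The workhorse is the three-term splitting
\begin{equation*}
    (\bff_\lambda, T_\lambda \bff_\lambda) - (\bff, T_0 \bff) = (\bff_\lambda - \bff, T_\lambda \bff_\lambda) + \bigl(\bff, T_\lambda(\bff_\lambda - \bff)\bigr) + \bigl(\bff, (T_\lambda - T_0)\bff\bigr).
\end{equation*}
Cauchy--Schwarz together with the uniform operator bound controls the first two terms by $C(\norm{\bff} + \norm{\bff_\lambda})\norm{\bff_\lambda - \bff}$, which vanishes since $\bff_\lambda \to \bff$ in $L^2$ (and in particular $\norm{\bff_\lambda}$ is bounded). The remaining, substantive, task is to show that the third term tends to zero for every fixed $\bff \in L^2(\bbR, \bbR^3)$.

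I would first handle the third term for $\bff \in \CtestOneD$. Here \sref{Proposition}{prop:helixTLambda} gives the pointwise convergence $T_\lambda \bff(s) \to T_0 \bff(s)$, and the uniform $L^2$-boundedness of $\{T_\lambda \bff\}_\lambda$ combined with compactness of $\mathrm{supp}(\bff)$ provides uniform integrability on any bounded set; Vitali's convergence theorem then yields $T_\lambda \bff \to T_0 \bff$ in $L^1_{\mathrm{loc}}(\bbR, \bbR^3)$, and pairing with the bounded, compactly-supported $\bff$ gives $(\bff, T_\lambda \bff) \to (\bff, T_0 \bff)$. The general case $\bff \in L^2$ is then recovered by an $\varepsilon/3$-type density argument: choose $\bff^\varepsilon \in \CtestOneD$ with $\norm{\bff - \bff^\varepsilon} < \varepsilon$, split the error using the uniform operator bounds on $T_\lambda$ and $T_0$, apply the smooth case to the single term $(\bff^\varepsilon, (T_\lambda - T_0)\bff^\varepsilon)$, and send $\lambda \to 0$ before $\varepsilon \to 0$.

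The main obstacle is precisely this pointwise-to-integral passage in the third term: without a uniform envelope on $T_\lambda \bff$, pointwise convergence alone does not guarantee convergence of the pairing, and the compact support of the test function together with the uniform $L^2$ bound is what rescues the argument by ruling out escape of mass to infinity. Once the bilinear-form limit is in hand, the theorem follows from a direct computation: the orthogonality $\bP_\perp \bP_{||} = \bzero$ and idempotency $\bP_\perp^2 = \bP_\perp$, $\bP_{||}^2 = \bP_{||}$ give
\begin{equation*}
    (\bff, T_0 \bff) = -h_0\bigl[\norm{\bP_\perp \bff}_{\LtwoOneD}^2 - 2\norm{\bP_{||}\bff}_{\LtwoOneD}^2\bigr],
\end{equation*}
which, after accounting for the $-\tfrac{1}{2}$ prefactor carried from \eqref{eq:helixEnergy} through \eqref{eq:helixEnergyTLambda}, matches the expression in the statement of the theorem.
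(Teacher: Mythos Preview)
Your proof is correct and follows the same strategy as the paper: a three-term splitting of the bilinear form $(\bff_\lambda, T_\lambda \bff_\lambda) - (\bff, T_0 \bff)$, with the uniform operator bound from \sref{Proposition}{prop:helixTLambda} controlling the first two terms and a density argument handling the third. Your use of Vitali's theorem to upgrade the pointwise convergence of \sref{Proposition}{prop:helixTLambda} to convergence of the pairing on the compact support of a test function is a genuine refinement: the paper's proof asserts $\lim_{\lambda \to 0}(T_\lambda \bff^k) = \bH_0 \bff^k$ for smooth $\bff^k$ and then takes $k\to\infty$ without explicitly justifying the passage from pointwise to norm (or weak) convergence, whereas your argument closes this step cleanly.
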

	
	\begin{proof}
		Since $T_\lambda$ is bounded and $\bff_\lambda \to \bff$, we have
		\begin{equation*}
		    \lim_{\lambda \to 0} (T_\lambda \bff_\lambda) = \lim_{\lambda \to 0} (T_\lambda \bff) + \lim_{\lambda \to 0}(T_\lambda (\bff_\lambda - \bff))
    			= \lim_{\lambda \to 0} (T_\lambda \bff) ,
		\end{equation*}
		where the decomposition of the limit into the sum of individual limits is true if the two limits, $\lim_{\lambda \to 0} (T_\lambda \bff)$ and $\lim_{\lambda \to 0}(T_\lambda (\bff_\lambda - \bff))$, individually exist. 
		This is indeed true: because $T_\lambda$ is bounded and $\bff_\lambda \to \bff$ in $L^2$, we have that $\lim_{\lambda \to 0}(T_\lambda (\bff_\lambda - \bff)) = 0$. 
		Then, to show that $\lim_{\lambda \to 0} (T_\lambda \bff)$ exist for $\bff\in \LtwoOneD$, we proceed as follows.
	
    	Let $\bff^k \in \CtestOneD$ be a sequence of functions such that $\bff^k \to \bff$. Using \sref{Proposition}{prop:helixTLambda}, we have 
    	\begin{equation}
    	    \begin{split}\label{eq:helixLimitTfShow}
    			\lim_{\lambda \to 0} (T_\lambda \bff) &= \lim_{k\to \infty} \lim_{\lambda \to 0}(T_\lambda \bff^k) + \lim_{k\to \infty} \lim_{\lambda \to 0} (T_\lambda (\bff - \bff^k))  = \lim_{k\to \infty} \lim_{\lambda \to 0}(T_\lambda \bff^k)  \\
    			&= \lim_{k\to \infty} \left( \bH_0 \bff^k  \right) = \bH_0 \bff  ,
    		\end{split}
    	\end{equation}
    	where $\bH_0 = \bH_0(s) = -h_0 (\bP_{\perp}(s) - 2\bP_{||}(s))$ (see \sref{Proposition}{prop:helixTLambda}).
    	
    	Using the expression in \eqref{eq:helixEnergyTLambda} for $e_\lambda$, we have
    	\begin{equation}
    	    \begin{split}
        		e_\lambda &= -\dfrac{1}{2} \Ltwodot{\bff_\lambda}{T_\lambda \bff_\lambda}_{\LtwoOneD} 
        		= -\dfrac{1}{2} \left[ \Ltwodot{\bff_\lambda - \bff}{T_\lambda \bff_\lambda}_{\LtwoOneD} + \Ltwodot{\bff}{T_\lambda \bff_\lambda}_{\LtwoOneD} \right]  \\
        		&= -\dfrac{1}{2} \left[ \Ltwodot{\bff_\lambda - \bff}{T_\lambda \bff_\lambda}_{\LtwoOneD} + \Ltwodot{\bff}{T_\lambda \bff}_{\LtwoOneD} + \Ltwodot{\bff}{T_\lambda ( \bff_\lambda - \bff)}_{\LtwoOneD} \right] .
    	    \end{split}
    	\end{equation}
    	The first and third terms are zero in the limit. Taking the limit of the remaining term and using \eqref{eq:helixLimitTfShow}, we have 
    	\begin{equation*}
    		\lim_{\lambda \to 0} e_\lambda = \lim_{\lambda \to 0} -\dfrac{1}{2} \Ltwodot{\bff_\lambda}{T_\lambda \bff_\lambda}_{\LtwoOneD} = \dfrac{1}{2} h_0 \left[ ||\bP_{\perp} \bff ||^2_{\LtwoOneD} - 2||\bP_{||} \bff ||^2_{\LtwoOneD}\right].
    	\end{equation*}
    	This completes the proof.
	\end{proof}
	
\begin{remark}
	The limiting energy only comprises of a local self-field energy. In the limit, any point on the helix sees a uniform 1-d system of dipole moments along the tangent line. Further, we see that both the normal components and the tangential component of the dipole moment contribute to the energy and electric field. 
        This is in contrast to \cite{chacouche2015ferromagnetic}, where the thin wire limit of the magnetostatic energy, obtained from dimensional reduction starting from a 3-d continuum, has contributions only from the normal component.
	    Heuristically, the dimension reduction starting from a 3-d continuum contains minimal information about the detailed atomic arrangements within the nanostructure, and hence does not capture the consequences of the helical geometry.
	    Because of its starting point in a 3-d continuum model, it is appropriate for thin objects that have all dimensions being much larger than the atomic lengthscale.
	    On the other hand, the discrete-to-continuum approach is appropriate for nanostructures wherein the thin dimension is comparable to the atomic lengthscale.
\end{remark}

	\subsection{Nanofilm with Constant Bending Curvature}
	\label{ss:thinfilm}

	Let $\calS = (-\thetab,\thetab)\times \bbR$ be the parametric space for a surface with a constant bending curvature $\kappa$. The map that takes a point in the parametric space to a unique point on the film is given by
	\begin{equation}\label{eq:filmBxBar}
		\bxbar(s_1, s_2) = \calR\bQ(s_1)\be_1 + s_2 \delta \be_3,
	\end{equation}
	where $\calR = 1/\kappa$ is the inverse of curvature, $\thetab > 0$ is the angular size of the film, and $\delta$ is the spacing in the flat direction. Here, $\kappa, \delta, \thetab$ are fixed parameters for a given film. 
	Here, $\bQ = \bQ(\theta)$ is the rotational tensor with the axis $\be_3$, see definition \eqref{eq:rotTensorQ}. The tangent vectors at $\bs := (s_1, s_2) \in S$ are
	\begin{equation}\label{eq:fimTangent}
	\bt_1(\bs) = \frac{\dmnosp \bxbar}{\dmnosp s_1} = \calR\bQ'(s_1) \be_1,  \qquad \bt_2(\bs) = \frac{\dmnosp \bxbar}{\dmnosp s_2} = \delta \be_3
	\end{equation}
	and the normal vector is 
	\begin{equation}\label{eq:fimNormal}
	\bn(\bs) = \bQ(s_1)\be_1.
	\end{equation}
	
	\subsubsection{Lattice Geometry and Dipole Moment}\label{sss:filmDipoleSystem}

	We consider a lattice embedded on the film $\bxbar$. We assume that the film is one lattice cell thick in the direction $\bn$ normal to the film. Suppose $\calL \subset \calS$ is the set of parametric coordinates of the discrete lattice sites. Let $\calL$ and the lattice cell $U$ (in the parametric space $\calS$) be given by
	\begin{equation}
    	\begin{split}\label{eq:latSitesL}
    		\calL &= \{\bs = (s_1, s_2) \in \calS;\, s_1 = i\theta_l, s_2 = j, \: i,j \in\bbZ\} = \left((-\thetab,\thetab)\cap \theta_l\bbZ\right) \times \bbZ \\
    		U(\bs) &= [s_1, s_1 + \theta_l) \times [s_2, s_2 + 1), \qquad \forall \bs \in \calL .
    	\end{split}
    \end{equation}
	Here, $\theta_l$ is the angular width of the lattice cell. We assume that $\theta_l$ is such that the set of sites in the angular direction, $(-\thetab, \thetab)\cap \theta_l \bbZ$, is not empty, and in fact is sufficiently large so that the continuum limit approximation of the energy density is justified. We assume that the lattice has unit thickness in the normal direction, and suppose that the film given by $\bxbar$ passes through the center of the lattice in the normal direction. Then, the unit cell for a given site $\bs \in \calL$ is $\bar{U}(\bs) = \{\bx;\, \bx = \bxbar(\bs') + t \bn(\bs'), \, \bs' \in U(\bs), \,t \in (-1/2, 1/2)\}$ (see \cref{fig:latticeThinFilm}). On the lattice $\calL$, we define a discrete system of dipole moments $\bd\colon \calL \to \bbR^3$. 
	As in the case of the helical nanotube, the lattice cells in real space are related by an isometric transformation, so the magnitudes of the dipoles at the lattice sites are equal, but they are oriented differently. In particular, we have
	\begin{equation}
	    \bfd(\bs + \br) = \bfQ(r_1) \bfd(\bs), \quad \bs \in \calL,
	\end{equation}
	where $\br = (r_1,r_2) \in \calL$ such that $\br + \bs \in \calL$ (i.e. all the translations within $\calL$). 
	We see that the dipole orientation depends only on the angular (first) parameter and is invariant with respect to the second parameter.

    \begin{figure}[!ht]
	\centering
    \begin{subfigure}[t]{.45\textwidth}
      \centering
      \includegraphics[width=.6\linewidth]{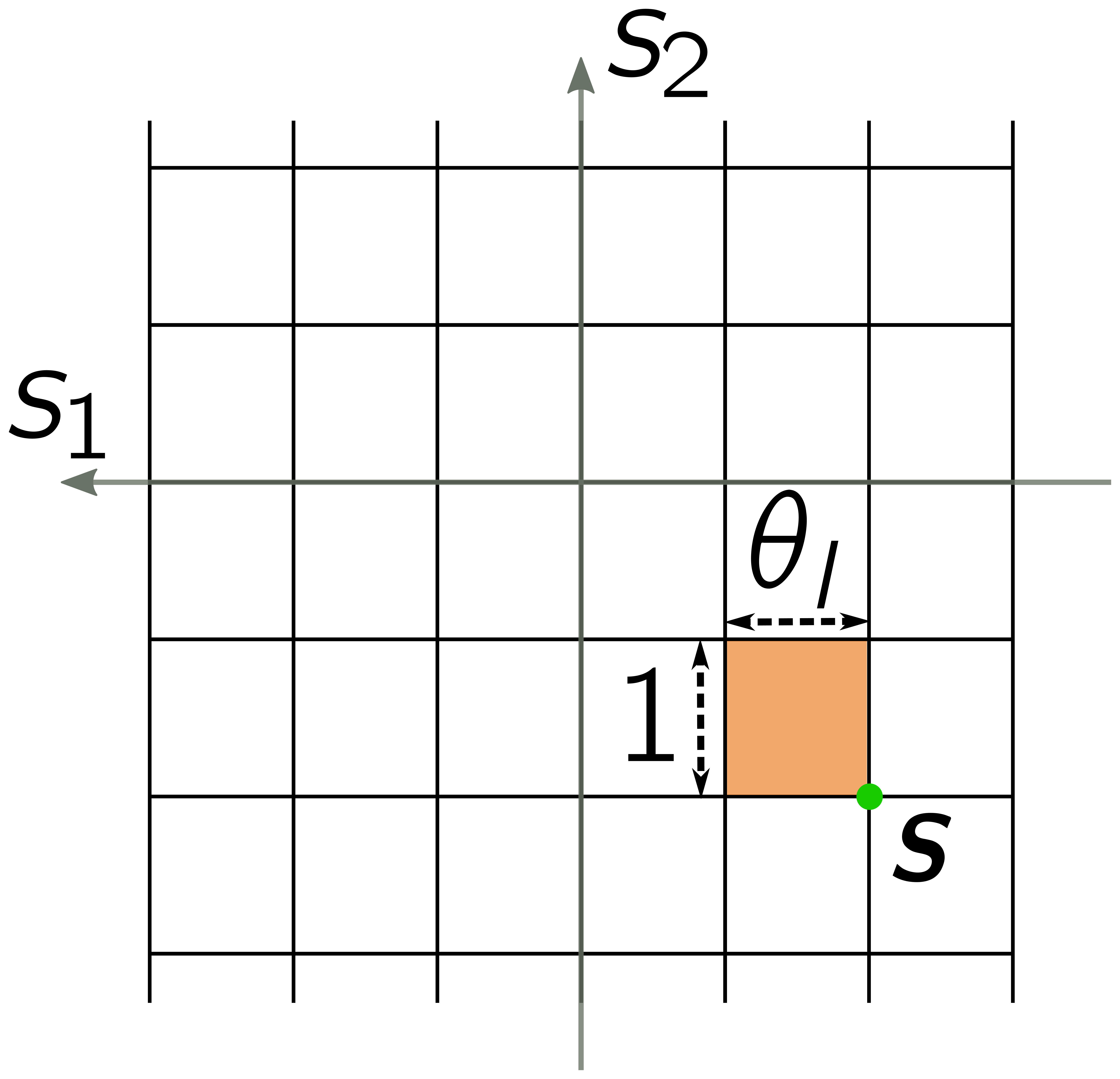}  
      \caption{}
    \end{subfigure}
    \begin{subfigure}[t]{.45\textwidth}
      \centering
      \includegraphics[width=.8\linewidth]{figure4b.pdf}  
      \caption{}
    \end{subfigure}
    \caption{Depiction of lattice sites in (a) parametric and (b) real space where the orange region marks a lattice cell in the parametric and real space.}
    \label{fig:latticeThinFilm}
    \end{figure}

	We next consider the scaling of the lattice by $\lambda >0$. The scaled lattice $\calL_\lambda$ and the associated lattice cell $U_\lambda$ are defined by the natural scaling of $\calL$ and $U$ as follows
	\begin{equation}
    	\begin{split}\label{eq:filmLLambda}
    		\calL_\lambda &= \{\bs \in \calS;\, s_1 = i \theta_l \lambda, s_2 = j\lambda, \, i, j \in \bbZ\} = \left((-\thetab,\thetab)\cap \lambda\theta_l\bbZ\right) \times \lambda\bbZ \\
    		U_\lambda(\bs) &= [s_1, s_1 + \lambda\theta_l) \times [s_2, s_2 + \lambda), \qquad \forall \bs \in \calL_\lambda .
	    \end{split}
    \end{equation}
	After scaling, the thickness of the lattice cell in the normal direction is $\lambda$ and the unit cell for $\bs\in \calL_\lambda$ is $\bar{U}_\lambda(\bs) = \{\bx;\, \bx = \bxbar(\bs') + t \bn(\bs'), \, \bs' \in U_\lambda(\bs), \, t \in (-\lambda/2, \lambda/2)\}$. We can show that the unit cell in the scaled lattice has volume $\lambda^3\calR\theta_l$. Let $\bdlambda\colon \calL_\lambda \to \R^3$ denote the discrete system of dipole moments associated with the scaled lattice, $\calL_\lambda$, and $\tilde{\bd}_\lambda \colon \calS \to \bbR^3$ denote the piecewise constant extension of $\bdlambda$ given by
	\begin{equation}\label{eq:filmExtendDipoleField}
	    \tilde{\bd}_\lambda(\bs) = \frac{\bdlambda(\ba)}{\lambda^3 \calR\theta_l},\qquad \forall \bs \in U_\lambda(\ba), \, \forall \ba \in \calL_\lambda.
	\end{equation}
	We are interested in the limit of the energy when $\tilde{\bd}_\lambda$ converges to $\bff$ in $L^2(\calS, \bbR^3)$. As in the case of the helix and following \cite{james1994internal}, we suppose that there exists a background dipole moment density field $\bff_\lambda \in L^2(\calS, \bbR^3)$ such that the dipole moment at site $\bs \in \calL_\lambda$ is given by
	\begin{equation}\label{eq:filmBgDipoleScale}
		\bd_\lambda(\bs) = \int_{-\lambda/2}^{\lambda/2} \left[\int_{U_\lambda(\bs)} \bff_\lambda(\bt) \calR \dm t_1 \dm t_2\right] \dm t_3 = \calR \lambda \int_{U_\lambda(\bs)} \bff_\lambda(\bt) \dm \bt ,
	\end{equation}
	where $\dmnosp \bt = \dmnosp t_1\dm t_2$ is the area measure (note that $\dmnosp \bt$ does not include $\calR$). The existence of one such background field $\bff_\lambda$ is evident: we can define $\bff_\lambda = \tilde{\bd}_\lambda$. Similar to the case of the helix, we have the following lemma that relates the convergence of the background dipole moment field and the piecewise constant extension.
	
	\begin{lemma}
	    Let $\bff_\lambda$, $\lambda > 0$, be a sequence of $L^2(\calS, \bbR^3)$ functions and let $\bff\in L^2(\calS, \bbR^3)$ be such that $\bff_\lambda \to \bff$ in $L^2(\calS, \bbR^3)$. Let $\bd_\lambda \colon \calL_\lambda \to \bbR^3$ be given by \eqref{eq:filmBgDipoleScale} and let $\tilde{\bd}_\lambda$ be a piecewise constant $L^2$ extension of $\bdlambda$ given by \eqref{eq:filmExtendDipoleField}. Then,  $\tilde{\bd}_\lambda \to \bff$ in $L^2(\calS, \bbR^3)$. 
	    
	    On the other hand, if $\bdlambda \colon \calL_\lambda \to \bbR^3$ is such that $\tilde{\bd}_\lambda \to \bff$ in $L^2(\calS, \bbR^3)$, then there exists a background field $\bff_\lambda \in L^2(\calS, \bbR^3)$ such that $\bdlambda$ is given by \eqref{eq:filmBgDipoleScale}. 
    \end{lemma}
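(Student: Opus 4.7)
The plan is to recognize that the map $\bff_\lambda \mapsto \tilde{\bd}_\lambda$ defined by \eqref{eq:filmBgDipoleScale}--\eqref{eq:filmExtendDipoleField} is nothing more than cellwise averaging on the partition $\{U_\lambda(\ba) : \ba \in \calL_\lambda\}$ of $\calS$, and then to exploit the two standard properties of the associated averaging operator $P_\lambda$: $L^2$-contractivity (immediate from Jensen), and strong approximation $P_\lambda \bg \to \bg$ in $L^2(\calS, \bbR^3)$ for every $\bg \in L^2$. This mirrors the strategy already used for the nanotube in \autoref{lem:helixBgConv}, and is essentially Theorem 4.1 of \cite{james1994internal} transplanted to the film geometry.

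Indeed, substituting \eqref{eq:filmBgDipoleScale} into \eqref{eq:filmExtendDipoleField} and noting that $|U_\lambda(\ba)| = \lambda^2 \theta_l$ in the parametric measure gives, for $\bs \in U_\lambda(\ba)$,
\begin{equation*}
\tilde{\bd}_\lambda(\bs) \;=\; \frac{1}{|U_\lambda(\ba)|} \int_{U_\lambda(\ba)} \bff_\lambda(\bt)\,\dm\bt \;=\; (P_\lambda \bff_\lambda)(\bs),
\end{equation*}
where the factors $\calR\lambda$ from \eqref{eq:filmBgDipoleScale} and $\lambda^3\calR\theta_l$ from \eqref{eq:filmExtendDipoleField} cancel cleanly. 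The forward direction then follows from the triangle inequality
\begin{equation*}
\|\tilde{\bd}_\lambda - \bff\|_{L^2} \;\leq\; \|P_\lambda(\bff_\lambda - \bff)\|_{L^2} + \|P_\lambda \bff - \bff\|_{L^2} \;\leq\; \|\bff_\lambda - \bff\|_{L^2} + \|P_\lambda \bff - \bff\|_{L^2}.
\end{equation*}
The first term vanishes by hypothesis. For the second, I would approximate $\bff$ by smooth $\bff^k \in C^\infty_0(\calS, \bbR^3)$; on such approximants a cellwise Poincar\'e estimate yields $\|P_\lambda \bff^k - \bff^k\|_{L^2} = O(\lambda)$, and an $\varepsilon/3$ argument combining this with the contractivity of $P_\lambda$ extends the convergence to arbitrary $\bff \in L^2$.

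The converse is essentially immediate: given $\bdlambda$ whose piecewise-constant extension satisfies $\tilde{\bd}_\lambda \to \bff$, set $\bff_\lambda := \tilde{\bd}_\lambda \in L^2(\calS, \bbR^3)$. Since $\tilde{\bd}_\lambda$ is constant on $U_\lambda(\bs)$ with value $\bdlambda(\bs)/(\lambda^3\calR\theta_l)$, direct integration recovers
\begin{equation*}
\calR\lambda \int_{U_\lambda(\bs)} \tilde{\bd}_\lambda(\bt)\,\dm\bt \;=\; \calR\lambda \cdot \lambda^2\theta_l \cdot \frac{\bdlambda(\bs)}{\lambda^3\calR\theta_l} \;=\; \bdlambda(\bs),
\end{equation*}
which is exactly \eqref{eq:filmBgDipoleScale}.

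The only mild obstacle is that the parametric strip $\calS = (-\thetab,\thetab) \times \bbR$ is unbounded in $s_2$, which rules out any compactness-based shortcut for the approximation property of $P_\lambda$. This is not serious: all that is needed is the density of $C^\infty_0(\calS, \bbR^3)$ in $L^2(\calS, \bbR^3)$ on this strip, together with the elementary cellwise Poincar\'e inequality on rectangles of size $\lambda\theta_l \times \lambda$. No new ideas beyond those of the helix lemma are needed; the only bookkeeping difference is that the cell volume now carries the geometric factor $\calR\theta_l$, which is handled consistently on both sides of the equivalence.
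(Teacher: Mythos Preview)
Your proposal is correct and follows the same route the paper indicates: the paper does not give an independent argument but simply cites Theorem~4.1 of \cite{james1994internal}, and what you have written is precisely that theorem adapted to the film geometry, with the cell-averaging operator $P_\lambda$, its $L^2$-contractivity via Jensen, and the density/Poincar\'e argument for $P_\lambda\bff\to\bff$. The converse via $\bff_\lambda:=\tilde{\bd}_\lambda$ is also exactly the standard choice, and your bookkeeping of the geometric factor $\calR\theta_l$ is correct.
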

    The proof follows directly from the proof of Theorem 4.1 of \cite{james1994internal}.
	
	\begin{remark}
	    As different unit cells are related by isometric transformations, the dipole moments in different unit cells are related by the rotational part of the isometric transformation.
	\end{remark}

	\begin{figure}[!ht]
	\centering
    \begin{subfigure}[t]{.45\textwidth}
      \centering
      \includegraphics[width=.75\linewidth]{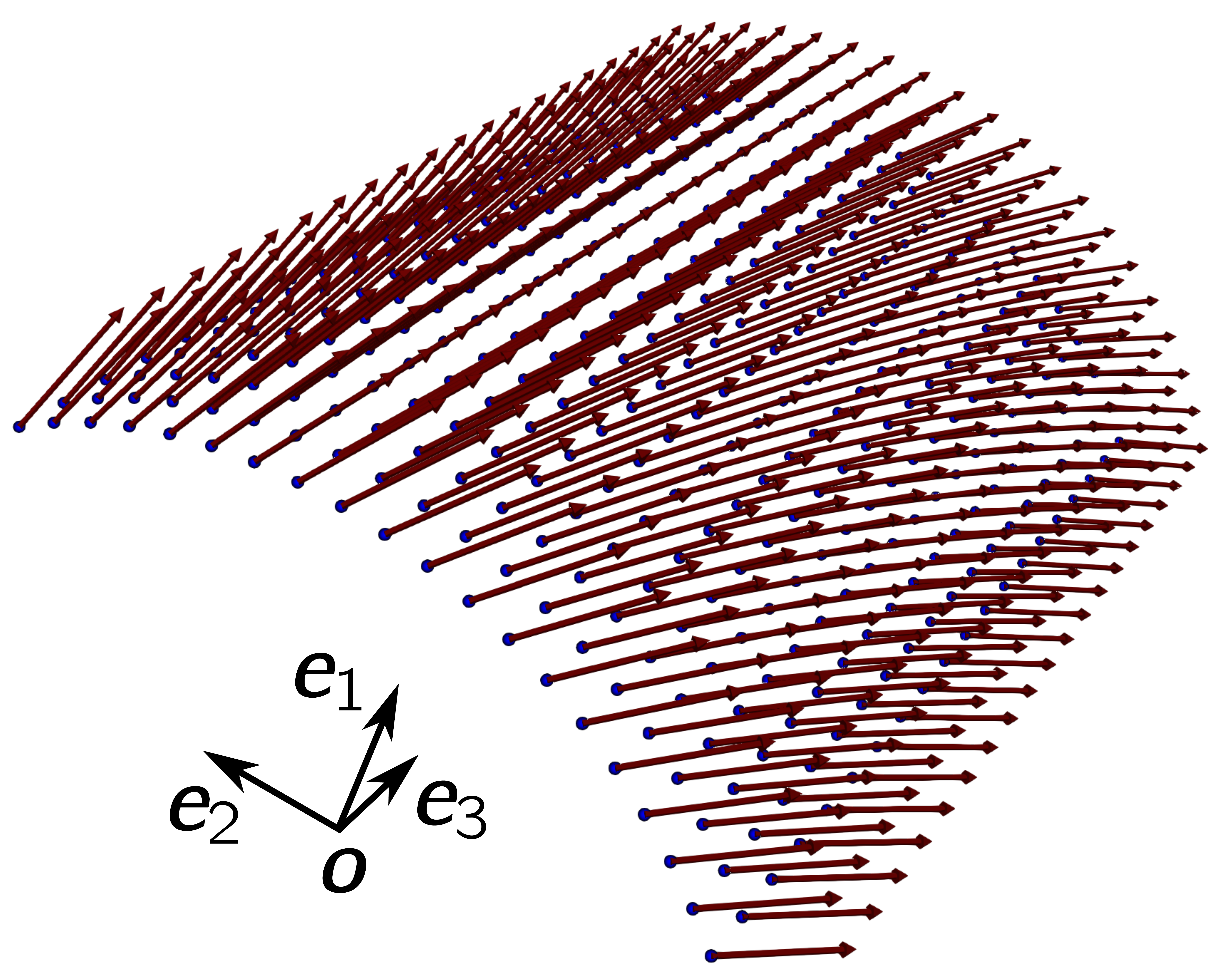}  
      \caption{}
    \end{subfigure}
    \begin{subfigure}[t]{.45\textwidth}
      \centering
      \includegraphics[width=.75\linewidth]{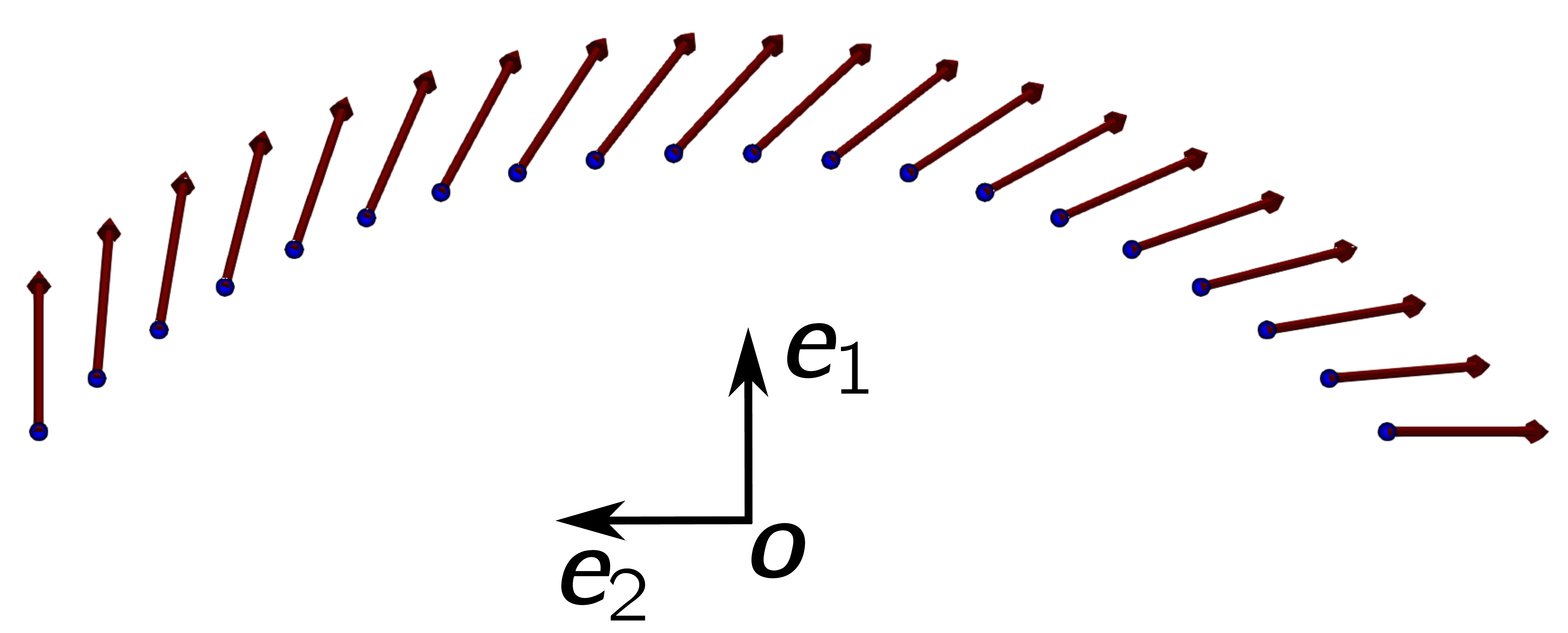}  
      \caption{}
    \end{subfigure}
    \caption{Discrete dipole moments on a nanofilm with uniform bending curvature. (a) and (b) show the view from different perspectives.}
    \label{fig:film}
    \end{figure}

	\subsubsection{Electrostatic Energy}\label{sss:energyFilm}
	As before, the energy associated to the system of dipole moments $\bd_\lambda$, for $\lambda >0$, is given by
	\begin{equation*}
		E_\lambda = -\frac{1}{2} \sum_{\substack{\bs, \bs' \in \calL_\lambda, \\
				\bs \neq \bs'}} \bd_\lambda(\bs) \cdot \bK(\bxbar(\bs') - \bxbar(\bs)) \bd_\lambda(\bs') = |(-\lambda/2, \lambda/2)| \hat{e}_\lambda, 
	\end{equation*}
	where $|(-\lambda/2, \lambda/2)| = \lambda$ is the thickness of the lattice in normal direction, and $\hat{e}_\lambda$ is the energy per unit length given by
	\begin{equation}\label{eq:filmEnergy}
		\hat{e}_\lambda = -\frac{1}{2\lambda} \sum_{\substack{\bs, \bs' \in \calL_\lambda, \\
				\bs \neq \bs'}} \bd_\lambda(\bs) \cdot \bK(\bxbar(\bs') - \bxbar(\bs)) \bd_\lambda(\bs') .
	\end{equation}
	For convenience, we normalize $\hat{e}_\lambda$ by $\calR\theta_l$, where $\calR\theta_l$ is independent of $\lambda$ and gives the size of the original lattice in the angular direction. We let
	\begin{equation}\label{eq:filmEnergyNormalized}
	    e_\lambda = \frac{\hat{e}_\lambda}{\calR \theta_l} \Rightarrow E_\lambda = \lambda (\calR\theta_l) e_\lambda .
	\end{equation}
	Substituting \eqref{eq:filmBgDipoleScale} and proceeding similar to the case of the helix, we can express $e_\lambda$ as
	\begin{equation}\label{eq:filmEnergyTLambda}
		e_\lambda = \Ltwodot{\bff_\lambda}{T_\lambda \bff_\lambda}_{L^2(\calS, \bbR^3)} , 
	\end{equation}
	where $T_\lambda \colon L^2(\calS,\bbR^3) \to L^2(\calS,\bbR^3)$ is the map defined as
	\begin{equation}\label{eq:filmTLambda}
		(T_\lambda \bff) (\bs) = \frac{\calR}{\theta_l}\lambda \int_{\calS} \bK_\lambda (\bs',\bs) \bff(\bs') \dm \bs'
	\end{equation}
	and $\bK_\lambda(\bs', \bs)$, for $\bs, \bs' \in \calS$, is the discrete dipole field kernel given by
	\begin{equation}\label{eq:filmKLambda}
		\bK_{\lambda}(\bs', \bs) = \sum_{\substack{\bu, \bv\in \Llambda,\\
				\bu \neq \bv}} \chi_{U_\lambda(\bv)}(\bs') \bK(\bxbar(\bv) - \bxbar(\bu)) \chi_{U_\lambda(\bu)}(\bs) .
	\end{equation}
	
	\paragraph{Scaling of  $\bK_\lambda$} 
	As in the case of the helix, it is convenient to first rescale the lattice $\calL_\lambda$ such that the lattice cell size is independent of $\lambda$ after rescaling, and define a new map on the rescaled lattice. This is considered next.
	
	Let $\calS_{1,\lambda} = (-\thetab/\lambda, \thetab/\lambda) \times \bbR$ so that $\bs\in S$ implies $\bs/\lambda \in \calS_{1,\lambda}$. We define a rescaled lattice $\calL_{1,\lambda}$ such that $\bs \in \calL_\lambda$ implies $\bs/\lambda \in \calL_{1,\lambda}$. It is given by
	\begin{equation}\label{eq:filmL1Lambda}
		\calL_{1,\lambda} = \{\bs \in \calS_{1,\lambda};\, s_1=i\theta_l, \, s_2 = j, \, i,j \in \bbZ \} = \left((-\thetab/\lambda, \thetab/\lambda)\cap \theta_l\bbZ\right)  \times \bbZ .
	\end{equation}
	The lattice cell for $\bs \in \calL_{1,\lambda}$ is given by $U_1(\bs)$, where $U_1(\bs)$ is defined in \eqref{eq:latSitesL}  (using $\lambda = 1$ in $U_\lambda$).  For $\ba, \bb \in \calS_{1,\lambda}$, we have
	\begin{equation}\label{eq:filmRelativeVecLambda}
		\bxbar(\lambda \ba) - \bxbar(\lambda \bb) = \lambda \left(\bxbar(\ba) - \bxbar(\bb) + \bA_\lambda(\ba, \bb) \be_1\right),
	\end{equation}
	where
	\begin{equation}\label{eq:filmALambda}
		\bA_\lambda(\ba, \bb) = \frac{\calR}{\lambda} \left[\bQ(\lambda a_1) - \bQ(\lambda b_1) - \lambda \bQ(a_1) + \lambda \bQ(b_1) \right].
	\end{equation}
	Keeping in mind these definitions, for $\bu \in \calL_{1,\lambda}$, we also note
	\begin{equation}
	    \begin{split}\label{eq:filmChiProp}
    		\chi_{U_\lambda(\lambda\bu)}(\bs) &= \begin{cases}
    			1 \qquad \text{if } \bs \in U_\lambda(\lambda \bu), \\
    			0 \qquad \text{otherwise}
    		\end{cases} = \quad \begin{cases}
    			1 \qquad \text{if } \bs /\lambda\in U_1(\bu), \\
    			0 \qquad \text{otherwise}
    		\end{cases} = \chi_{U_1(\bu)}(\bs/\lambda).
	    \end{split}
    \end{equation}
	Using the above relation and \eqref{eq:filmRelativeVecLambda}, we can show, for any $\bs,\bs' \in \calS$,
	\begin{equation*}
		\bK_\lambda(\bs', \bs) = \frac{1}{\lambda^3} \sum_{\substack{\bu,\bv \in \calL_{1,\lambda}, \\
				\bu \neq \bv}} \chi_{U_1(\bv)}(\bs'/\lambda) \bK(\bxbar(\bv) - \bxbar(\bu) + \bA_\lambda(\bv, \bu)\be_1) \chi_{U_{1}(\bu)}(\bs/\lambda) .
	\end{equation*}
	If we introduce the discrete dipole field kernel $\bK_{1,\lambda}(\bs',\bs)$, for $\bs,\bs'\in \calS_{1,\lambda}$, defined on $\calL_{1,\lambda}$ as:
	\begin{equation}\label{eq:filmK1Lambda}
		\bK_{1,\lambda}(\bs', \bs) = \sum_{\substack{\bu,\bv \in \calL_{1,\lambda}, \\
				\bu \neq \bv}} \chi_{U_1(\bv)}(\bs')  \bK(\bxbar(\bv) - \bxbar(\bu) + \bA_\lambda(\bv, \bu)\be_1) \chi_{U_{1}(\bu)}(\bs) ,
	\end{equation}
	we have shown that:
	\begin{equation*}
		\bK_\lambda(\bs', \bs) = \frac{1}{\lambda^3} \bK_{1,\lambda}(\bs'/\lambda, \bs/\lambda), \qquad \forall \bs,\bs' \in \calS.
	\end{equation*}
	
	\subsubsection{Limit of the Electrostatic Energy}
	
	In this section, we obtain the limit of the energy per unit length $e_\lambda$. The broad strategy is similar to the helical nanotube. We first show that the map $T_\lambda$ is bounded and obtain its limit. The continuum limit of the energy density $e_\lambda$ then follows easily.
	
	\paragraph{Limit of the Discrete Electric Field}
	Let $T_{1,\lambda}\colon L^2(\calS_{1,\lambda}, \bbR^3) \to L^2(\calS_{1,\lambda}, \bbR^3)$ be the map with kernel $\bK_{1,\lambda}$. For any function $\bff \in L^2(\calS_{1,\lambda}, \bbR^3)$, we have
	\begin{equation}\label{eq:filmT1Lambda}
		(T_{1,\lambda}\bff)(\bs) = \frac{\calR}{\theta_l}\int_{\calS_{1,\lambda}} \bK_{1,\lambda}(\bs', \bs) \bff(\bs') \dm \bs', \qquad \forall \bs \in \calS_{1,\lambda}.
	\end{equation}
	
	Let $\bH_\lambda = \bH_\lambda(\bs)$ be the zeroth order moment (with respect to the first argument) of kernel $\bK_{\lambda}$ given by
	\begin{equation}\label{eq:filmHLambda}
		\bH_{\lambda}(\bs) = \frac{\calR\lambda}{\theta_l} \int_{\bs' \in \calS} \bK_{\lambda}(\bs', \bs) \dm\bs', \quad \forall \bs \in \calS .
	\end{equation}
	
	We now state the limit result of $T_\lambda$.
	
	\begin{proposition}
		\label{prop:filmTLambda}
		Suppose $0 < \theta < \pi/4$. The maps $T_{1,\lambda}$ and $T_{\lambda}$ are bounded in $L^2$ for all $\lambda > 0$ and satisfy
		\begin{equation*}
			\norm{T_\lambda}_{\mathcal{L}(L^2(\calS, \bbR^3),L^2(\calS,\bbR^3))} = \norm{T_{1,\lambda}}_{\mathcal{L}(L^2(\calS_{1,\lambda}, \bbR^3),L^2(\calS_{1,\lambda},\bbR^3))} .
		\end{equation*}
		Further, for $\bff\in \CtestOneD$,
		\begin{equation*}
			(T_\lambda \bff)(\bs) \xrightarrow[\lambda \to 0]{} \bH_0(\bs)\bff(\bs),
		\end{equation*}
		pointwise, where $\bH_0(\bs)$, for $\bs \in \calS$, is given by
		\begin{equation}\label{eq:thinFilmH0Def}
			\bH_0(\bs) = \lim_{\lambda \to 0} \bH_\lambda(\bs) =  \calR \sum_{\substack{\bu = (u_1,u_2) \in \theta_l\bbZ \times \bbZ, \\ \bu \neq \bzero}}  \bK\left(u_1 \bt_1(\bs) + u_2 \bt_2(\bs) \right)
		\end{equation}
		and $\bt_i(\bs) = \frac{\dmnosp \bxbar(\bs)}{\dmnosp s_i}$, $i=1,2$, are the tangent vectors to the film.
	\end{proposition}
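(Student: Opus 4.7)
The plan is to mirror the three-stage strategy used for \sref{Proposition}{prop:helixTLambda}, adapting it to the two-dimensional lattice $\calL_{1,\lambda}$ parametrized by $(\theta_l \bbZ) \times \bbZ$ inside the window $(-\thetab/\lambda, \thetab/\lambda) \times \bbR$. I will establish in turn: (i) uniform-in-$\lambda$ boundedness of $T_{1,\lambda}$ on $L^2(\calS_{1,\lambda}, \bbR^3)$ via a Schur-type estimate on the discrete kernel $\bK_{1,\lambda}$; (ii) the norm identity $\norm{T_\lambda}_{\calL(L^2,L^2)} = \norm{T_{1,\lambda}}_{\calL(L^2,L^2)}$ via the change of variables $\bs \mapsto \lambda \bs$; and (iii) pointwise convergence of $(T_\lambda \bff)(\bs)$ to $\bH_0(\bs)\bff(\bs)$ for $\bff \in C^\infty_0(\calS, \bbR^3)$, using the density of test functions to then transfer the result through the $L^2$ limit.

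For (i), the angular hypothesis $\theta < \pi/4$ prevents the image of $\bxbar$ from wrapping around its axis, so there is a constant $c > 0$ independent of $\lambda$ with $|\bxbar(\bv) - \bxbar(\bu) + \bA_\lambda(\bv,\bu)\be_1| \geq c |\bv - \bu|$ for distinct $\bv,\bu \in \calL_{1,\lambda}$; this uses that $\bA_\lambda$ defined in \eqref{eq:filmALambda} is a second-order correction controlled uniformly by a Taylor expansion of $\bQ$. Combined with the $|\bx|^{-3}$ bound on $\bK$, row and column sums of $\bK_{1,\lambda}$ are dominated by $\sum_{\br \in (\theta_l \bbZ \times \bbZ) \setminus \bzero} |\br|^{-3}$, and since the count of sites at radial parametric distance $\sim r$ grows like $r$, the resulting $\sum_r r \cdot r^{-3} = \sum_r r^{-2}$ converges absolutely, giving a Schur bound uniform in $\lambda$. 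For (ii), substituting $\bs' = \lambda\bxi'$, $\bs = \lambda \bxi$ in \eqref{eq:filmTLambda} and using $\bK_\lambda(\bs',\bs) = \lambda^{-3} \bK_{1,\lambda}(\bs'/\lambda, \bs/\lambda)$ produces Jacobian factors $\lambda^2 \cdot \lambda^{-3} \cdot \lambda = 1$, so $(T_\lambda \bff)(\lambda \bxi) = (T_{1,\lambda} \tilde{\bff})(\bxi)$ for $\tilde{\bff}(\bxi) := \bff(\lambda \bxi)$, and an $L^2$ rescaling identity yields the norm equality.

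For (iii), fix $\bs \in \calS$, let $\bu_\lambda^*(\bs) \in \calL_\lambda$ be the unique site whose cell contains $\bs$, and collapse the $\bu$-sum in the definition of $\bK_\lambda$ to this single term. On the shrinking cells $U_\lambda(\bv)$ I would replace $\bff(\bs')$ by $\bff(\bs)$ using uniform continuity, incurring an $O(\lambda)$ error supported where $\bff$ is nonzero. Setting $\bw := \bv - \bu_\lambda^*/\lambda$ in the rescaled lattice, a first-order Taylor expansion gives
\begin{equation*}
    \bxbar(\bv) - \bxbar(\bu_\lambda^*/\lambda) + \bA_\lambda(\bv,\bu_\lambda^*/\lambda)\be_1 \xrightarrow[\lambda \to 0]{} w_1 \bt_1(\bs) + w_2 \bt_2(\bs)
\end{equation*}
for each fixed $\bw$, since $\bA_\lambda$ is constructed precisely to absorb the nonlinear deviation of $\bxbar$ from its tangent plane at $\bu_\lambda^*/\lambda$; the quadratic Taylor remainder of $\bQ(\lambda a_1)$ about $a_1 = 0$ is the controlling term and vanishes as $\lambda \to 0$ for fixed $\bw$. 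Summing over $\bw \in (\theta_l \bbZ \times \bbZ) \setminus \bzero$ then reproduces expression \eqref{eq:thinFilmH0Def} once one includes the prefactor $\calR/\theta_l$ and the unit-cell area $\theta_l$ arising from the characteristic-function integrations.

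The principal technical obstacle is justifying the exchange of the limit with the infinite lattice sum in step (iii): pointwise convergence of each summand is insufficient because the domain $\calS_{1,\lambda}$ expands to fill $\bbR^2$ as $\lambda \to 0$. I would invoke dominated convergence using the uniform lower bound from step (i), which gives a $\lambda$-independent majorant $C |\bw|^{-3}$ for each summand, summable as $\sum_r r^{-2}$ on the 2-d lattice. The tail over $|\bw| > R$ contributes at most $C \sum_{|\br| > R} |\br|^{-2}$ uniformly in $\lambda$, reducing the problem to a finite sum where term-wise convergence suffices. A subsidiary check that $\bH_\lambda(\bs) \to \bH_0(\bs)$ in the same sense follows by applying the identical dominated convergence argument to the constant test function, yielding the stated formula for $\bH_0(\bs)$ and closing the proof.
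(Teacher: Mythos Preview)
Your overall plan follows the paper's: \sref{Lemma}{lem:filmEstimates}(2) provides exactly the lower bound $|\bxbar(\bv)-\bxbar(\bu)+\bA_\lambda(\bv,\bu)\be_1|\geq c_A|\bv-\bu|$ you posit in step (i), the isometry $S_\lambda(\bff)(\bs)=\lambda\bff(\lambda\bs)$ gives step (ii), and the paper also decomposes $(T_\lambda\bff)(\bs)=\bH_\lambda(\bs)\bff(\bs)+\text{remainder}$ in step (iii). Your dominated-convergence treatment of $\bH_\lambda\to\bH_0$ is in fact cleaner than the paper's route: the paper linearizes $\bQ(u_1-a_1)-\bI\approx \bQ'(0)(u_1-a_1)$, splits $\bar{\bH}_\lambda=\bar{\bH}_\lambda^{(1)}+\bar{\bH}_\lambda^{(2)}$, and spends considerable effort establishing delicate lower bounds on an interpolated argument $|\by(r)+\bz|$ to show $\bar{\bH}_\lambda^{(2)}\to 0$ (this is where the restriction $\thetab<\pi/4$ enters, via $\cos(2\thetab)>0$). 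Your uniform majorant $C|\bw|^{-3}$ from step (i) is summable over the two-dimensional lattice and $\lambda$-independent, so discrete dominated convergence on the expanding angular window would dispose of the limit in one stroke.

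There is, however, a genuine gap in your remainder estimate. The claim that replacing $\bff(\bs')$ by $\bff(\bs)$ costs ``$O(\lambda)$'' via uniform continuity fails: combining $|\bff(\bs')-\bff(\bs)|\leq L|\bs'-\bs|$ with the kernel bound $|\bK_\lambda(\bs',\bs)|\leq\lambda^{-3}C_1/(C_2+|\bs'-\bs|^3/\lambda^3)$ and the substitution $\bt=(\bs'-\bs)/\lambda$ yields a bound proportional to $\lambda\int_{\bbR^2}|\bt|/(C_2+|\bt|^3)\,\dmnosp\bt$, and the radial integral $\int_0^\infty r^2/(C_2+r^3)\,\dmnosp r$ diverges logarithmically. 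The compact support of $\bff$ does not save this, since outside $\mathrm{supp}\,\bff$ the integrand is $|\bK_\lambda(\bs',\bs)||\bff(\bs)|\neq 0$. The paper repairs this by splitting at radius $\bar R\lambda$ for an auxiliary parameter $\bar R$: on $\{|\bs'-\bs|\leq\bar R\lambda\}$ the Lipschitz bound gives a contribution of order $\bar R^3\lambda$; on the complement one uses instead the crude estimate $|\bff(\bs')-\bff(\bs)|\leq 2\sup|\bff|$ to obtain $\int_{|\bt|\geq\bar R}C_1/(C_2+|\bt|^3)\,\dmnosp\bt$, which is finite in two dimensions and vanishes as $\bar R\to\infty$. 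Sending $\lambda\to 0$ first and then $\bar R\to\infty$ closes the argument. Without this two-scale split your remainder is not controlled.
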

	
	We provide the proof of \sref{Proposition}{prop:filmTLambda} in \autoref{ss:filmProof}. Based on the proposition above, we state the main result for the thin film.
	
	\paragraph{Limit of the Energy}
	
	\begin{theorem}
		\label{thm:filmELambda}
		Let $\bff_\lambda \in L^2(\calS, \bbR^3)$ be a sequence of functions for $\lambda > 0$ with $\bff \in L^2(\calS, \bbR^3)$ such that $\bff_\lambda \to \bff$ in $L^2(\calS, \bbR^3)$. Let the system of dipole moments $\bdlambda\colon \calL_\lambda \to \bbR^3$ be given by \eqref{eq:filmBgDipoleScale}. Let $e_\lambda$, given by \eqref{eq:filmEnergyTLambda}, be the energy per unit length normalized by $\calR \theta_l$. Then
		\begin{equation*}
			e_\lambda \xrightarrow[\lambda \to 0]{} -\dfrac{1}{2} \Ltwodot{\bff}{\bH_0\bff}_{ L^2(\calS, \bbR^3)} ,
		\end{equation*}
		where $\bH_0 = \bH_0(\bs)$ is defined in \sref{Proposition}{prop:filmTLambda}, see \eqref{eq:thinFilmH0Def}. 
	\end{theorem}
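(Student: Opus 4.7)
The plan is to mirror the proof of \sref{Theorem}{thm:helixELambda}, now working on the domain $L^2(\calS, \bbR^3)$ rather than $L^2(\bbR, \bbR^3)$, with the film-specific operator $\bH_0$ given in \eqref{eq:thinFilmH0Def}. Starting from $e_\lambda = -\tfrac{1}{2}\Ltwodot{\bff_\lambda}{T_\lambda \bff_\lambda}_{L^2(\calS,\bbR^3)}$, I would expand
\begin{equation*}
\Ltwodot{\bff_\lambda}{T_\lambda \bff_\lambda} = \Ltwodot{\bff_\lambda - \bff}{T_\lambda \bff_\lambda} + \Ltwodot{\bff}{T_\lambda \bff} + \Ltwodot{\bff}{T_\lambda(\bff_\lambda - \bff)}.
\end{equation*}
By \sref{Proposition}{prop:filmTLambda}, $T_\lambda$ is a bounded operator on $L^2(\calS, \bbR^3)$ uniformly in $\lambda$ (with norm equal to that of $T_{1,\lambda}$). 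Combined with $\bff_\lambda \to \bff$ in $L^2$, the Cauchy--Schwarz inequality forces the first and third terms to vanish as $\lambda \to 0$, exactly as in the helix argument.

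It remains to compute $\lim_{\lambda \to 0}\Ltwodot{\bff}{T_\lambda \bff}$. I would fix a density sequence $\bff^k \in C^\infty_0(\calS, \bbR^3)$ with $\bff^k \to \bff$ in $L^2(\calS, \bbR^3)$, and split
\begin{equation*}
\Ltwodot{\bff}{T_\lambda \bff} = \Ltwodot{\bff}{T_\lambda \bff^k} + \Ltwodot{\bff}{T_\lambda (\bff - \bff^k)}.
\end{equation*}
The second piece is bounded by $\norm{\bff}\cdot\norm{T_\lambda}_{op}\cdot\norm{\bff - \bff^k}$, so it can be made arbitrarily small uniformly in $\lambda$ by choosing $k$ large. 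For the first piece, \sref{Proposition}{prop:filmTLambda} supplies the pointwise convergence $(T_\lambda \bff^k)(\bs) \to \bH_0(\bs)\bff^k(\bs)$ for each $\bs \in \calS$. A dominated-convergence argument, using that $\bff^k$ is compactly supported and that $\norm{T_\lambda \bff^k}_{L^2}$ is bounded uniformly in $\lambda$, would upgrade this to
\begin{equation*}
\Ltwodot{\bff}{T_\lambda \bff^k} \xrightarrow[\lambda \to 0]{} \Ltwodot{\bff}{\bH_0 \bff^k}.
\end{equation*}
Letting $k \to \infty$ and using that $\bH_0$ acts as a bounded multiplicative operator on $L^2(\calS, \bbR^3)$ (its defining lattice sum converges by the 2-d heuristic of \autoref{s:motivation}) yields $\Ltwodot{\bff}{\bH_0 \bff^k} \to \Ltwodot{\bff}{\bH_0 \bff}$. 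Reassembling the pieces produces the stated limit $e_\lambda \to -\tfrac{1}{2}\Ltwodot{\bff}{\bH_0 \bff}_{L^2(\calS,\bbR^3)}$.

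The main obstacle I anticipate is justifying the dominated-convergence step cleanly, because the dipole kernel $\bK$ carries a $|\bx|^{-3}$ singularity: even though $\bff^k$ is smooth, the discrete kernel $\bK_\lambda$ in $(T_\lambda \bff^k)(\bs)$ becomes singular near the diagonal as $\lambda \to 0$, and producing an integrable $\lambda$-uniform envelope requires the same subtraction/cancellation structure already developed in the proof of \sref{Proposition}{prop:filmTLambda}. Beyond this, everything else transfers essentially verbatim from the helix argument, the only bookkeeping changes being the replacement of the 1-d parametric line by the cylindrical strip $\calS = (-\thetab,\thetab) \times \bbR$ and the use of the film normalization factor $\calR\theta_l$ absorbed into the definition of $e_\lambda$.
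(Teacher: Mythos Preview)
Your proposal is correct and follows essentially the same approach as the paper, which simply states that the proof of \autoref{thm:filmELambda} follows from the proof of \autoref{thm:helixELambda} using \sref{Proposition}{prop:filmTLambda} in place of \sref{Proposition}{prop:helixTLambda}. Your decomposition of $\Ltwodot{\bff_\lambda}{T_\lambda\bff_\lambda}$ and the density argument via $\bff^k\in C^\infty_0(\calS,\bbR^3)$ reproduce exactly the helix argument, and the dominated-convergence concern you flag is the one genuinely nontrivial step (which the paper handles, somewhat implicitly, through the uniform kernel bound \eqref{eq:filmK1LambdaBddShow} established in the proof of \sref{Proposition}{prop:filmTLambda}).
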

	
	The proof of \autoref{thm:filmELambda} follows from the proof of \autoref{thm:helixELambda} and using \sref{Proposition}{prop:filmTLambda}. 
	
\begin{remark}
	Note that, for $\bs\in \calS$,
	\begin{equation}
		\bH_0(\bs) =  \bQ(s_1) \bH_0(\bzero) \bQ(-s_1) .
	\end{equation}
	Thus, if the limiting dipole moment field $\bff$ is uniform in the $\be_3$ direction, the electric field $\bH_0(\bs)\bff(\bs)$ will be independent of the $\be_3$-coordinate. It is easy to see from the expression of $\bH_0$ that both the normal component and the tangential components of the dipole field contribute to the electric field and energy. 
	This is in contrast to \cite{gioia1997micromagnetics}, where the thin film limit of the magnetostatic energy, obtained from dimensional reduction starting from a 3-d continuum, has contributions only from the normal component.
	    However, it is consistent with the result for the helical nanostructure studied in this paper, in that the components of the dipole field aligned with the thin direction contribute to the limit energy.
	    As we argued there, the dimensional reduction from a 3-d continuum model is appropriate for thin films that have thickness that is much larger than the atomic lengthscale, whereas the discrete-to-continuum approach is appropriate for nanostructures that have thickness that is comparable to the atomic lengthscale.
\end{remark}

	\section{Proof of Assertions}\label{s:proof}
	
	\subsection{Helical Nanotube}\label{ss:helixProof}
	
	In this section, we prove \sref{Proposition}{prop:helixTLambda}. First, we collect some important results, and then show that $T_\lambda$ is bounded and extends from $\bff\in C^\infty_0(\bbR, \bbR^3)$ to $L^2(\bbR, \bbR^3)$. We then obtain the limit of the map $T_\lambda$. 
	
	\begin{lemma}
		
		\begin{itemize}
			\item[1.] For any $a, b \in \bbR$, 
			\begin{equation}\label{eq:rotIdent1}
				\bar{\bx}(b) - \bar{\bx}(a) = \bQ(a\theta) [(\bQ((b - a)\theta) - \bI) \be_1 + \delta (b - a)\be_3] ,
			\end{equation}
			where $\bxbar$ is the map \eqref{eq:helixMap}, $\bQ$ is the rotational tensor \eqref{eq:rotTensorQ}, $\theta$ and $\delta$ define the helix.
			
			\item[2.] For any $\theta \in (0, \pi)$,
			\begin{equation}\label{eq:helixMapIdent1}
				\delta \leq \min_{a,b\in \calL_1, a\neq b} |\bar{\bx}(b) - \bar{\bx}(a)| ,
			\end{equation}
			where $\calL_1$ is $\calL_\lambda= \lambda\bbZ$ for $\lambda = 1$.
			
			\item[3.] For any $a,b\in \calL_1$ and $\lambda>0$,
			\begin{equation}\label{eq:helixMapIdent2}
				\delta |a - b| \leq  |\bar{\bx}(a) - \bar{\bx}(b) + \bA_\lambda(a,b)\be_1| ,
			\end{equation}
			where $\bA_\lambda(a,b)$ is given by
			\begin{equation*}
				\bA_\lambda(a,b) = \frac{\bQ(\lambda a\theta) - \bQ(\lambda b\theta) - (\lambda \bQ(a\theta) - \lambda \bQ(b\theta))}{\lambda} .
			\end{equation*}
			
			\item[4.] For any $s,s'\in \bbR$ such that $|s-s'| \geq 1$, suppose $a, b\in \calL_1$ are such that $s\in [a, a+1), s' \in [b, b+1)$, then
			\begin{equation}\label{eq:helixLatticeCellIdent1}
				\frac{|s - s'|}{|a - b|} < 3 .
			\end{equation}
		\end{itemize}
		
	\end{lemma}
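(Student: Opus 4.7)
The plan is to establish the four items in the order they appear, since each relies mostly on the explicit form of $\bxbar$ from \eqref{eq:helixMap} and on the fact that $\bQ(\cdot)$ is a rotation about the fixed axis $\be_3$. The reasoning is essentially computational throughout, and I expect no significant obstacle; the only care needed is to track the orthogonal decomposition into the axial ($\be_3$) and transverse ($\be_1$--$\be_2$ plane) parts.

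For item 1, I would start from the definition $\bxbar(s) = \bQ(s\theta)\be_1 + s\delta\be_3$, subtract to get $\bxbar(b)-\bxbar(a) = (\bQ(b\theta)-\bQ(a\theta))\be_1 + \delta(b-a)\be_3$, and then exploit the group property $\bQ(b\theta) = \bQ(a\theta)\bQ((b-a)\theta)$ which is valid because both rotations share the axis $\be_3$. This lets me factor $\bQ(a\theta)$ out of the transverse part. Since $\bQ(a\theta)\be_3 = \be_3$, the axial term can be pulled into the same factor, yielding the claimed identity \eqref{eq:rotIdent1}.

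For item 2, I would apply item 1 and use the fact that $\bQ(a\theta)$ is an isometry, so that the task reduces to estimating $|(\bQ((b-a)\theta)-\bI)\be_1 + \delta(b-a)\be_3|$. The first summand lies in the $\be_1$--$\be_2$ plane and the second is along $\be_3$, hence the two are orthogonal; Pythagoras gives a norm bounded below by $\delta|b-a|$. Because $a,b \in \calL_1 = \bbZ$ with $a \neq b$, $|b-a|\geq 1$, yielding the bound. For item 3, the argument is similar but requires one small cancellation: substituting the definition of $\bA_\lambda(a,b)$ into $\bxbar(a)-\bxbar(b)+\bA_\lambda(a,b)\be_1$, the terms $[\bQ(a\theta)-\bQ(b\theta)]\be_1$ coming from $\bxbar$ cancel against the corresponding terms in $\bA_\lambda$, leaving $\delta(a-b)\be_3 + \lambda^{-1}[\bQ(\lambda a\theta)-\bQ(\lambda b\theta)]\be_1$. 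As before, the first is axial and the second is transverse, so Pythagoras gives a norm of at least $\delta|a-b|$.

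For item 4, the argument is purely combinatorial. The hypothesis $|s-s'|\geq 1$ combined with $s\in [a,a+1)$, $s'\in [b,b+1)$ and $a,b\in \bbZ$ forces $a\neq b$, since otherwise $|s-s'|<1$; hence $|a-b|\geq 1$. Without loss of generality $b>a$, and the interval containments give $s\geq a$ and $s'<b+1$, so $|s-s'|<(b-a)+1 = |a-b|+1\leq 2|a-b|$. This gives the stated bound of $3$ with room to spare, so no finer analysis is needed.
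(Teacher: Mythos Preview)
Your proposal is correct and follows essentially the same approach as the paper: items 1--3 use the group property of $\bQ$, the invariance of $\be_3$, and the orthogonal split into axial and transverse parts exactly as in the original, while for item 4 both arguments reduce to the interval constraints and $|a-b|\geq 1$. The only cosmetic difference is that your item 4 bound $|s-s'|<|a-b|+1\leq 2|a-b|$ is slightly sharper than the paper's $1+2/|a-b|\leq 3$, but the underlying reasoning is the same.
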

	
\begin{proof}
    \begin{enumerate}

        \item 
		For any $\alpha, \beta \in \bbR$, we have the identities
		\begin{equation}\label{eq:rotIdent}
			\bQ^T(\alpha) = \bQ(-\alpha), \qquad \bQ(\alpha)\bQ(\beta) = \bQ(\alpha + \beta), \qquad \bQ(\alpha)\be_3 = \be_3,
		\end{equation}
		where the last relation shows that $\be_3$ is the axis of $\bQ$. By noting the definition of $\bxbar$ in \eqref{eq:helixMap} and using the identities above, \eqref{eq:rotIdent1} follows.
		
		\item
		To show \eqref{eq:helixMapIdent1}, we use \eqref{eq:rotIdent1} to get
		\begin{equation*}
			|\bxbar(b) - \bxbar(a)|^2 = |(\bQ((b - a)\theta) - \bI) \be_1|^2 + \delta^2 |b - a|^2 \geq \delta^2 |b-a|^2 \geq \delta^2,
		\end{equation*}
		where we used the fact that $|b - a| \geq 1$ for $a, b\in \calL_1, a \neq b$.
		
		\item
		To show \eqref{eq:helixMapIdent2}, we substitute the definition of $\bA_\lambda$ to get
		\begin{equation}
			\bxbar(a) - \bxbar(b) + \bA_\lambda(a,b) \be_1 =  \frac{\bQ(\lambda a\theta ) - \bQ(\lambda b\theta)}{\lambda} \be_1 + (a-b)\delta \be_3 .
		\end{equation}
		Since $\bQ(\alpha)\be_1$ is orthogonal to $\be_3$ for any $\alpha$, we have
		\begin{equation*}
			|\bxbar(a) - \bxbar(b) + \bA_\lambda(a,b) \be_1| \geq \delta |a - b| .
		\end{equation*}
		
		\item
		To show \eqref{eq:helixLatticeCellIdent1}, we note that for $s,s' \in \bbR$ such that $|s-s'|\geq 1$ with $a,b\in \calL_1$ and $s\in [a, a+1), s'\in [b, b+1)$, we can write $s  = a + \Delta s$ and $s' = b + \Delta s'$ with $0\leq \Delta s, \Delta s'  < 1$. Thus
		\begin{equation*}
			\frac{|s - s'|}{|a - b|} = \frac{|a - b + (\Delta s - \Delta s')|}{|a - b|} \leq \frac{|a-b| + |\Delta s - \Delta s'|}{|a - b|} < 1 + \frac{ 2}{|a - b|} \leq 3,
		\end{equation*}
		where in the last step we used the fact that $|a-b|\geq 1$ for $a,b\in \calL_1, a\neq b$ (which is ensured when $|s-s'|\geq 1$).
	 
    \end{enumerate}
    This completes the proof.
\end{proof}

	\subsubsection{Boundedness}\label{sss:helixProofBdd}
	
	We next show that $T_\lambda$ is a bounded map.
	Let $S_\lambda \colon \LtwoOneD \to \LtwoOneD$ be an isometry defined as
	\begin{equation*}
		(S_\lambda \bff)(s) := \lambda^{1/2} \bff(\lambda s) .
	\end{equation*}
	It is easy to see that $||S_\lambda\bff||_{\LtwoOneD} = ||\bff||_{\LtwoOneD}$. The inverse of $S_\lambda$ is given by
	\begin{equation}\label{eq:SlamInvHelix}
		(S^{-1}_\lambda \bff) (s) = \lambda^{-1/2} \bff(s/\lambda) .
	\end{equation}
	
	Using $S_\lambda$, we can show -- noting the definition of $T_\lambda$ in \eqref{eq:helixTLambda} --  for $\bff\in \LtwoOneD$,
	\begin{equation}
	    \begin{split}
    		(T_\lambda \bff) (s) &= \lambda^2 \int_{\bbR} \bK_{\lambda}(s',s) \bff(s') \dm s' 
    		= \lambda^2 \int_{\bbR} \dfrac{1}{\lambda^3} \bK_{1,\lambda}(s'/\lambda, s/\lambda) \left(\lambda^{-1/2}(S_\lambda \bff)(s'/\lambda)\right) \dm s'  \\
    		&= \lambda^{-3/2} \int_{\bbR} \bK_{1,\lambda}(s',s/\lambda) (S_\lambda \bff) (s') \lambda \dm s' = \lambda^{-1/2} (T_{1,\lambda}(S_\lambda \bff))(s/\lambda)  \\
    		&= (S_\lambda^{-1} T_{1,\lambda} S_\lambda \bff)(s) ,
	    \end{split}
    \end{equation}
	where we used a change of variables and \eqref{eq:SlamInvHelix}. It follows from the above equation that
	\begin{equation}
	    \begin{split}\label{eq:helixTlamT1lamEquality}
    		\norm{T_\lambda}_{\mathcal{L}(L^2,L^2)} &= \sup_{\norm{\bff} \neq 0} \dfrac{\norm{T_{\lambda} \bff}_{\LtwoOneD}}{\norm{\bff}_{\LtwoOneD}} = \sup_{\norm{\bff} \neq 0} \dfrac{\norm{S_{\lambda}^{-1}(T_{1,\lambda} S_{\lambda} \bff)}_{\LtwoOneD}}{\norm{\bff}_{\LtwoOneD}} \\
    		&= \sup_{\norm{\bff} \neq 0} \dfrac{\norm{T_{1,\lambda} (S_{\lambda} \bff)}_{\LtwoOneD}}{\norm{\bff}_{\LtwoOneD}} = \sup_{\norm{S_{\lambda} \bff} \neq 0} \dfrac{\norm{T_{1,\lambda} (S_{\lambda} \bff)}_{\LtwoOneD}}{\norm{S_{\lambda} \bff}_{\LtwoOneD}}  \\
    		&= \norm{T_{1,\lambda}}_{\mathcal{L}(L^2,L^2)} ,
	    \end{split}
    \end{equation}
	    where we have used that $||\bff||_{L^2(\R, \R^3)} = ||S_\lambda \bff||_{L^2(\R, \R^3)}$.
	This completes the proof of \eqref{eq:helixTLambdaEquiv} in \sref{Proposition}{prop:helixTLambda}. Next, we show that $T_{1,\lambda}$ is a bounded map to prove the boundedness of $T_\lambda$. We first analyze the discrete dipole field kernel $\bK_{1,\lambda}$, which is defined as
	\begin{equation}\label{eq:helixK1LambdaRedef}
		\bK_{1,\lambda}(s',s) = \sum_{\substack{u,v\in \calL_1,\\
				u \neq v}} \chi_{U_1(v)}(s') \bK(\bxbar(v) - \bxbar(u) + \bA_{\lambda}(v,u) \be_1 ) \chi_{U_1(u)}(s) ,
	\end{equation}
	where $U_\lambda(s) = [s, s+\lambda)$ for $s\in \calL_\lambda$, and $\bA_\lambda(a,b)$ is given by \eqref{eq:helixALambda}. 
	
	Consider some typical $s,s'\in \bbR$ and the corresponding $a,b\in \calL_1$ such that $s\in [a,a+1), s'\in [b, b+1)$. From \eqref{eq:helixK1LambdaRedef}, we have, for all $s,s'\in \bbR$ such that $|s-s'| < 1$,
	\begin{itemize}
		\item If $a = b$, then $\bK_{1,\lambda}(s,s') = \bzero$.
		\item If $a\neq b$, then from \eqref{eq:helixMapIdent1}, we have $$
			|\bK_{1,\lambda}(s,s')| \leq \sqrt{6}/(4\pi\delta^3)$$ using $|\bA\ba| \leq |\bA| \, |\ba|$ and $|\bI - 3(\bx/|\bx|) \dyad (\bx/|\bx|)| \leq \sqrt{6}$, $\forall \bx \neq \bzero$ .
	\end{itemize}
	Combining the two cases above, $|\bK_{1,\lambda}(s,s')| \leq \sqrt{6}/(4\pi\delta^3)$.
	
	We now consider the case when $|s - s'| \geq 1$. Noting that for this case, $a\neq b$. We proceed as follows
	\begin{equation}
	    \begin{split}
    		|\bK_{1,\lambda}(s,s')| &\leq \dfrac{\sqrt{6}}{4\pi|s-s'|^3} \dfrac{|s-s'|^3}{|a-b|^3} \dfrac{|a-b|^3}{|\bxbar(a) - \bxbar(b) + \bA_{\lambda}(a,b) \be_1|^3}  \\
    		&\leq  \dfrac{\sqrt{6}}{4\pi|s-s'|^3}  3^3 \dfrac{|a-b|^3}{\delta^3 |a - b|^3} = \frac{3^3 \sqrt{6}}{4\pi\delta^3} \frac{1}{|s-s'|^3},
	    \end{split}
    \end{equation}
	where we used the bounds \eqref{eq:helixMapIdent2} and \eqref{eq:helixLatticeCellIdent1}. Combining  the above bound for $|s-s'|\geq 1$ with the bound for $|s-s'| <1$, and renaming the constants, we can write
	\begin{equation}\label{eq:helixBddK2}
		|\bK_{1,\lambda}(s,s')| \leq  \dfrac{C_1}{C_2 + |s-s'|^3} .
	\end{equation}
	
    	Next, note that, since the kernel $\bK_{1,\lambda}$ satisfies \eqref{eq:helixBddK2}, we have
    	\begin{equation}\label{eq:helixK1lamIntegBd}
    	    \int_{\bbR} |\bK_{1,\lambda}(s', s)| \dm s' \leq C_3, \quad \int_{\bbR} |\bK_{1,\lambda}(s', s)| \dm s \leq C_3,
    	\end{equation}
    	for some fixed $C_3 < \infty$ independent of $\lambda$. Using the above bound, we can show that, for all  $\bff\in C^\infty_0(\bbR, \bbR^3)$,
    	\begin{equation}\label{eq:helixT1lambdaBdd}
    	    ||T_{1,\lambda}\bff||_{L^2(\bbR, \bbR^3)} \leq C_3 ||\bff||_{L^2(\bbR, \bbR^3)},
    	\end{equation}
    	which establishes that $T_{1,\lambda}$ is a bounded linear map on $C^\infty_0(\bbR, \bbR^3)$. Since $C^\infty_0(\bbR, \bbR^3)$ is dense in $L^2(\bbR, \bbR^3)$, it follows that $T_{1, \lambda}$ is also bounded in $L^2(\bbR, \bbR^3)$, and extends as a bounded linear map from $C^\infty_0(\bbR, \bbR^3)$ to $L^2(\bbR, \bbR^3)$. This argument together with \eqref{eq:helixTlamT1lamEquality} completes the proof of boundedness of maps $T_\lambda$ and $T_{1,\lambda}$. 
    	Now, it remains to show \eqref{eq:helixT1lambdaBdd} for $\bff\in C^\infty_0(\bbR, \bbR^3)$. Let $\bff\in C^\infty_0(\bbR, \bbR^3)$ and proceed as follows:
    	\begin{equation}\label{eq:helixT1lambdaBddPf}
        	\begin{split}
        	    ||T_{1,\lambda}\bff||_{L^2(\bbR, \bbR^3)}^2 &= \int_{\R} \levert (T_{1,\lambda} \bff)(s) \rivert^2 \dm s = \int_{\R} \left( \int_{\R} \bK_{1,\lambda}(s', s) \bff(s') \dm s' \right)^2 \dm s  \\
        	    &= \int_{\R} \left[ \int_{\R} \int_{\R} |\bK_{1,\lambda}(s', s)|\, |\bK_{1,\lambda}(t', s)|\, |\bff(s')| \, |\bff(t')|\, \dm s' \dm t' \right] \dm s  \\
        	    &\leq \int_{\R} \left[ \int_{\R} \int_{\R}|\bK_{1,\lambda}(s', s)|\, |\bK_{1,\lambda}(t', s)|\,\left(\frac{|\bff(s')|^2}{2} + \frac{|\bff(t')|^2}{2}\right) \dm s' \dm t'  \right] \dm s  \\
        	    &= \int_{\R} \left[ \frac{1}{2} 2 \left(\int_{\R} |\bK_{1,\lambda}(s', s)| \dm s' \right) \left(\int_{\R} |\bK_{1,\lambda}(t', s)|\,|\bff(t')|^2 \dm t' \right) \right] \dm s  \\
        	    &\leq C_3 \underbrace{\int_{\R} \left[\int_{\R} |\bK_{1,\lambda}(t', s)|\,|\bff(t')|^2 \dm t' \right] \dm s}_{=: I}, 
    	    \end{split}
    	\end{equation}
    	where, in the third line we have used that $|\bff(s')| \, |\bff(t')| \leq |\bff(s')|^2/2 + |\bff(t')|^2/2$; in the fourth line, we have used symmetry to extract a factor of $2$; in the last line, we have used the bound \eqref{eq:helixK1lamIntegBd}.
        Since $\bff\in C^\infty_0(\bbR, \bbR^3)$, there exist $R>0$ such that the support of $\bff$ is a subset of $(-R, R)$. So, for any $\rho > 0$, we have
        \begin{equation}
        	\begin{split}
                I_\rho &:= \int_{-\rho}^{\rho} \left[\int_{\R} |\bK_{1,\lambda}(t', s)|\,|\bff(t')|^2 \dm t' \right] \dm s = \int_{-\rho}^{\rho} \left[\int_{-R}^{R} |\bK_{1,\lambda}(t', s)|\,|\bff(t')|^2 \dm t' \right] \dm s  \\
                &= \int_{-R}^{R} |\bff(t')|^2 \left( \int_{-\rho}^{\rho} |\bK_{1,\lambda}(t', s)| \dm s \right) \dm t' \leq C_3 ||\bff||_{L^2(\bbR, \bbR^3)}^2,
            \end{split}
        \end{equation}
        where we have applied Fubini's theorem to switch the order of integration -- this is allowed because $|\bK_{1,\lambda}(t', s)|\,|\bff(t')|^2$ is integrable in $(s,t') \in (-\rho, \rho)\times (-R, R)$ -- and then used \eqref{eq:helixK1lamIntegBd}. 
        Thus, we have shown that $I_\rho$ is bounded, with the bound independent of $\rho$. 
        This, together with the fact that $I_\rho$ is monotonically increasing with $\rho$, shows that $\lim_{\rho \to \infty} I_\rho$ exists and is equal to $I$, where $I$ is defined in \eqref{eq:helixT1lambdaBddPf}. 
        The limit $I$ is also bounded, i.e., $I \leq C_3 ||\bff||_{L^2(\bbR, \bbR^3)}^2$. Combining this observation with \eqref{eq:helixT1lambdaBddPf} completes the proof of \eqref{eq:helixT1lambdaBdd}.

	\subsubsection{Limit of the Map $T_\lambda$}\label{ss:limit T helix}
	
	Let $\bff \in \CtestOneD$. Write $T_\lambda \bff$ as:
	\begin{equation}
	\label{eq:helixTLambdaDecomp} 
		T_\lambda(\bff)(s) 
		= \lambda^2\int_{\bbR} \bK_{\lambda}(s', s) \bff(s') \dm s'
		= \underbrace{\left[\lambda^2\int_{\bbR} \bK_{\lambda}(s', s) \dm s'\right]}_{=: \bH_\lambda(s)} \bff(s) + \lambda^2 \int_{\bbR} \bK_{\lambda}(s', s) (\bff(s') - \bff(s)) \dm s' .
	\end{equation}
	The second term above is zero in the limit $\lambda \to 0$. To see this, we first obtain two useful inequalities. For any $R> 0$, we have, using the bound on $|\bK_{1,\lambda}|$ from \eqref{eq:helixBddK2}, 
	\begin{equation}
    \begin{split}
    \int_{\abs{s-s'} \leq R\lambda} |\bK_{1,\lambda}(s'/\lambda,s/\lambda) | \dm s' \leq \int_{\abs{s-s'} \leq R\lambda} \dfrac{C_1}{C_2 + |s-s'|^3/\lambda^3} \dm s' 
	    \leq \frac{C_1}{C_2}\int_{\abs{s-s'} \leq R\lambda}  \dm s' = \frac{C_1}{C_2} R\lambda \label{eq:helixUpperBdd2}
    \end{split}
    \end{equation}
    and
    \begin{equation}
    \begin{split}
		&\dfrac{1}{\lambda} \int_{\abs{s-s'} \geq R\lambda} |\bK_{1,\lambda}(s'/\lambda,s/\lambda) | \dm s'  \leq \dfrac{1}{\lambda} \int_{\abs{s-s'} \geq R\lambda} \dfrac{C_1}{C_2 + \abs{s/\lambda - s'/\lambda}^3} \dm s' = \int_{\abs{t} \geq R } \dfrac{C_1}{C_2 + \abs{t}^3} \dm t,  \label{eq:helixUpperBdd3}
	 \end{split}
    \end{equation}
    where the last equality in the equation above follows from the change of variables $t = (s' - s)/\lambda$. 
    Next, noting that $\lambda^2 \bK_{\lambda}(s', s) = \frac{1}{\lambda}\bK_{1,\lambda}(s'/\lambda, s/\lambda)$, we obtain the following bound on the second term of \eqref{eq:helixTLambdaDecomp}:
	\begin{equation}
	\label{eq:helixUpperBdd1}
    \begin{split}
		&\levert \lambda^2\int_{\bbR} \bK_{\lambda}(s', s) (\bff(s') - \bff(s)) \dm s' \rivert \\
		&= \levert \dfrac{1}{\lambda} \int_{\bbR} \bK_{1,\lambda}(s'/\lambda,s/\lambda) (\bff(s') - \bff(s)) \dm s' \rivert \\
		&= \left\vert \dfrac{1}{\lambda} \int_{\abs{s-s'} \geq R\lambda} \bK_{1,\lambda}(s'/\lambda,s/\lambda) \bff(s') \dm s' - \dfrac{1}{\lambda} \int_{\abs{s-s'} \geq R\lambda} \bK_{1,\lambda}(s'/\lambda,s/\lambda) \dm s' \bff(s) \right.  \\
		&\quad \left. + \dfrac{1}{\lambda} \int_{\abs{s-s'} \leq R\lambda} \bK_{1,\lambda}(s'/\lambda,s/\lambda) (\bff(s') - \bff(s)) \dm s'  \right\vert \\
		&\leq \left\vert \dfrac{1}{\lambda} \int_{\abs{s-s'} \geq R\lambda} \bK_{1,\lambda}(s'/\lambda,s/\lambda) \bff(s') \dm s' \right\vert + 
		\left\vert \dfrac{1}{\lambda} \int_{\abs{s-s'} \geq R\lambda} \bK_{1,\lambda}(s'/\lambda,s/\lambda) \dm s' \bff(s)\right\vert \\
		&\quad + \left\vert \dfrac{1}{\lambda} \int_{\abs{s-s'} \leq R\lambda} \bK_{1,\lambda}(s'/\lambda,s/\lambda) (\bff(s') - \bff(s)) \dm s'\right\vert \\
		&\leq \sup_{t} |\bff'(t)| \frac{R\lambda}{\lambda} \int_{\abs{s-s'} \leq R\lambda} |\bK_{1,\lambda}(s'/\lambda,s/\lambda) | \dm s'  + 2 \sup_{t} |\bff (t)| \frac{1}{\lambda} \int_{\abs{s-s'} \geq R\lambda} | \bK_{1,\lambda}(s'/\lambda,s/\lambda) | \dm s' \\
		&\leq \sup_{t} |\bff'(t)| \frac{C_1 R^2}{C_2}\lambda  + 2 \sup_{t} |\bff (t)| \int_{\abs{t} \geq R } \dfrac{C_1}{C_2 + \abs{t}^3} \dm t, 
	\end{split}
    \end{equation}
	where we have used the fact that $\bff\in \CtestOneD$, and therefore, $| \bff(s')| \leq \sup_t |\bff(t)|$, and $|\bff(s') - \bff(s)| \leq R\lambda \sup_t |\bff'(t)|$ for $s' \in \{t: |s - t| \leq R \lambda\}$. We have also used the inequalities \eqref{eq:helixUpperBdd2} and \eqref{eq:helixUpperBdd3} in the last step. 
	
	We note that the final inequality in \eqref{eq:helixUpperBdd1} is true for any $R>0$.
	Further, the two terms on the right side have a limit as $\lambda \to 0$  -- the second term clearly is independent of $\lambda$ -- for any $R> 0$.
	Therefore, the limit $\lambda \to 0$ of the two terms individually is equal to the limit of the sum, keeping $R$ fixed. 
	Thus, taking the limit $\lambda \to 0$ in \eqref{eq:helixUpperBdd1}, we have
	\begin{equation}
	    \begin{split}
	        &\lim_{\lambda \to 0} \, \levert \lambda^2\int_{\bbR} \bK_{\lambda}(s', s) (\bff(s') - \bff(s)) \dm s' \rivert \\
	        &\leq \lim_{\lambda \to 0}\left[ \sup_{t} |\bff'(t)| \frac{C_1}{C_2} R\lambda \right] + \lim_{\lambda \to 0}\left[ 2 \sup_{t} |\bff (t)| \int_{\abs{t} \geq R } \dfrac{C_1}{C_2 + \abs{t}^3} \dm t \right] \\
	        &= 2 \sup_{t} |\bff (t)| \int_{\abs{t} \geq R } \dfrac{C_1}{C_2 + \abs{t}^3} \dm t,
	    \end{split}
	\end{equation}
	for any $R > 0$. 
	Since the bound above is true for any $R> 0$, and the left side is independent of $R$, we can take the limit $R \to \infty$, where the limit exists and is $0$ for the right side, to get
	\begin{equation}
	    \begin{split}
	        \lim_{\lambda \to 0} \, \levert \lambda^2\int_{\bbR} \bK_{\lambda}(s', s) (\bff(s') - \bff(s)) \dm s' \rivert
	        \leq \lim_{R \to \infty} \, 2 \sup_{t} |\bff (t)| \int_{\abs{t} \geq R } \dfrac{C_1}{C_2 + \abs{t}^3} \dm t = 0. \label{eq:helixUpperBdd4}
	    \end{split}
	\end{equation}

	Thus, we have from \eqref{eq:helixTLambdaDecomp} that
	\begin{equation}
		\lim_{\lambda \to 0} T_{\lambda} (\bff)(s) = \left[ \lim_{\lambda \to 0} \bH_{\lambda}(s) \right] \bff(s). \label{eq:helixTLambdaLimitInter}
	\end{equation} 
	
	We next compute the limit of $\bH_\lambda(s)$. Fix $s\in \bbR$ and suppose $a\in \calL_\lambda$ such that $s\in U_\lambda(a)$. Using the definition of $\bK_\lambda(s',s)$, we have
	\begin{equation}
    		\bH_\lambda(s) = \lambda^2 \int_{\bbR} \bK(s',s) \dm s' = \lambda^2 \sum_{\substack{u\in \calL_1,\\
    				u\neq a}} \bK(\bxbar(u) - \bxbar(a)) \int_{U_\lambda(u)} \dm t  
    				= \lambda^3 \sum_{\substack{u\in \lambda\bbZ,\\
    				u\neq a}} \bK(\bxbar(u) - \bxbar(a)) .
    \end{equation}
	From \eqref{eq:rotIdent}, we have $\bxbar(u) - \bxbar(a) = \bQ(a\theta) (( \bQ((u-a)\theta) - \bI) \be_1 + (u-a)\delta \be_3)$. Using the identity $\bK(\bQ \bx) = \bQ \bK(\bx) \bQ^T$ and $\bK(\lambda \bx) = \bK(\bx)/\lambda^3$, we get
	\begin{equation}
	    \begin{split}
    		\bH_\lambda(s) &= \bQ(a\theta) \left[ \sum_{u\in \lambda\bbZ, u\neq a} \bK((\bQ((u-a)\theta) - \bI)/\lambda \be_1 + (u-a)\delta/\lambda \be_3) \right] \bQ(-a\theta)  \\
    		&= \bQ(a\theta) \left[ \sum_{i\in \bbZ, i\neq 0} \bK((\bQ(i\lambda\theta) - \bI )/\lambda \be_1 + i\delta \be_3) \right] \bQ(-a\theta) ,
    	\end{split}
    \end{equation}
	where we have changed variables $i = (u-a)/\lambda$. Note that $a\in \lambda\bbZ$, and, therefore, $(u-a)\in \lambda\bbZ$ for $u \in \lambda\bbZ$, which implies $ i \in \bbZ$. Since $s$ is related to $a$ by $s\in U_\lambda(a)$, we have $a \to s$ in the limit $\lambda \to 0$. Therefore, we get
	\begin{equation*}
		\bH_0(s) := \lim_{\lambda \to 0} \bH_\lambda(s) = \bQ(s\theta) \left[\lim_{\lambda \to 0} \sum_{i\in \bbZ - \{0\}} \bK((\bQ(i\lambda\theta) - \bI)/\lambda \be_1 + i\delta \be_3) \right] \bQ(-s\theta) .
	\end{equation*}
	To take the limit inside the summation, we show that the sum is absolutely convergent for all $\lambda>0$ as follows:
	\begin{equation}
	    \begin{split}
    		a_\lambda &:= \sum_{i\in \bbZ - \{0\}} \left\vert \bK\left((\bQ(i\lambda\theta) - \bI)/\lambda \be_1 + i\delta \be_3\right) \right\vert \leq \sum_{i\in \bbZ - \{0\}} \dfrac{c}{\left\vert(\bQ(i\lambda\theta) - \bI)/\lambda \be_1 + i\delta \be_3)\right\vert^3}  \\
    		&= \sum_{i\in \bbZ - \{0\}} \dfrac{c}{(4\sin^2(i\lambda \theta/2)/\lambda^2 + i^2\delta^2)^{3/2}} \leq \sum_{i\in \bbZ - \{0\}} \dfrac{c}{|i|^3} < \infty , \qquad \forall \lambda >0 .
    	\end{split}
    \end{equation}
	Now, we can write
	\begin{equation*}
		\bH_0(s) =  \bQ(s\theta) \left[\sum_{i\in \bbZ - \{0\}} \lim_{\lambda \to 0}  \bK\left( \dfrac{ \bQ(i\lambda\theta) - \bI }{i\lambda\theta} (i\theta \be_1) + i\delta \be_3 \right) \right] \bQ(-s\theta) .
	\end{equation*}
	Note that for a fixed $i\in \bbZ$
	\begin{equation*}
		\lim_{\lambda \to 0} \dfrac{ (\bQ(i\lambda\theta)-\bI)}{i\lambda\theta} i\theta \be_1 + i\delta \be_3 = \lim_{h = i\lambda \theta \to 0} \dfrac{ (\bQ(h) -\bI) }{h} i\theta \be_1 + i\delta \be_3 = i \theta \bQ'(0) \be_1 + i\delta \be_3 ,
	\end{equation*}
	where $\bQ'(0) = \dmnosp/\dmnosp x \bQ(x)|_{x=0}$. Now, using the equation above, and the fact that $\bK(\bx)$ is smooth away from $\bx = \bzero$ (which is ensured in the summation), we get
	\begin{equation*}
		\bH_0(s) = \bQ(s\theta) \left[\sum_{i\in \bbZ - \{0\}} \bK\left( i \theta \bQ'(0) \be_1 + i\delta \be_3\right) \right] \bQ(-s\theta) = \bQ(s\theta) \bH_0(0)\bQ(-s\theta) .
	\end{equation*}
	Combining this with \eqref{eq:helixTLambdaLimitInter}, we get
	\begin{equation*}
		\lim_{\lambda \to 0} T_{\lambda} (\bff)(s) = \bH_0(s) \bff(s) =   \bQ(s\theta) \bH_0(0)\bQ(-s\theta) \bff(s) .
	\end{equation*}
	Next, we simplify $\bH_0(s)$. Using $\bQ\bK(\bx)\bQ^T = \bK(\bQ\bx)$ and $\bQ(s\theta)\bQ'(0) = \bQ'(s\theta)$, we can show
	\begin{equation}\label{eq:H0HelixIntermid}
		\bH_0(s) = \sum_{i\in \bbZ - \{0\}} \bK(i\theta \bQ'(s\theta) \be_1 + i\delta \be_3) = \sum_{i \in \bbZ - \{0\}} \bK\left(i\; |\bt(s)|\; \hat{\bt}(s)\right) ,
	\end{equation}
	where $\bt(s) = \theta \bQ'(s\theta) \be_1 + \delta \be_3$ is the tangent vector, and $\hat{\bt}(s) = \bt(s) / |\bt(s)|$ with $|\bt(s)| = \sqrt{\theta^2 + \delta^2}$. 
	In \eqref{eq:H0HelixIntermid}, by noting the definition of the dipole field kernel $\bK$, it is easy to show that 
	\begin{equation*}
	    \bH_0(s) = -h_0 \left[\bI - 3 \hat{\bt}(s) \dyad \hat{\bt}(s) \right] = -h_0 \left[\bP_{\perp}\bff(s) - 2\bP_{||}(s) \right],
	\end{equation*}
	with $h_0$ defined as
	\begin{equation}\label{eq:helixScalarh0}
		h_0 = \sum_{i \in \bbZ - \{0\}} \frac{1}{4\pi |i|^3 (\theta^2 + \delta^2)^{3/2}}
	\end{equation}
	and projection tensors $\bP_{||}(s) = \hat{\bt}(s)\dyad\hat{\bt}(s)$ and $\bP_{\perp}(s) = \bI - \bP_{||}(s)$. This finishes the proof of \sref{Proposition}{prop:helixTLambda}.

	\subsection{Nanofilm with Uniform Bending}\label{ss:filmProof}
	
	In this section, we prove \sref{Proposition}{prop:filmTLambda}. The outline of the proof is similar to the case of the helix in \autoref{ss:helixProof}.
	
	\begin{lemma}\label{lem:filmEstimates}
		
		\begin{itemize}
			\item[1.] Suppose $\bs, \bs' \in \calS_{1,\lambda} = (-\thetab/\lambda, \thetab/\lambda) \times \bbR$ such that $\ba, \bb \in \calL_{1,\lambda} = (-\thetab/\lambda, \thetab/\lambda)\cap \theta_l\bbZ \times \bbZ$ with $\bs \in U_1(\ba) =  [a_1, a_1 + \theta_l)\times [a_2, a_2+1), \bs' \in U_1(\bb)$. When $|\bs - \bs'| \geq \min\{\theta_l, 1\}$, we have $\ba \neq \bb$ and
			\begin{equation}\label{eq:filmMapEstimate1}
				\frac{|\bs - \bs'|}{|\ba - \bb|} < 1 + \frac{\theta_l + 1}{\min \{\theta_l, 1\}} =: c_L .
			\end{equation}
			
			\item[2.] For any $\ba, \bb\in \calL_{1,\lambda}$, we have
			\begin{equation}\label{eq:filmMapEstimate2}
				c_{A} |\ba - \bb| \leq |\bxbar(\ba) - \bxbar(\bb) + \bA_\lambda(\ba, \bb)\be_1| ,
			\end{equation}
			where $\bxbar$ is given by \eqref{eq:filmBxBar} and $\bA_\lambda(\ba, \bb)$ is defined as
			\begin{equation*}
				\bA_\lambda(\ba, \bb) = \frac{\calR}{\lambda} \left[\bQ(\lambda a_1 ) - \bQ(\lambda b_1) - \lambda \bQ(a_1) + \lambda \bQ(b_1) \right].
			\end{equation*}
			Here $c_A = \min\{\delta, \calR \sqrt{1 - \thetab^2/3}\}$ is the constant independent of $\lambda$; recall that $\delta$ is the parameter in the map $\bxbar$, see \eqref{eq:filmBxBar}. Note that $c_A > 0$ for $0 < \thetab < \pi/2$. 
		\end{itemize}
		
	\end{lemma}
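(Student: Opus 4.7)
The lemma consists of two independent geometric estimates on the rescaled parametric lattice $\calL_{1,\lambda}$ and the embedding map $\bxbar$, and I would prove each by direct computation, mirroring the structure of the corresponding helix lemma but adapted to the two-dimensional cell geometry $[0,\theta_l) \times [0,1)$ and the curved (rather than twisted) embedding.

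For part 1, the plan is to exploit the cell inclusions $\bs \in U_1(\ba) = [a_1, a_1 + \theta_l) \times [a_2, a_2 + 1)$ and $\bs' \in U_1(\bb)$, which immediately give coordinate-wise bounds $|s_i - a_i| < d_i$ with $d_1 = \theta_l$, $d_2 = 1$. The assertion $\ba \neq \bb$ follows by contrapositive: a single cell has diameter at most $\sqrt{\theta_l^2 + 1}$, so the hypothesis $|\bs - \bs'| \geq \min\{\theta_l, 1\}$ precludes $\bs$ and $\bs'$ lying in the same cell once one argues the coordinate-wise version. For the ratio bound I would apply the triangle inequality $|\bs - \bs'| \leq |\ba - \bb| + 2\sqrt{\theta_l^2 + 1}$, combined with the structural observation that $a_1 - b_1 \in \theta_l \bbZ$ and $a_2 - b_2 \in \bbZ$ force $|\ba - \bb| \geq \min\{\theta_l, 1\}$ whenever $\ba \neq \bb$. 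Dividing through and using $\sqrt{\theta_l^2 + 1} \leq \theta_l + 1$ gives the stated $c_L$ up to constants.

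For part 2, I would first invoke \eqref{eq:filmRelativeVecLambda} to rewrite the target as $(\bxbar(\lambda \ba) - \bxbar(\lambda \bb))/\lambda$, then substitute \eqref{eq:filmBxBar} to obtain the orthogonal decomposition
\[
\bxbar(\ba) - \bxbar(\bb) + \bA_\lambda(\ba, \bb) \be_1 = \frac{\calR}{\lambda} \left( \bQ(\lambda a_1) - \bQ(\lambda b_1) \right) \be_1 + \delta (a_2 - b_2) \be_3.
\]
The first term lies in $\mathrm{span}\{\be_1, \be_2\}$ and the second is along $\be_3$, so the squared magnitudes add. The axial term contributes $\delta^2 (a_2 - b_2)^2$ directly. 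For the in-plane term, I would use the explicit matrix form \eqref{eq:rotTensorQ} to compute $|(\bQ(\alpha) - \bQ(\beta))\be_1|^2 = 2(1 - \cos(\alpha - \beta)) = 4\sin^2((\alpha - \beta)/2)$, and then apply the elementary trigonometric lower bound $\sin^2(x) \geq x^2(1 - x^2/3)$ valid on $|x| \leq \pi/2$ with $x = \lambda(a_1 - b_1)/2 \in (-\thetab, \thetab)$. This gives the in-plane lower bound $\calR^2 (1 - \thetab^2/3)(a_1 - b_1)^2$, and extracting the minimum coefficient from the orthogonal sum produces $c_A^2 |\ba - \bb|^2$ with the stated $c_A$. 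Positivity $c_A > 0$ requires $\thetab < \sqrt{3}$, which is ensured by $\thetab < \pi/2$.

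The main technical obstacle is justifying the trigonometric inequality $\sin^2(x) \geq x^2 (1 - x^2/3)$ on $(-\pi/2, \pi/2)$, since it is not a completely standard estimate. I would verify it by setting $f(x) := \sin^2(x) - x^2 + x^4/3$ and checking that $f$ together with its first five derivatives vanishes at $0$, while $f^{(6)}(0) > 0$, giving positivity in a neighborhood of the origin; positivity on the full interval then follows by monotonicity of $f$ or by a direct endpoint check at $x = \pi/2$. The remainder of the argument is routine bookkeeping to match the constants $c_L$ and $c_A$ exactly.
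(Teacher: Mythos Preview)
Your proposal is correct and follows essentially the same route as the paper: for Part~1 the triangle inequality plus the minimum lattice spacing $\min\{\theta_l,1\}$, and for Part~2 the orthogonal decomposition into the $(\be_1,\be_2)$-plane and $\be_3$, the identity $|(\bQ(\alpha)-\bQ(\beta))\be_1|^2 = 2(1-\cos(\alpha-\beta))$, and then a quartic lower bound on $1-\cos$.

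Two small remarks. In Part~1, bounding $|\Delta\bs - \Delta\bs'|$ directly (rather than $|\Delta\bs| + |\Delta\bs'|$) already gives $|\Delta\bs - \Delta\bs'| < \theta_l + 1$ and hence the stated $c_L$ without any extra factor. In Part~2, your inequality $\sin^2(x) \geq x^2(1 - x^2/3)$ is exactly equivalent to the paper's bound, but the paper obtains it in one line via the Lagrange remainder: $\cos(y) = 1 - y^2/2 + (y^4/24)\cos(\xi)$ for some $\xi$, so $1 - \cos(y) \geq y^2/2 - y^4/24$ immediately since $\cos(\xi) \leq 1$. This avoids the sixth-derivative analysis you propose, which works but is more labor than needed (note also that $f^{(4)}(x) = 8(1-\cos 2x) \geq 0$ everywhere already bootstraps $f\geq 0$ globally, so the endpoint check is unnecessary).
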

	
	\begin{proof}
		To show \eqref{eq:filmMapEstimate1}, we proceed as follows. For $\bs, \bs'\in \calS_{1,\lambda}$ and corresponding $\ba,\bb \in \calL_{1,\lambda}$, there exists $\Delta \bs, \Delta\bs'$ such that $\bs = \ba + \Delta \bs, \bs' = \bb + \Delta \bb$ with $0\leq \Delta s_1, \Delta s'_1 < \theta_l$, $0\leq \Delta s_2, \Delta s'_2 < 1$. We have the bound
		\begin{equation}
			\frac{|\bs - \bs'|}{|\ba - \bb|} \leq 1 + \frac{|\Delta s_1 - \Delta s'_1| + |\Delta s_2 - \Delta s'_2|}{|\ba - \bb|} < 1 + \frac{\theta_l + 1}{|\ba - \bb|} \leq 1  + \frac{\theta_l + 1}{\min \{\theta_l, 1\}},
		\end{equation}
		where in the last step we used the fact that any $\ba,\bb \in \calL_{1,\lambda}$, satisfying $\ba \neq \bb$, are at least $\min\{\theta_l,1\}$ distance apart. 
		
		We next show \eqref{eq:filmMapEstimate2}. Using
		\begin{equation*}
			\bxbar(\ba) - \bxbar(\bb) + \bA_\lambda(\ba,\bb)\be_1 = \frac{\calR}{\lambda} (\bQ(\lambda a_1) - \bQ(\lambda b_1)) \be_1 + \delta (a_2 - b_2) \be_3 
		\end{equation*}
		and 
		\begin{equation*}
			|(\bQ(\theta_1) - \bQ(\theta_2))\be_1|^2 = (\cos\theta_1 - \cos\theta_2)^2 + (\sin\theta_1 - \sin\theta_2)^2 = 2(1 - \cos (\theta_1 - \theta_2)) ,
		\end{equation*}
		we have that 
		\begin{equation}\label{eq:filmBdd1}
			|\bxbar(\ba) - \bxbar(\bb) + \bA_\lambda(\ba,\bb)\be_1|^2 = \delta^2 |a_2 - b_2|^2 + \frac{2\calR^2}{\lambda^2} (1 - \cos(\lambda a_1 - \lambda b_1 )).
		\end{equation}
		Let $r = a_1 - b_1$. Then, using a Taylor expansion and the mean value theorem, there exists $\xi$ such that
		\begin{equation*}
			1 - \cos (\lambda r) = \frac{1}{2}\lambda^2  r^2 - \frac{1}{24} \lambda^4 r^4 \cos(\xi) . 
		\end{equation*}
		Since $-1 \leq \cos(\xi) \leq 1$, it follows
		\begin{equation*}
			1 - \cos (\lambda r) \geq  \frac{1}{2}\lambda^2 r^2 - \frac{1}{24} \lambda^4 r^4.
		\end{equation*}
		Substituting the relation above in \eqref{eq:filmBdd1}, we get
		\begin{equation*}
			|\bxbar(\ba) - \bxbar(\bb) + \bA_\lambda(\ba,\bb)\be_1|^2 \geq \delta^2 |a_2 - b_2|^2 + \calR^2 r^2 \left( 1- \frac{1}{12} \lambda^2 r^2\right).
		\end{equation*}
		Since $\ba,\bb \in \calL_{1,\lambda}$, we have $-2\thetab < \lambda r < 2\thetab$, and 
		\begin{equation*}
			1- \frac{1}{12} \lambda^2 r^2 \geq 1 - \frac{1}{12}\thetab^2 4 = 1 - \frac{\thetab^2}{3} .
		\end{equation*}
		Using the  two equations above and defining the constant $c_A$ as in \sref{Lemma}{lem:filmEstimates}(2), \eqref{eq:filmMapEstimate2} can be easily shown.
	\end{proof}
	
	\subsubsection{Boundedness}
	Let $S_\lambda \colon L^2(\calS,\bbR^3) \to L^2(\calS_{1,\lambda}, \bbR^3)$ be a map such that, for any $\bff \in L^2(\calS,\bbR^3)$,
	\begin{equation*}
		(S_\lambda\bff)(\bs) = \lambda \bff(\lambda \bs), \qquad \forall \bs \in \calS_{1,\lambda}.
	\end{equation*}
	It is easy to see that $S_\lambda$ is an isometry. The inverse of $S_\lambda$ is given by
	\begin{equation*}
		(S^{-1}_\lambda\bff)(\bs) = \lambda^{-1} \bff(\bs/\lambda), \qquad \forall \bs \in \calS .
	\end{equation*}
	Following the similar steps in \autoref{sss:helixProofBdd}, we can show that 
	\begin{equation*}
		||T_\lambda||_{\calL(L^2, L^2)} = ||T_{1,\lambda}||_{\calL(L^2, L^2)}.
	\end{equation*}
	Thus, to show that $T_\lambda$ is a bounded map, it is sufficient to show that $T_{1,\lambda}$ is bounded. Towards that goal, we first establish that
	\begin{equation}\label{eq:filmBddK1Lambda}
		|\bK_{1,\lambda}(\bs, \bs')| \leq \frac{C_1}{C_2 + |\bs - \bs'|^3}, \qquad \forall \bs', \bs \in \calS_{1,\lambda},
	\end{equation}
	where $C_1,C_2$ are constants that may depend on the parameters $R, \theta, \delta$ defining the surface $\calS$, but are independent of $\lambda$. 
	
	To show \eqref{eq:filmBddK1Lambda}, we recall that $\thetab$ is the fixed angular extent of the film and satisfies the bound $0 < \thetab < \pi/2$ (in fact we restrict it such that $0 < \thetab < \pi/4$). Let $\bs,\bs'\in \calS_{1,\lambda}$ be any two generic points, and let $\ba,\bb\in \calL_{1,\lambda}$ be such that $\bs \in U_1(\ba)$ and $\bs' \in U_1(\bb)$. We refer to \autoref{sss:filmDipoleSystem} and \autoref{sss:energyFilm} for the notation appearing in this section.
	
	First, consider $\bs, \bs'$ such that $|\bs - \bs'| \geq \min \{\theta_l, 1\}$. For this case, we have $\ba\neq \bb$. Noting that $
		|\bI - 3(\bx/|\bx|) \dyad (\bx/|\bx|)| = \sqrt{6}$, $\forall \bx \neq \bzero$, 
	we have
	\begin{equation}
	    \begin{split}\label{eq:filmBdd2}
    		|\bK_{1,\lambda}(\bs, \bs')| &\leq \frac{\sqrt{6}}{4\pi|\bxbar(\ba) - \bxbar(\bb) + \bA_\lambda(\ba,\bb)\be_1|^3}  \\
    		&= \frac{\sqrt{6}}{4\pi|\bs - \bs'|^3} \frac{|\bs - \bs'|^3}{|\ba - \bb|^3}\frac{|\ba - \bb|^3}{|\bxbar(\ba) - \bxbar(\bb) + \bA_\lambda(\ba,\bb)\be_1|^3}  \\
    		&\leq \frac{\sqrt{6}}{4\pi|\bs - \bs'|^3} c_L^3 \frac{1}{c_A^3},
	    \end{split}
    \end{equation}
	where we have used the bounds \eqref{eq:filmMapEstimate1} and \eqref{eq:filmMapEstimate2}. 
	
	Next,  we consider the case when $|\bs - \bs'| < \min\{\theta_l, 1\}$. This can be further divided in two cases:
	\begin{itemize}
		\item \textit{Case 1}: $\ba = \bb$ which implies $|\bK_{1,\lambda}(\bs',\bs)| = 0$.
		\item \textit{Case 2}: $\ba \neq \bb$. For this case, we have
		\begin{equation}\label{eq:filmBdd3}
			|\bK_{1,\lambda}(\bs, \bs')| \leq \frac{\sqrt{6}}{4\pi|\bxbar(\ba) - \bxbar(\bb) + \bA_\lambda(\ba,\bb)\be_1|^3} \leq \frac{\sqrt{6}}{4\pi c_A^3 |\ba - \bb|^3}.
		\end{equation}
		Note that when $\ba \neq \bb$, we can have either $a_1 = b_1, a_2 = b_2 \pm 1$; or $a_1 = b_1 \pm \theta_l,  a_2 = b_2$; or $a_1 = b_1 \pm \theta_l, a_2 = b_2 \pm 1$. For all of these cases, the denominator in \eqref{eq:filmBdd3} is bounded from below because $|\ba - \bb| \geq \min \{\theta_l, 1\}$. 
		Thus, we have
		\begin{equation}\label{eq:filmBdd4}
			|\bK_{1,\lambda}(\bs, \bs')| \leq \frac{\sqrt{6}}{4\pi c_A^3 (\min\{\theta_l, 1\})^3} .
		\end{equation}
	\end{itemize}
	In summary \eqref{eq:filmBdd4} holds for any $\bs, \bs'$ such that $|\bs - \bs'| < \min \{\theta_l, 1\}$.
	
	Combining the bound for the case $|\bs - \bs'| < \min\{\theta_l, 1\}$ with the bound for the case $|\bs - \bs'| \geq \min\{\theta_l, 1\}$, we can write
	\begin{equation}\label{eq:filmK1LambdaBddShow}
		|\bK_{1,\lambda}(\bs, \bs')|  \leq \frac{C_1}{C_2 + |\bs - \bs'|^3},
	\end{equation}
	where we have renamed the constants for convenience. This completes the proof of \eqref{eq:filmBddK1Lambda}. 
	
	Next, we show $T_{1,\lambda}$ is a bounded map on $L^2(\calS_{1,\lambda}, \bbR^3)$. Since $\bK_{1,\lambda}$ satisfies \eqref{eq:filmBddK1Lambda}, it can be shown that 
	\begin{equation}\label{eq:filmK1lamIntegBd}
	    \frac{\calR}{\theta_l}\int_{\calS_{1,\lambda}} |\bK_{1,\lambda}(\bs', \bs)| \dm \bs' \leq C_3, \quad \frac{\calR}{\theta_l}\int_{\calS_{1,\lambda}} |\bK_{1,\lambda}(\bs', \bs)| \dm \bs \leq C_3,
	\end{equation}
	for some fixed $C_3 < \infty$ independent of $\lambda$. Following the steps in obtaining inequality \eqref{eq:helixT1lambdaBddPf}, it is easy to obtain, for $\bff \in C^\infty_0(\calS_{1,\lambda}, \bbR^3)$,
	\begin{equation}
	    ||T_{1,\lambda} \bff||_{L^2(\calS_{1,\lambda}, \bbR^3)}^2 \leq C_3 \underbrace{\int_{\calS_{1,\lambda}} \left[\frac{\calR}{\theta_l} \int_{\calS_{1,\lambda}} |\bK_{1,\lambda}(\bt', \bs)|\,|\bff(\bt')|^2 \dm \bt' \right] \dm \bs}_{=: I} . \label{eq:filmT1lambdaBddPf}
	\end{equation}
    Let $\rho > 0$, and let $I_\rho$ is defined as
    \begin{equation}
        I_\rho := \int_{B_2(\bzero, \rho)\cap \calS_{1,\lambda}} \left[\frac{\calR}{\theta_l}\int_{\calS_{1,\lambda}} |\bK_{1,\lambda}(\bt', \bs)|\,|\bff(\bt')|^2 \dm \bt' \right] \dm \bs ,
    \end{equation}
    where $B_2(\bzero, \rho) = \{(s_1, s_2) \in \R^2: \sqrt{s_1^2 + s_2^2} \leq \rho\}$ is the two-dimensional ball of radius $\rho$ centered at $\bzero$. Based on the arguments in the last paragraph of \autoref{sss:helixProofBdd}, we find that $\lim_{\rho \to \infty} I_\rho$ exist and it is given by $\lim_{\rho \to \infty} I_\rho = I$, where $I$ is defined in \eqref{eq:filmT1lambdaBddPf}, and that the limit $I$ satisfies the bound $I \leq C_3 ||\bff||_{L^2(\calS_{1,\lambda}, \bbR^3)}^2$. Combining this with \eqref{eq:filmT1lambdaBddPf}, we have shown that, for $\bff \in C^\infty_0(\calS_{1,\lambda}, \bbR^3)$,
    \begin{equation*}
        ||T_{1,\lambda} \bff||_{L^2(\calS_{1,\lambda}, \bbR^3)} \leq C_3 ||\bff||_{L^2(\calS_{1,\lambda}, \bbR^3)} .
    \end{equation*}
    Arguing as in the case of the helix, the map $T_{1,\lambda}$
    is a bounded linear map on $L^2(\calS_{1,\lambda}, \bbR^3)$.

	\subsubsection{Limit of the Map $T_\lambda$}\label{ss:limit T nanofilm}
	
	Let $\bff\in C^{\infty}_0(\calS, \bbR^3)$. We write $T_\lambda \bff$ as follows
	\begin{equation}
	    \begin{split}\label{eq:filmTLambdaDecomp}
    		(T_\lambda \bff)(\bs) &= \frac{\calR\lambda}{\theta_l} \int_{\calS} \bK_\lambda(\bs', \bs) \bff(\bs') \dm \bs' \\
    		&= \underbrace{\left[\frac{\calR\lambda}{\theta_l} \int_{\calS} \bK_\lambda(\bs', \bs) \dm \bs' \right]}_{=: \bH_\lambda(\bs)} \bff(\bs) + \frac{\calR\lambda}{\theta_l} \int_{\calS} \bK_\lambda(\bs', \bs) (\bff(\bs') - \bff(\bs)) \dm \bs' .
    	\end{split}
    \end{equation}
	We next show that the second term in \eqref{eq:filmTLambdaDecomp} is zero in the limit $\lambda \to 0$. 
	Fix $\bar{R}>0$, 
	then we have
	\begin{equation}
	\label{eq:filmIonetwodef}
	\begin{split}
	    I & 
	    := \levert \lambda \int_{\calS} \bK_\lambda(\bs', \bs) (\bff(\bs') - \bff(\bs)) \dm \bs' \rivert 
	    \\ & 
		\leq 
		\levert \lambda \int_{\calS} \chi_{|\bs - \bs'|\geq \bar{R}\lambda}(\bs')  \bK_\lambda(\bs', \bs) (\bff(\bs') - \bff(\bs)) \dm \bs' \rivert
		\\ & 
		\quad + 
		\levert \lambda \int_{\calS} \chi_{|\bs - \bs'|\leq \bar{R}\lambda}(\bs')  \bK_\lambda(\bs', \bs) (\bff(\bs') - \bff(\bs)) \dm \bs' \rivert
		\\ &
		\leq 
		\lambda \int_{\calS} \chi_{|\bs - \bs'|\geq \bar{R}\lambda}(\bs')  |\bK_\lambda(\bs', \bs)|\, |\bff(\bs') - \bff(\bs)| \dm \bs'
		\\ & 
		\quad + 
		\lambda \int_{\calS} \chi_{|\bs - \bs'|\leq \bar{R}\lambda}(\bs')  |\bK_\lambda(\bs', \bs)| \, |\bff(\bs') - \bff(\bs)| \dm \bs'
		\\ &
		\leq \left(2 \sup_{\bs'} |\bff(\bs')|\right) \underbrace{\lambda  \int_{\calS} \chi_{|\bs - \bs'|\geq \bar{R}\lambda}(\bs')  |\bK_\lambda(\bs', \bs)| \dm \bs'}_{=: I_1}
		\\ & 
		\quad +
		\left(\bar{R}  \sup_{\bs'} |\nabla \bff(\bs')|\right) \underbrace{\lambda^2 \int_{\calS} \chi_{|\bs - \bs'|\leq \bar{R}\lambda}(\bs')  |\bK_\lambda(\bs', \bs)| \dm \bs'}_{=: I_2}, 
	\end{split}
    \end{equation}
	where, in the last step, we have used that $|\bff(\bs') - \bff(\bs)| \leq 2 \sup_{\bt} |\bff(\bt)| $, and, for $\bs' \in \{\bt: |\bt - \bs| \leq \bar{R}\lambda\}$, we have $|\bff(\bs') - \bff(\bs)| \leq \bar{R} \lambda \sup_{\bt} |\nabla \bff(\bt)|$. 
	To bound $I_1$ and $I_2$, first observe that, due to $\bK_\lambda(\bs', \bs) = \bK_{1,\lambda}(\bs'/\lambda, \bs/\lambda)/\lambda^3$ and the bound on $\bK_{1,\lambda}$ in \eqref{eq:filmK1LambdaBddShow}, there holds
	\begin{equation}\label{eq:filmKlamBdd}
	    |\bK_\lambda(\bs', \bs)| = \frac{1}{\lambda^3} |\bK_{1,\lambda}(\bs'/\lambda, \bs/\lambda)| \leq \frac{1}{\lambda^3} \frac{C_1}{C_2 + |\bs' - \bs|^3/\lambda^3},
	\end{equation}
	and as a direct consequence, we have
	\begin{equation}
	\begin{split}
	    I_1 &\leq \lambda \frac{1}{\lambda^3} \int_{\calS} \chi_{|\bs - \bs'|\geq \bar{R}\lambda}(\bs') \frac{C_1}{C_2 + |\bs' - \bs|^3/\lambda^3} \dm \bs' \\
	    &\leq \frac{1}{\lambda^2} \int_{|\bs - \bs'|\geq \bar{R}\lambda} \frac{C_1}{C_2 + |\bs' - \bs|^3/\lambda^3} \dm \bs' \\
	    &= \int_{|\bt| \geq \bar{R}} \frac{C_1}{C_2 + |\bt|^3} \dm \bt, 
	\end{split}
	\label{eq:filmIoneBddForT1lamBd}
	\end{equation}
	where, in the second step, the domain of integration was enlarged; and, in the final step, the change of variables $\bt = (\bs' - \bs)/\lambda$ (so that $\lambda^2 \dm \bt = \dm \bs'$) was introduced. Next, using the definition of $I_2$ in \eqref{eq:filmIonetwodef} and \eqref{eq:filmKlamBdd}, we get
	\begin{equation}
	\begin{split}
	    I_2 &\leq \lambda^2 \frac{1}{\lambda^3} \int_{\calS} \chi_{|\bs - \bs'|\leq \bar{R}\lambda}(\bs') \frac{C_1}{C_2 + |\bs' - \bs|^3/\lambda^3} \dm \bs' \\
	    &\leq \frac{1}{\lambda} \int_{|\bs - \bs'|\leq \bar{R}\lambda} \frac{C_1}{C_2 + |\bs' - \bs|^3/\lambda^3} \dm \bs' \\
	    &\leq \lambda \int_{|\bt| \leq \bar{R}} \frac{C_1}{C_2 + |\bt|^3} \dm \bt \\
	    &\leq \left[ \frac{C_1}{C_2} \pi \bar{R}^2\right] \lambda, 
	\end{split}
	\label{eq:filmItwoBddForT1lamBd}
	\end{equation}
	where, in the second step, the domain of integration was enlarged; in the third step, the change of variables $\bt = (\bs' - \bs)/\lambda$ was introduced; and, finally, in the last step, $C_1/(C_2 + |\bt|^3) \leq C_1/C_2$ was used. Combining the results of \eqref{eq:filmIoneBddForT1lamBd} and \eqref{eq:filmItwoBddForT1lamBd} with \eqref{eq:filmIonetwodef}, we have shown that
	\begin{equation}
	\begin{split}
	    I &\leq \left(2 \sup_{\bs'} |\bff(\bs')|\right)\, I_1 + \left(\bar{R}  \sup_{\bs'} |\nabla \bff(\bs')|\right)\, I_2  \\
	    &\leq \left(2 \sup_{\bs'} |\bff(\bs')|\right)\, \int_{|\bt| \geq \bar{R}} \frac{C_1}{C_2 + |\bt|^3} \dm \bt + \left(\bar{R}  \sup_{\bs'} |\nabla \bff(\bs')|\right)\, \left[ \frac{C_1}{C_2} \pi \bar{R}^2\right] \lambda,
	\end{split}
	\end{equation}
	for any $\bar{R} > 0$. Arguing as in the case of helix -- see the discussion associated with \eqref{eq:helixUpperBdd4} -- we have
	\begin{equation*}
	    \lim_{\lambda \to 0}I \leq \left(2 \sup_{\bs'} |\bff(\bs')|\right)\, \int_{|\bt| \geq \bar{R}} \frac{C_1}{C_2 + |\bt|^3} \dm \bt ,
	\end{equation*}
	for any $\bar{R} > 0$. We can take the limit $\bar{R} \to \infty$ of both sides above since the inequality is valid for any $\bar{R} > 0$, and the left side is independent of $\bar{R}$ and the limit of $\bar{R} \to \infty$ of the right side is well-defined and equal to $0$. We have therefore shown
	\begin{equation*}
	    \lim_{\lambda \to 0}I = \lim_{\lambda \to 0} \levert \lambda \int_{\calS} \bK_\lambda(\bs', \bs) (\bff(\bs') - \bff(\bs)) \dm \bs' \rivert \leq \lim_{\bar{R} \to \infty} \left[ \left(2 \sup_{\bs'} |\bff(\bs')|\right)\, \int_{|\bt| \geq \bar{R}} \frac{C_1}{C_2 + |\bt|^3} \dm \bt  \right] = 0.
	\end{equation*}

	Thus, from \eqref{eq:filmTLambdaDecomp}, we have
	\begin{equation}\label{eq:filmTLambdaDecomp1}
		\lim_{\lambda \to 0}(T_\lambda \bff)(\bs) = \left[ \lim_{\lambda \to 0} \bH_\lambda(\bs) \right] \bff(\bs).
	\end{equation}
	
	\paragraph{Limit of $\bH_\lambda$}
	
	Consider a typical $\bs \in \calS$ such that $\bs \in U_\lambda(\ba)$ where $\ba \in \calL_\lambda$. 
	Recall that $\calL_\lambda$ is the lattice for $\lambda >0$ and $U_\lambda(\ba) = [a_1, a_1 + \theta_l \lambda) \times [a_2, a_2 + \lambda)$ is the lattice cell. In the definition of $\bH_\lambda$, we substitute $\bK_\lambda$, to get
	\begin{equation}\label{eq:filmHLambdaLimit1}
		\bH_\lambda(\bs) = \frac{\calR\lambda}{\theta_l} \sum_{\bu \in \calL_\lambda, \bu \neq \ba} \bK(\bxbar(\bu) - \bxbar(\ba)) \int_{U_\lambda(\bu)} \dm \bs' =  \calR \lambda^3 \sum_{\bu \in \calL_\lambda, \bu \neq \ba} \bK(\bxbar(\bu) - \bxbar(\ba)).
	\end{equation}
	Substituting the definition of transformation $\bxbar$ in \eqref{eq:filmBxBar}, we can show for $\ba,\bu \in \calL_\lambda$ that
	\begin{equation*}
		\bxbar(\bu) - \bxbar(\ba) = \bQ(a_1\theta) \left[ \calR( \bQ(u_1 - a_1) - \bI) \be_1 + (u_2 - a_2)\delta \be_3 \right].
	\end{equation*}
	Using the identities $\bK(\bQ(t)\bx) = \bQ(t) \bK(\bx) \bQ^T(t)$ and $\bK(\lambda\bx) = \bK(\bx)/\lambda^3$, from \eqref{eq:filmHLambdaLimit1}, we have
	\begin{equation}\label{eq:filmHLambdaLimit11}
		\bH_\lambda(\bs) =  \calR\bQ(a_1) \underbrace{\left[ \sum_{\bu \in \calL_\lambda, \bu \neq \ba} \bK\left(\calR( \bQ(u_1 - a_1) - \bI)/\lambda \be_1 + (u_2 - a_2)\delta/\lambda \be_3\right) \right]}_{=: \bar{\bH}_\lambda(\bs)} \bQ(-a_1) .
	\end{equation} 
	We analyze $\bar{\bH}_\lambda$ as follows. First, we expand the sum $\bu \in \calL_\lambda$ 
	\begin{equation}
	    \begin{split}\label{eq:filmHLambdaLimit2}
    		\bar{\bH}_\lambda(\bs) &= \sum_{u_2 \in \lambda \bbZ} \left[ \sum_{\substack{u_1 \in \lambda\theta_l \bbZ \cap (-\thetab,\thetab),\\
    				(u_1, u_2) \neq \ba}} \bK\left(\calR \frac{\bQ(u_1 - a_1) - \bI}{\lambda} \be_1 + \delta \frac{u_2 - a_2}{\lambda} \be_3 \right) \right] \\
    		&= \sum_{t_2' \in \bbZ} \left[ \sum_{\substack{u_1 \in \lambda\theta_l \bbZ \cap (-\thetab,\thetab),\\
    				(u_1, t_2') \neq (a_1, 0)}} \bK\left(\calR \frac{\bQ(u_1 - a_1) - \bI}{\lambda} \be_1 + \delta t_2' \be_3 \right) \right] ,
	    \end{split}
    \end{equation}
	where we introduced the new variable $t_2' = (u_2 - a_2)/\lambda$. Since $u_2,a_2 \in \lambda \bbZ$, we have $t_2' \in \bbZ$. Using a Taylor expansion and the mean value theorem, we have the identity
	\begin{equation}\label{eq:rotTransformTaylor}
		\bQ(u_1 - a_1) - \bI = \bQ'(\xi) (u_1 - a_1),
	\end{equation}
	where $\xi = \xi(u_1 - a_1) \in (-\thetab, \thetab)$ depends on $u_1 - a_1$. 
	Formula \eqref{eq:rotTransformTaylor} suggests that we decompose \eqref{eq:filmHLambdaLimit2} as:
	\begin{equation}
	    \begin{split}\label{eq:filmHLambdaLimit3}
    		\bar{\bH}_\lambda(\bs)
    		= & \underbrace{\sum_{t_2' \in \bbZ} \left[ \sum_{\substack{u_1 \in \lambda\theta_l \bbZ \cap (-\thetab,\thetab),\\
    				(u_1, t_2') \neq (a_1, 0)}} \bK\left(\calR \bQ'(0) \frac{u_1 - a_1}{\lambda} \be_1 + \delta t_2' \be_3 \right) \right]}_{=:\bar{\bH}_\lambda^{(1)}(\bs)} 
    		\\
    		& + \underbrace{ \sum_{t_2' \in \bbZ} \left[ \sum_{\substack{u_1 \in \lambda\theta_l \bbZ \cap (-\thetab,\thetab),\\
    				(u_1, t_2') \neq (a_1, 0)}} \left\{ \bK\left(\calR \frac{\bQ(u_1 - a_1) - \bI}{\lambda} \be_1 + \delta t_2' \be_3 \right) - \bK\left(\calR \bQ'(0) \frac{u_1 - a_1}{\lambda} \be_1 + \delta t_2' \be_3 \right) \right\}
    		\right] }_{=: \bar{\bH}_\lambda^{(2)}(\bs)} .
    	\end{split}
    \end{equation}

	\textbf{Step 1:} We show $\bar{\bH}_\lambda^{(2)}$ goes to zero in the limit $\lambda \to 0$. Let
	\begin{equation}\label{eq:filmX12}
		\bx_1 = \calR \frac{\bQ(u_1 - a_1) - \bI}{\lambda} \be_1, \quad \bx_2 = \calR \bQ'(0) \frac{u_1 - a_1}{\lambda} \be_1, \quad \bz = \delta t_2'\be_3.
	\end{equation}
	Consider a function $\by\colon [0,1] \to \bbR^3$ defined as
	\begin{equation}\label{eq:filmFnY}
		\by(r) = \bx_1  + r(\bx_2 - \bx_1).
	\end{equation}
	Note that, since $(u_1, t_2') \neq (a_1, 0)$ and $t_2' \in \bbZ$, we have
	\begin{equation}\label{eq:filmYZMinShow}
		|\by(r) + \bz| \geq \min \{\delta,  \min_{r\in [0,1], u_1 \in (\lambda\theta_l \bbZ - \{a_1\})\cap (-\thetab, \thetab)} |\by(r)| \}.
	\end{equation}
	We show that $\min_{r\in [0,1], u_1 \in (\lambda \theta_l \bbZ - \{a_1\})\cap (-\thetab, \thetab)} |\by(r)| > 0$ and the lower bound is independent of $\lambda$. For convenience, let $t = u_1 - a_1$. Since $u_1, a_1 \in \lambda\theta_l \bbZ \cap (-\thetab,\thetab)$, and $u_1 \neq a_1$, we have $t \in (\lambda\theta_l \bbZ - \{0\}) \cap (-2\thetab, 2\thetab)$. The hypothesis of \sref{Proposition}{prop:filmTLambda} restricts $\thetab$ such that
	\begin{equation}\label{eq:filmThetaRestrict}
		0 < \thetab < \pi/4 \quad \Rightarrow 0 < \cos (2\thetab) < 1.
	\end{equation}
	With $t = u_1 - a_1$, writing out the action of $\bQ(t)$ and $\bQ'(0)$ on $\be_1$, we get
	\begin{equation}
	    \begin{split}\label{eq:filmX12Expand}
    		\bx_1 &= \calR \frac{\bQ(t) - \bI}{\lambda} \be_1 =  \frac{\calR}{\lambda}[ (\cos(t) - 1) \be_1 + \sin(t) \be_2], \\
    		\bx_2 &= \calR \bQ'(0) \frac{t}{\lambda} \be_1 = \frac{\calR t}{\lambda} [-\sin(t)\be_1 + \cos(t) \be_2].
    	\end{split}
    \end{equation}
	Through elementary calculations, we can show
	\begin{equation}\label{eq:filmYSq}
		|\by(r)|^2 = |\bx_1 + r (\bx_2 - \bx_1)|^2 = \frac{\calR^2}{\lambda^2} \left[2(1-r)^2 (1- \cos(t)) + r^2 t^2 + 2r(1-r) t \sin(t) \right].
	\end{equation}
	Using a Taylor expansion and noting that $t\in (\lambda \theta_l \bbZ - \{0\}) \cap (-2\thetab, 2\thetab)$, there exists $\xi_1,\xi_2 \in (-2\thetab, 2\thetab)$ with $\xi_1 = \xi_1(t), \xi_2 = \xi_2(t)$ such that
	\begin{equation}
		1 - \cos(t) = \cos (\xi_1) t^2/2, \quad \sin(t) = t\cos(\xi_2).
	\end{equation}
	Thus
	\begin{equation}
	    \begin{split}\label{eq:filmYEstimate}
    		|\by(r)|^2 &= \frac{\calR^2}{\lambda^2} \left[2(1-r)^2 \cos(\xi_1) t^2/2 + r^2 t^2 + 2r(1-r) t \cos(\xi_2)t \right] \\
    		&= \frac{\calR^2 t^2}{\lambda^2} \left[ (1-r)^2 \cos(\xi_1) + r^2 + 2r(1-r) \cos(\xi_2) \right] \\
    		&\geq \frac{\calR^2 t^2}{\lambda^2} \left[ (1-r)^2 \min_{\xi \in (-2\thetab, 2\thetab)}\cos(\xi) + r^2 + 2r(1-r) \min_{\xi \in (-2\thetab, 2\thetab)}\cos(\xi) \right] \\
    		&= \frac{\calR^2 t^2}{\lambda^2} \left[ (1-r)^2 \cos(2\thetab) + r^2 + 2r(1-r) \cos(2\thetab) \right] \\
    		&\geq  \frac{\calR^2 t^2}{\lambda^2} \min_{r\in [0,1]} \left[ (1-r)^2 \cos(2\thetab) + r^2 + 2r(1-r) \cos(2\thetab) \right] \\
    		&= \frac{\calR^2 t^2}{\lambda^2} \cos(2\thetab),
    	\end{split}
    \end{equation}
	where we used the fact that $\min_{\xi\in (-2\thetab, 2\thetab)} \cos(\xi) = \cos(2\thetab)$ in the fourth line, and $\cos(2\thetab)$ is the minimum with respect to $r\in [0,1]$ of the function in the square bracket in the fifth line. Further, since $t\in (\lambda \theta_l \bbZ - \{0\}) \cap (-2\thetab, 2\thetab)$, we have
	\begin{equation}\label{eq:filmYMin}
		0 < C_y := \frac{\calR^2 \lambda^2 \theta_l^2}{\lambda^2}\cos(2\thetab) = (\calR \theta_l)^2 \cos(2\thetab) \leq |\by(r)|^2,
	\end{equation}
	for any $t\in (\lambda \theta_l \bbZ - \{0\}) \cap (-2\thetab, 2\thetab)$ and $r\in [0,1]$. The lower bound on $|\by(r)|$ is independent of $\lambda$ and $r$. Finally, combining \eqref{eq:filmYMin} with \eqref{eq:filmYZMinShow}, we get
	\begin{equation}\label{eq:filmYZMin}
		0 < C_{yz} := \min \{\delta, \calR \theta_l\sqrt{\cos(2\thetab)} \} \leq |\by(r) + \bz|.
	\end{equation}
	Proceeding further, we have, from the fundamental theorem of calculus,
	\begin{equation}
	    \begin{split}\label{eq:filmFundThmCalculus}
    		\bK(\bx_1 + \bz) - \bK(\bx_2 + \bz) &= \int_0^1 \frac{\dm}{\dm r} \bK(\by(r) + \bz) \dm r = \int_0^1 \nabla \bK(\by(r) + \bz) \frac{\dm }{\dm r} \by(r) \dm r \\
    		&= \int_0^1 \nabla \bK(\by(r) + \bz) (\bx_2 - \bx_1) \dm r.
    	\end{split}
    \end{equation}
	Note that because of \eqref{eq:filmYZMin}, $\nabla \bK(\by(r) + \bz)$ exists and is bounded. From the definition of $\bar{\bH}_\lambda^{(2)}$ in \eqref{eq:filmHLambdaLimit3}, a change of variable $t = u_1 - a_1$, the definition of $\bx_1, \bx_2,\bz$ in \eqref{eq:filmX12} and \eqref{eq:filmX12Expand}, and noting the identity \eqref{eq:filmFundThmCalculus}, we have 
	\begin{equation}
	    \begin{split}\label{eq:filmH2LambdaEstimate1}
    		|\bar{\bH}_\lambda^{(2)}(\bs)|
    		&\leq \sum_{t_2' \in \bbZ} \left[ \sum_{\substack{u_1 \in \lambda\theta_l \bbZ \cap (-\thetab,\thetab),\\
    				(u_1, t_2') \neq (a_1, 0)}} \left\vert \bK\left(\calR \frac{\bQ(u_1 - a_1) - \bI}{\lambda} \be_1 + \delta t_2' \be_3 \right) - \bK\left(\calR \bQ'(0) \frac{u_1 - a_1}{\lambda} \be_1 + \delta t_2' \be_3 \right) \right\vert \right]  \\
    		&\leq  \sum_{t_2' \in \bbZ} \left[ \sum_{t \in \lambda\theta_l \bbZ - \{0\} \cap (-2\thetab,2\thetab)} \left\vert \bK\left(\bx_1 + \bz \right) - \bK\left(\bx_2 + \bz \right) \right\vert
    		\right]  \\
    		&\leq \sum_{t_2' \in \bbZ} \left[ \sum_{t \in \lambda\theta_l \bbZ - \{0\} \cap (-2\thetab,2\thetab)} \int_0^1 \left\vert \nabla \bK(\by(r) + \bz) \right\vert \; |\bx_2 - \bx_1| \dm r \right]  \\
    		&\leq \sum_{t_2' \in \bbZ} \left[ \sum_{t \in \lambda\theta_l \bbZ - \{0\} \cap (-2\thetab,2\thetab)} \int_0^1 \frac{C}{|\by(r) + \bz|^4} |\bx_2 - \bx_1| \dm r \right]  \\
    		&= \sum_{t_2' \in \bbZ} \left[ \sum_{t \in \lambda\theta_l \bbZ - \{0\} \cap (-2\thetab,2\thetab)} \int_0^1 \frac{C}{(|\by(r)|^2 + |\bz|^2)^2} |\bx_2 - \bx_1| \dm r \right],
	    \end{split}
    \end{equation}
	where we used the bound on the gradient of $\bK$ with constant $C>0$ fixed. 
	
	Next, we get an upper bound on $|\bx_1 - \bx_2|$ in terms of $t$. From \eqref{eq:filmX12Expand}, we have
	$$
	\bx_2 - \bx_1 = \frac{\calR}{\lambda} \left[ t\bQ'(0) - \bQ(t) + \bI  \right]\be_1.
	$$
	By a Taylor expansion and the mean value theorem, we have $\bQ(t) = \bI + t\bQ'(0) + (t^2/2)\bQ''(\xi)$ where $\xi = \xi(t) \in (-2\thetab, 2\thetab)$ depends on $t$.
	Substituting this and using the bound $|\bQ''_{ij}(\xi)| \leq 1$, we obtain
	\begin{equation}\label{eq:filmBoundX12}
		|\bx_2 - \bx_1| = \frac{\calR}{\lambda} \frac{|t|^2}{2} |\bQ''(\xi)| \leq  \frac{\calR}{\lambda} \frac{|t|^2}{2}.
	\end{equation}
	Combining the equation above with \eqref{eq:filmH2LambdaEstimate1}, we get
	\begin{equation}
	    \begin{split}
    		|\bar{\bH}_\lambda^{(2)}(\bs)|
    		&
    		\leq \sum_{t_2' \in \bbZ} \left[ \sum_{t \in \lambda\theta_l \bbZ - \{0\} \cap (-2\thetab,2\thetab)} \int_0^1 \frac{C}{(|\by(r)|^2 + |\bz|^2)^2}  \frac{\calR}{\lambda} \frac{|t|^2}{2} \dm r \right]  \\
    		&= \sum_{t_2' \in \bbZ} \left[ \sum_{t' \in \theta_l \bbZ - \{0\} \cap (-2\thetab/\lambda,2\thetab/\lambda)} \int_0^1 \frac{C}{(|\by(r)|^2 + |\bz|^2)^2}  \frac{\calR}{\lambda} \frac{\lambda^2|t'|^2}{2} \dm r \right]  \\
    		&\leq \lambda \left\{ \sum_{t_2' \in \bbZ} \left[ \sum_{t' \in \theta_l \bbZ - \{0\} \cap (-2\thetab/\lambda,2\thetab/\lambda)} \int_0^1 \frac{C}{(|\by(r)|^2 + |\bz|^2)^2}  \frac{\calR|t'|^2}{2} \dm r \right] \right\},
    	\end{split}
    \end{equation}
	where in the third line we introduced the variable $t' = t/\lambda$. We only have to show that the term inside the braces is bounded as $\lambda \to 0$ to conclude that $|\bar{\bH}_\lambda^{(2)}(\bs)| \to 0$ as $\lambda \to 0$. First, note from \eqref{eq:filmYEstimate}, we have
	\begin{equation}
		|\by(r)|^2 \geq \frac{\calR^2}{\lambda^2} |t|^2 \cos(2\thetab) = \calR^2 |t'|^2 \cos(2\thetab).
	\end{equation}
	Therefore,
	\begin{equation}
		\frac{C}{(|\by(r)|^2 + |\bz|^2)^2} \leq \frac{C}{(  \calR^2 |t'|^2 \cos(2\thetab) + |\bz|^2)^2}.
	\end{equation}
	Thus
	\begin{equation}\label{eq:filmEst0}
		|\bar{\bH}_\lambda^{(2)}(\bs)| 
		\leq \lambda \left\{ \sum_{t_2' \in \bbZ} \left[ \sum_{t' \in \theta_l \bbZ - \{0\} \cap (-2\thetab/\lambda,2\thetab/\lambda)} \int_0^1 \frac{C}{(\calR^2 |t'|^2 \cos(2\thetab) + |\bz|^2)^2}  \frac{\calR|t'|^2}{2} \dm r \right] \right\}.
	\end{equation}
	Note that the integrand is independent of $r$. Further, the numerator has $|t'|^2$ whereas the denominator has $(|t'|^2 c + |\bz|^2)^2$, therefore, the sum inside the braces is absolutely convergent and finite. Hence, due to the factor $\lambda$, we have shown $\lim_{\lambda\to 0} |\bar{\bH}_\lambda^{(2)}(\bs)| = 0$.
	
	This completes Step 1. We next study $\bar{\bH}_\lambda^{(1)}$.
	
	\textbf{Step 2:} We have from \eqref{eq:filmHLambdaLimit3}
	\begin{equation}
	    \begin{split}\label{eq:filmH1LambdaEstimate1}
    		\bar{\bH}_\lambda^{(1)}(\bs) &= \sum_{t_2' \in \bbZ} \left[ \sum_{\substack{u_1 \in \lambda\theta_l \bbZ \cap (-\thetab,\thetab),\\
    				(u_1, t_2') \neq (a_1, 0)}} \bK\left(\calR \bQ'(0) \frac{u_1 - a_1}{\lambda} \be_1 + \delta t_2' \be_3 \right) \right] \\
    		&= \underbrace{
    		    \sum_{t_2' \in \bbZ} \left[ 
    		        \sum_{\substack{u_1 \in \lambda\theta_l \bbZ,\\ (u_1, t_2') \neq (a_1, 0)}} \bK\left(\calR \bQ'(0) \frac{u_1 - a_1}{\lambda} \be_1 + \delta t_2' \be_3 \right) 
                \right]
                }_{=:I_1}
    	    \\
    		& \quad - \underbrace{
    		    \sum_{t_2' \in \bbZ} \left[ \sum_{\substack{u_1 \in [\lambda \theta_l \bbZ] - [\lambda\theta_l \bbZ \cap (-\thetab,\thetab)],\\
    				(u_1, t_2') \neq (a_1, 0)}} \bK\left(\calR \bQ'(0) \frac{u_1 - a_1}{\lambda} \be_1 + \delta t_2' \be_3 \right) \right]
    			}_{=:I_2},
    	\end{split}
    \end{equation}
	where we have used the notation $[\lambda \theta_l \bbZ] - [\lambda\theta_l \bbZ \cap (-\thetab,\thetab)]$ to denote the set $\{t \in\lambda \theta_l \bbZ;\, t \notin \lambda\theta_l \bbZ \cap (-\thetab,\thetab) \}$. Using the decay property of the dipole field kernel $\bK$, we can show that $|I_2| \to 0$ in the limit $\lambda \to 0$. Therefore, we have
	\begin{equation}
	    \begin{split}\label{eq:filmH1LambdaEstimate2}
    		\lim_{\lambda \to 0} \bar{\bH}_\lambda^{(1)}(\bs) 
    		&= \lim_{\lambda \to 0} \sum_{t_2' \in \bbZ} \left[ \sum_{\substack{u_1 \in \lambda\theta_l \bbZ,\\
    				(u_1, t_2') \neq (a_1, 0)}} \bK\left(\calR \bQ'(0) \frac{u_1 - a_1}{\lambda} \be_1 + \delta t_2' \be_3 \right) \right] \\
    		&= \sum_{t_2' \in \bbZ} \left[ \sum_{\substack{t_1' \in \theta_l \bbZ,\\
    				(t_1', t_2') \neq (0, 0)}} \bK\left(\calR \bQ'(0) t_1' \be_1 + \delta t_2' \be_3 \right) \right], \\
	\end{split}
    \end{equation}
	where we introduced the new variable $t_1' = (u_1 - a_1)/\lambda$. Since $u_1 \in \lambda \theta_1\bbZ$ and $a_1 \in \lambda\theta_l \bbZ\cap (-\thetab,\thetab)$, we have $t_1' \in \theta_l \bbZ$. This completes Step 2. Note that $\lim_{\lambda \to 0} \bar{\bH}_{\lambda}(\bs)$ is independent of $\bs \in \calS$. 
	
	Upon substituting the limit of $\bar{\bH}_\lambda^{(1)}$ and $\bar{\bH}_\lambda^{(2)}$ in \eqref{eq:filmHLambdaLimit3}, we have shown
	\begin{equation}\label{eq:filmHBarLambdaLimit}
		\lim_{\lambda\to 0} \bar{\bH}_\lambda(\bs) = \lim_{\lambda\to 0} \bar{\bH}_\lambda(\bzero) = \sum_{\substack{\bu = (u_1,u_2) \in \theta_l\bbZ \times \bbZ, \\ \bu \neq \bzero}} \bK\left(\calR \bQ'(0) u_1 \be_1 + \delta u_2 \be_3 \right) .
	\end{equation}
	Recall that $\bs \in \calS$ was fixed such that $\bs \in U_\lambda (\ba)$, which implies that $\ba \to \bs$ as $\lambda\to 0$. With this observation and \eqref{eq:filmHBarLambdaLimit}, we have from \eqref{eq:filmHLambdaLimit11},
	\begin{equation}\label{eq:filmH0Final}
		\bH_0(\bs) = \lim_{\lambda\to 0} \bH_\lambda(\bs) = \calR\bQ(s_1) \left[\sum_{\substack{\bu = (u_1,u_2) \in \theta_l\bbZ \times \bbZ, \\ \bu \neq \bzero}} \bK\left(\calR \bQ'(0) u_1 \be_1 + \delta u_2 \be_3 \right) \right] \bQ(-s_1).
	\end{equation}
	
	Next we simplify $\bH_0(\bs)$. Given the parametric map $\bxbar = \bxbar(\bs)$, the two tangent vectors at $\bs = (s_1, s_2)$ are 
	\begin{equation}\label{eq:filmTangentVectors}
		\bt_1(\bs) = \frac{\dmnosp \bxbar}{\dmnosp s_1} = \calR \bQ'(s_1) \be_1, \qquad \bt_2(\bs) = \frac{\dmnosp \bxbar}{\dmnosp s_2} = \delta \be_3.
	\end{equation}
	Using $\bQ\bK(\bx) \bQ^T = \bK(\bQ\bx)$ and $\bQ(r) \bQ'(0) = \bQ'(r)$, we write,
	\begin{equation*}
		\bH_0(\bs) = \sum_{\substack{\bu = (u_1,u_2) \in \theta_l\bbZ \times \bbZ, \\ \bu \neq \bzero}} \bK\left(u_1 \bt_1(\bs) + u_2 \bt_2(\bs)\right).
	\end{equation*}
	This completes the proof of \sref{Proposition}{prop:filmTLambda}.

 	\section{Summary of Results}\label{s:conclusion}
	
	We have shown rigorously that certain low-dimensional nanostructures do not have long-range dipole-dipole interaction in the continuum limit. 
	The energy density in the limit is entirely because of the Maxwell self-field. 
	In 1-d and 2-d lattices (in a 3-d ambient space), the dipole field kernel decay is sufficiently fast that long-range interactions do not contribute to the limit energy.

	While our calculations show that the energy is local in the continuum limit for 1-d and 2-d discrete systems, in agreement with dimension reduction approaches that reduce a 3-d continuum to a 1-d or 2-d continuum (e.g., \cite{gioia1997micromagnetics, chacouche2015ferromagnetic} and others), we note an interesting difference.
    	As shown in \cite{gioia1997micromagnetics} and other work following it, the component of the dipole moment along the normal direction to the film is the only contributor to the continuum electrostatic energy in the thin film limit.
    	Similarly, \cite{chacouche2015ferromagnetic} show that the component of the dipole moment in the plane normal to the wire is the only contributor to the continuum electrostatic energy in the thin wire limit.
	
	This is different from the limit energy in the discrete-to-continuum limit obtained in this work: for the case of a helical nanotube, the limiting energy density is given by (see \autoref{thm:helixELambda}) 
	$$
    	h_0 \int_{\bbR} \left[ |\bP_{\perp}\bff|^2 - 2 |\bP_{||}\bff|^2 \right] \dm s,
	$$
	where $h_0$ is a constant, and $\bP_{||} \bff$ and $\bP_{\perp} \bff$ are, respectively, the projections of the dipole moment field $\bff$ along the axis of the helix and in the plane normal to the axis of the helix.
	Therefore, unlike the thin wire limit using dimension reduction, the discrete-to-continuum energy has contributions from both the normal and tangential components of the dipole moment field. 
	For the case of a thin film with curvature, the limiting energy density is given by (see \autoref{thm:filmELambda})
	\begin{equation*}
	    -\frac{1}{2} \int_{\calS} \bff(\bs) \cdot \bH_0(\bs) \bff(\bs) \dm \bs,
	\end{equation*}
	where 
	\begin{equation*}
	    \bH_0(\bs) =  \calR \sum_{\substack{\bu = (u_1,u_2) \in \theta_l\bbZ \times \bbZ, \\ \bu \neq \bzero}}  \bK\left(u_1 \bt_1(\bs) + u_2 \bt_2(\bs) \right).
	\end{equation*}
	Here, $\calS$ is the parametric domain of the film, $\calR$ is the inverse of the curvature, $\theta_l$ is the angular width of the unit cell, and $\bt_i(\bs)$, $i=1,2$, are the tangent vectors at coordinate $\bs\in \calS$. 
	For simplicity, we fix $\bs\in \calS$ and assume $\bt_1 = \be_1$ and $\bt_2 = \be_2$; then the lattice sum above is over a 2-d lattice in $(\be_1, \be_2)$ plane. 
	By substituting the form of $\bK$ and computing $\bH_0(\bs) \bff(\bs)$, we can show that both the normal and the tangential components of $\bff$ are present in the final expression for the energy above.

 	    We can understand these differences physically, by first noticing that the dimension reduction starting from the 3-d continuum contains minimal information about the detailed geometry of the underlying lattice within the nanostructure; these approaches have 3-d continuum theory as their starting point, and are valid for situations in which the limiting thin object has all dimensions much larger than the atomic lengthscale.
        In contrast, the discrete-to-continuum approach presented here is appropriate for nanostructures in which the thin dimensions are comparable to the atomic lengthscale.
        For this reason, the thin-film model obtained in this work may capture better the electromechanics of lipid bilayers, as these are composed of only 1-2 unit cells in the thickness direction \cite{liu2013flexoelectricity,ahmadpoor2013apparent,torbati2022coupling,steigmann2018mechanics}.

    \section*{Acknowledgments}
    This paper draws from the doctoral dissertation of Prashant K. Jha at Carnegie Mellon University \cite{jha2016coarse}.
    We thank Richard D. James for useful discussions; AFRL for hosting a visit by Kaushik Dayal; and NSF (2108784, 1921857), ARO (W911NF-17-1-0084), ONR (N00014-18-1-2528), and AFOSR (MURI FA9550-18-1-0095) for financial support.
    
\newcommand{\etalchar}[1]{$^{#1}$}

\end{document}